\newif\ifdraft\draftfalse  
\newif\iffull\fullfalse   
\newif\ifbackref\backreffalse 
\newif\ifsooner\soonerfalse
\newif\iflater\laterfalse
\newif\iftr\trtrue       
\makeatletter \@input{texdirectives} \makeatother

\iftr
\documentclass[11pt,twocolumn,titlepage]{article}

\makeatletter

\makeatother

\else
\documentclass[11pt,twocolumn]{article}
\fi

\setlength{\columnsep}{1cm}

\newcommand{\EG}{e.g.~}
\newcommand{\IE}{i.e.~}
\newcommand{\ETAL}{et al.~}
\newcommand{\ETC}{etc.~}

\usepackage{supertabular}
\usepackage{ifthen}
\usepackage{alltt}
\usepackage{color}

\usepackage{cite}
\usepackage{fontenc}[T1]
\usepackage{xcolor}
\definecolor{darkblue}{rgb}{0.0, 0.0, 0.55}

\usepackage[american]{babel}
\usepackage[T1]{fontenc}
\usepackage[utf8]{inputenc}
\newcommand{\border}{1.5cm}
\usepackage[left=\border,right=\border,top=\border,bottom=\border,verbose=false]{geometry}

\usepackage{amsmath}
\usepackage{amssymb}
\usepackage{amsthm}
\usepackage{mathtools}
\usepackage{mathrsfs}

\usepackage{xcolor}
\usepackage[normalem]{ulem}

\usepackage[
    pdftex,%
    pdfpagelabels,%
    colorlinks,%
    linkcolor=darkblue,%
    citecolor=darkblue,%
    filecolor=darkblue,%
    urlcolor=darkblue%
]{hyperref}

\def\Snospace~{\S{}}
\addto\extrasamerican{
}

\newtheorem{lem}{Lemma}

\newtheorem{thm}{Theorem}

\newcommand{\ottdrule}[4][]{{\displaystyle\frac{\begin{array}{l}#2\end{array}}{#3}\quad\ottdrulename{#4}}}
\newcommand{\ottusedrule}[1]{\[#1\]}
\newcommand{\ottpremise}[1]{ #1 \\}
\newenvironment{ottdefnblock}[3][]{ \framebox{\mbox{#2}} \quad #3 \\[0pt]}{}

\newcommand{\ottnt}[1]{\mathit{#1}}
\newcommand{\ottmv}[1]{\mathit{#1}}
\newcommand{\ottkw}[1]{\mathbf{#1}}
\newcommand{\ottsym}[1]{#1}
\newcommand{\ottcom}[1]{\text{#1}}
\newcommand{\ottdrulename}[1]{\textsc{#1}}

\newcommand{\ottgrammartabular}[1]{\begin{supertabular}{llcllllll}#1\end{supertabular}}

\newcommand{\ottrulehead}[3]{$#1$ & & $#2$ & & & \multicolumn{2}{l}{#3}}
\newcommand{\ottprodline}[6]{& & $#1$ & $#2$ & $#3 #4$ & $#5$ & $#6$}
\newcommand{\ottfirstprodline}[6]{\ottprodline{#1}{#2}{#3}{#4}{#5}{#6}}

\newcommand{\ottprodnewline}{\\}
\newcommand{\ottinterrule}{\\[5.0mm]}
\newcommand{\ottafterlastrule}{\\}

\newcommand{\ottIDT}{
\ottrulehead{\ottnt{IDT}}{::=}{\ottcom{import declaration tables}}\ottprodnewline
\ottfirstprodline{|}{\ottnt{DT}}{}{}{}{}}

\newcommand{\ottEDT}{
\ottrulehead{\ottnt{EDT}}{::=}{\ottcom{export declaration tables}}\ottprodnewline
\ottfirstprodline{|}{\ottnt{DT}}{}{}{}{}}

\newcommand{\ottDT}{
\ottrulehead{\ottnt{DT}}{::=}{\ottcom{declaration tables}}\ottprodnewline
\ottfirstprodline{|}{\ottsym{(}  \ottnt{CDT}  \ottsym{,}  \ottnt{ODT}  \ottsym{)}}{}{}{}{}}

\newcommand{\ottCDT}{
\ottrulehead{\ottnt{CDT}}{::=}{\ottcom{class declaration table}}\ottprodnewline
\ottfirstprodline{|}{\text{\texttt{[]} }}{}{}{}{}\ottprodnewline
\ottprodline{|}{\ottsym{\{}  \mathit{c}  \mapsto  \ottnt{CD}  \ottsym{\}}  \;\textsf{::}\;  \ottnt{CDT}}{}{}{}{}}

\newcommand{\ottODT}{
\ottrulehead{\ottnt{ODT}}{::=}{\ottcom{object declaration table}}\ottprodnewline
\ottfirstprodline{|}{\text{\texttt{[]} }}{}{}{}{}\ottprodnewline
\ottprodline{|}{\ottsym{\{}  \mathit{o}  \mapsto  \ottnt{OD}  \ottsym{\}}  \;\textsf{::}\;  \ottnt{ODT}}{}{}{}{}}

\newcommand{\ottCD}{
\ottrulehead{\ottnt{CD}}{::=}{\ottcom{class declaration}}\ottprodnewline
\ottfirstprodline{|}{\ottkw{class} \, \ottkw{decl} \, \ottsym{\{}  \ottnt{MD_{{\mathrm{1}}}}  \ottsym{,} \, ... \, \ottsym{,}  \ottnt{MD_{\ottmv{k}}}  \ottsym{\}}}{}{}{}{}}

\newcommand{\ottMD}{
\ottrulehead{\ottnt{MD}}{::=}{\ottcom{method declaration}}\ottprodnewline
\ottfirstprodline{|}{{c_{\textsf{r} } }  \ottsym{(}  {c_{\textsf{a} } }  \ottsym{)}}{}{}{}{}}

\newcommand{\ottOD}{
\ottrulehead{\ottnt{OD}}{::=}{\ottcom{object declaration}}\ottprodnewline
\ottfirstprodline{|}{\ottkw{obj} \, \ottkw{decl} \, \mathit{c}}{}{}{}{}}

\newcommand{\ottSP}{
\ottrulehead{P  ,\ Q}{::=}{\ottcom{source program}}\ottprodnewline
\ottfirstprodline{|}{\ottsym{(}  \ottnt{IDT}  \ottsym{,}  \ottnt{T}  \ottsym{,}  \ottnt{EDT}  \ottsym{)}}{}{}{}{}}

\newcommand{\ottT}{
\ottrulehead{\ottnt{T}}{::=}{\ottcom{definition tables}}\ottprodnewline
\ottfirstprodline{|}{\ottsym{(}  \ottnt{CT}  \ottsym{,}  \ottnt{OT}  \ottsym{)}}{}{}{}{}}

\newcommand{\ottCT}{
\ottrulehead{\ottnt{CT}}{::=}{\ottcom{class definition table}}\ottprodnewline
\ottfirstprodline{|}{\text{\texttt{[]} }}{}{}{}{}\ottprodnewline
\ottprodline{|}{\ottsym{\{}  \mathit{c}  \mapsto  \ottnt{C}  \ottsym{\}}  \;\textsf{::}\;  \ottnt{CT}}{}{}{}{}}

\newcommand{\ottOT}{
\ottrulehead{\ottnt{OT}}{::=}{\ottcom{object definition table}}\ottprodnewline
\ottfirstprodline{|}{\text{\texttt{[]} }}{}{}{}{}\ottprodnewline
\ottprodline{|}{\ottsym{\{}  \mathit{o}  \mapsto  \ottnt{O}  \ottsym{\}}  \;\textsf{::}\;  \ottnt{OT}}{}{}{}{}}

\newcommand{\ottC}{
\ottrulehead{\ottnt{C}}{::=}{\ottcom{class definition}}\ottprodnewline
\ottfirstprodline{|}{\ottkw{class} \, \ottsym{\{}  \mathit{c_{{\mathrm{1}}}}  \ottsym{,} \, ... \, \ottsym{,}  \mathit{c_{\ottmv{k}}}  \ottsym{;}  \ottnt{M_{{\mathrm{1}}}}  \ottsym{,} \, ... \, \ottsym{,}  \ottnt{M_{\ottmv{l}}}  \ottsym{\}}}{}{}{}{}}

\newcommand{\ottM}{
\ottrulehead{\ottnt{M}}{::=}{\ottcom{method definition}}\ottprodnewline
\ottfirstprodline{|}{{c_{\textsf{r} } }  \ottsym{(}  {c_{\textsf{a} } }  \ottsym{)}  \ottsym{\{}  \ottnt{e}  \ottsym{\}}}{}{}{}{}}

\newcommand{\otte}{
\ottrulehead{\ottnt{e}}{::=}{\ottcom{expressions}}\ottprodnewline
\ottfirstprodline{|}{\ottkw{this}}{}{}{}{}\ottprodnewline
\ottprodline{|}{\ottkw{arg}}{}{}{}{}\ottprodnewline
\ottprodline{|}{\mathit{o}}{}{}{}{}\ottprodnewline
\ottprodline{|}{\ottnt{e}  \ottsym{.}  \ottmv{f}}{}{}{}{}\ottprodnewline
\ottprodline{|}{\ottnt{e}  \ottsym{.}  \ottmv{f}  \ottsym{:=}  \ottnt{e'}}{}{}{}{}\ottprodnewline
\ottprodline{|}{\ottnt{e}  \ottsym{.}  \ottmv{m}  \ottsym{(}  \ottnt{e'}  \ottsym{)}}{}{}{}{}\ottprodnewline
\ottprodline{|}{\ottnt{e}  \ottsym{==}  \ottnt{e'}  \;?\;  \ottnt{e''}  \ottsym{:}  \ottnt{e'''}}{}{}{}{}\ottprodnewline
\ottprodline{|}{\ottkw{exit} \, \ottnt{e}}{}{}{}{}\ottprodnewline
\ottprodline{|}{\ottnt{e}  \ottsym{;}  \ottnt{e'}}{}{}{}{}}

\newcommand{\ottO}{
\ottrulehead{\ottnt{O}}{::=}{}\ottprodnewline
\ottfirstprodline{|}{\ottkw{obj} \, \mathit{c}  \ottsym{\{}  \mathit{o_{{\mathrm{1}}}}  \ottsym{,} \, ... \, \ottsym{,}  \mathit{o_{\ottmv{k}}}  \ottsym{\}}}{}{}{}{}}

\newcommand{\ottE}{
\ottrulehead{\ottnt{E}}{::=}{\ottcom{flat evaluation contexts}}\ottprodnewline
\ottfirstprodline{|}{\square  \ottsym{.}  \ottmv{f}}{}{}{}{}\ottprodnewline
\ottprodline{|}{\square  \ottsym{.}  \ottmv{f}  \ottsym{:=}  \ottnt{e'}}{}{}{}{}\ottprodnewline
\ottprodline{|}{\mathit{o}  \ottsym{.}  \ottmv{f}  \ottsym{:=}  \square}{}{}{}{}\ottprodnewline
\ottprodline{|}{\square  \ottsym{.}  \ottmv{m}  \ottsym{(}  \ottnt{e'}  \ottsym{)}}{}{}{}{}\ottprodnewline
\ottprodline{|}{\mathit{o}  \ottsym{.}  \ottmv{m}  \ottsym{(}  \square  \ottsym{)}}{}{}{}{}\ottprodnewline
\ottprodline{|}{\square  \ottsym{==}  \ottnt{e'}  \;?\;  \ottnt{e''}  \ottsym{:}  \ottnt{e'''}}{}{}{}{}\ottprodnewline
\ottprodline{|}{\mathit{o}  \ottsym{==}  \square  \;?\;  \ottnt{e''}  \ottsym{:}  \ottnt{e'''}}{}{}{}{}\ottprodnewline
\ottprodline{|}{\square  \ottsym{;}  \ottnt{e}}{}{}{}{}\ottprodnewline
\ottprodline{|}{\ottkw{exit} \, \square}{}{}{}{}}

\newcommand{\ottK}{
\ottrulehead{\ottnt{K}}{::=}{\ottcom{continuations}}\ottprodnewline
\ottfirstprodline{|}{\text{\texttt{[]} }}{}{}{}{}\ottprodnewline
\ottprodline{|}{\ottnt{E}  \;\textsf{::}\;  \ottnt{K}}{}{}{}{}}

\newcommand{\ottCS}{
\ottrulehead{\ottnt{CS}}{::=}{\ottcom{call stack}}\ottprodnewline
\ottfirstprodline{|}{\text{\texttt{[]} }}{}{}{}{}\ottprodnewline
\ottprodline{|}{\ottsym{(}  {o_{\textsf{t} } }  \ottsym{,}  {o_{\textsf{a} } }  \ottsym{,}  \ottnt{K}  \ottsym{)}  \;\textsf{::}\;  \ottnt{CS}}{}{}{}{}}

\newcommand{\ottCfg}{
\ottrulehead{\ottnt{Cfg}}{::=}{\ottcom{reduction configurations}}\ottprodnewline
\ottfirstprodline{|}{\ottsym{(}  \ottnt{OT}  \ottsym{,}  \ottnt{CS}  \ottsym{,}  {o_{\textsf{t} } }  \ottsym{,}  {o_{\textsf{a} } }  \ottsym{,}  \ottnt{K}  \ottsym{,}  \ottnt{e}  \ottsym{)}}{}{}{}{}}

\newcommand{\ottIP}{
\ottrulehead{\ottnt{IP}  ,\ \ottnt{IQ}}{::=}{\ottcom{intermediate program}}\ottprodnewline
\ottfirstprodline{|}{\ottsym{(}  \ottnt{IDT}  \ottsym{,}  \ottnt{ICT}  \ottsym{,}  \ottnt{EDT}  \ottsym{)}}{}{}{}{}}

\newcommand{\ottICT}{
\ottrulehead{\ottnt{ICT}}{::=}{\ottcom{compartment table}}\ottprodnewline
\ottfirstprodline{|}{\text{\texttt{[]} }}{}{}{}{}\ottprodnewline
\ottprodline{|}{\ottsym{\{}  \mathit{c}  \mapsto  \ottnt{IC}  \ottsym{\}}  \;\textsf{::}\;  \ottnt{ICT}}{}{}{}{}}

\newcommand{\ottIC}{
\ottrulehead{\ottnt{IC}}{::=}{\ottcom{compartment declaration}}\ottprodnewline
\ottfirstprodline{|}{\ottsym{\{}  \ottnt{IM_{{\mathrm{1}}}}  \ottsym{,} \, ... \, \ottsym{,}  \ottnt{IM_{\ottmv{k}}}  \ottsym{;}  \ottnt{LOT}  \ottsym{;}  \ottnt{LS}  \ottsym{\}}}{}{}{}{}}

\newcommand{\ottIM}{
\ottrulehead{\ottnt{IM}}{::=}{\ottcom{method body}}\ottprodnewline
\ottfirstprodline{|}{\ottnt{Icode}}{}{}{}{}}

\newcommand{\ottLOT}{
\ottrulehead{\ottnt{LOT}}{::=}{\ottcom{compartment local object table}}\ottprodnewline
\ottfirstprodline{|}{\text{\texttt{[]} }}{}{}{}{}\ottprodnewline
\ottprodline{|}{\ottsym{\{}  \mathit{o}  \mapsto  \ottnt{LO}  \ottsym{\}}  \;\textsf{::}\;  \ottnt{LOT}}{}{}{}{}}

\newcommand{\ottLO}{
\ottrulehead{\ottnt{LO}}{::=}{\ottcom{local instance definition}}\ottprodnewline
\ottfirstprodline{|}{\ottsym{(}  \mathit{o_{{\mathrm{1}}}}  \ottsym{,} \, ... \, \ottsym{,}  \mathit{o_{\ottmv{l}}}  \ottsym{)}}{}{}{}{}}

\newcommand{\ottLS}{
\ottrulehead{\ottnt{LS}}{::=}{\ottcom{local stack}}\ottprodnewline
\ottfirstprodline{|}{\text{\texttt{[]} }}{}{}{}{}\ottprodnewline
\ottprodline{|}{\mathit{o}  \;\textsf{::}\;  \ottnt{LS}}{}{}{}{}}

\newcommand{\ottIcode}{
\ottrulehead{\ottnt{Icode}}{::=}{\ottcom{intermediate machine code}}\ottprodnewline
\ottfirstprodline{|}{\text{\texttt{[]} }}{}{}{}{}\ottprodnewline
\ottprodline{|}{\ottnt{Iinstr}  \ottsym{;}  \ottnt{Icode}}{}{}{}{}}

\newcommand{\ottIinstr}{
\ottrulehead{\ottnt{Iinstr}}{::=}{\ottcom{machine instruction}}\ottprodnewline
\ottfirstprodline{|}{\ottkw{Nop}}{}{}{}{}\ottprodnewline
\ottprodline{|}{\ottkw{This}}{}{}{}{}\ottprodnewline
\ottprodline{|}{\ottkw{Arg}}{}{}{}{}\ottprodnewline
\ottprodline{|}{\ottkw{Ref} \, \mathit{o}}{}{}{}{}\ottprodnewline
\ottprodline{|}{\ottkw{Sel} \, \ottmv{f}}{}{}{}{}\ottprodnewline
\ottprodline{|}{\ottkw{Upd} \, \ottmv{f}}{}{}{}{}\ottprodnewline
\ottprodline{|}{\ottkw{Call} \, \mathit{c} \, \ottmv{m}}{}{}{}{}\ottprodnewline
\ottprodline{|}{\ottkw{Ret}}{}{}{}{}\ottprodnewline
\ottprodline{|}{\ottkw{Skip} \, \mathit{n}}{}{}{}{}\ottprodnewline
\ottprodline{|}{\ottkw{Skeq} \, \mathit{n}}{}{}{}{}\ottprodnewline
\ottprodline{|}{\ottkw{Drop}}{}{}{}{}\ottprodnewline
\ottprodline{|}{\ottkw{Halt}}{}{}{}{}}

\newcommand{\ottLP}{
\ottrulehead{\ottnt{LP}  ,\ \ottnt{LQ}}{::=}{\ottcom{low-level program}}\ottprodnewline
\ottfirstprodline{|}{\ottsym{(}  \ottnt{IDT}  \ottsym{,}  \ottnt{LPmem}  \ottsym{,}  \ottnt{EDT}  \ottsym{)}}{}{}{}{}}

\newcommand{\ottinstr}{
\ottrulehead{\ottnt{instr}}{::=}{\ottcom{machine instruction}}\ottprodnewline
\ottfirstprodline{|}{\ottkw{Nop}}{}{}{}{\ottcom{do nothing}}\ottprodnewline
\ottprodline{|}{\ottkw{Const} \, i \, {r_{\textsf{d} } }}{}{}{}{\ottcom{immediate value}}\ottprodnewline
\ottprodline{|}{\ottkw{Mov} \, {r_{\textsf{s} } } \, {r_{\textsf{d} } }}{}{}{}{\ottcom{copy content}}\ottprodnewline
\ottprodline{|}{ { \ottkw{Binop} _{ \ottnt{op} } }  \mathit{r_{{\mathrm{1}}}}   \mathit{r_{{\mathrm{2}}}}   {r_{\textsf{d} } } }{}{}{}{\ottcom{binary operation}}\ottprodnewline
\ottprodline{|}{\ottkw{Load} \, {r_{\textsf{p} } } \, {r_{\textsf{d} } }}{}{}{}{\ottcom{load from pointer}}\ottprodnewline
\ottprodline{|}{\ottkw{Store} \, {r_{\textsf{p} } } \, {r_{\textsf{s} } }}{}{}{}{}\ottprodnewline
\ottprodline{|}{\ottkw{Jump} \, \mathit{r}}{}{}{}{}\ottprodnewline
\ottprodline{|}{\ottkw{Jal} \, \mathit{r}}{}{}{}{}\ottprodnewline
\ottprodline{|}{\ottkw{Bnz} \, \mathit{r} \, i}{}{}{}{}\ottprodnewline
\ottprodline{|}{\ottkw{Halt}}{}{}{}{}}

\newcommand{\ottimm}{
\ottrulehead{i}{::=}{\ottcom{immediate value}}\ottprodnewline
\ottfirstprodline{|}{\mathit{n}}{}{}{}{}\ottprodnewline
\ottprodline{|}{\ottnt{loc}  \ottsym{+}  \mathit{n}}{}{}{}{}}

\newcommand{\ottloc}{
\ottrulehead{\ottnt{loc}}{::=}{\ottcom{region symbolic location}}\ottprodnewline
\ottfirstprodline{|}{\ottkw{objl} \, \mathit{o}}{}{}{}{}\ottprodnewline
\ottprodline{|}{\ottkw{methl} \, \mathit{c} \, \ottmv{m}}{}{}{}{}\ottprodnewline
\ottprodline{|}{\ottkw{stackl} \, \mathit{c}}{}{}{}{}}

\newcommand{\ottmem}{
\ottrulehead{\ottnt{mem}}{::=}{\ottcom{memory}}\ottprodnewline
\ottfirstprodline{|}{\text{\texttt{[]} }}{}{}{}{}\ottprodnewline
\ottprodline{|}{\ottsym{\{}  \ottnt{loc}  \mapsto  \ottnt{R}  \ottsym{\}}  \;\textsf{::}\;  \ottnt{mem}}{}{}{}{}}

\newcommand{\ottR}{
\ottrulehead{\ottnt{R}}{::=}{\ottcom{tagged memory region}}\ottprodnewline
\ottfirstprodline{|}{\text{\texttt{[]} }}{}{}{}{}\ottprodnewline
\ottprodline{|}{ (  \ottnt{word}   \;\textsf{@}\;   {t_{\textsf{mem} } }  )  \;\textsf{::}\;   \ottnt{R} }{}{}{}{}}

\newcommand{\ottLPmem}{
\ottrulehead{\ottnt{LPmem}}{::=}{\ottcom{program memory}}\ottprodnewline
\ottfirstprodline{|}{\text{\texttt{[]} }}{}{}{}{}\ottprodnewline
\ottprodline{|}{\ottsym{\{}  \ottnt{loc}  \mapsto  \ottnt{LPR}  \ottsym{\}}  \;\textsf{::}\;  \ottnt{LPmem}}{}{}{}{}}

\newcommand{\ottLPR}{
\ottrulehead{\ottnt{LPR}}{::=}{\ottcom{program memory region}}\ottprodnewline
\ottfirstprodline{|}{\text{\texttt{[]} }}{}{}{}{}\ottprodnewline
\ottprodline{|}{\ottnt{word}  \;\textsf{::}\;  \ottnt{LPR}}{}{}{}{}}

\newcommand{\ottword}{
\ottrulehead{\ottnt{word}}{::=}{\ottcom{symbolic machine word}}\ottprodnewline
\ottfirstprodline{|}{\mathit{n}}{}{}{}{}\ottprodnewline
\ottprodline{|}{\ottnt{loc}  \ottsym{+}  \mathit{n}}{}{}{}{}\ottprodnewline
\ottprodline{|}{\ottkw{encode} \, \ottnt{instr}}{}{}{}{}}

\newcommand{\otttpc}{
\ottrulehead{{t_{\textsf{pc} } }}{::=}{\ottcom{program counter tag}}\ottprodnewline
\ottfirstprodline{|}{\mathit{n}}{}{}{}{\ottcom{current call depth}}}

\newcommand{\otttmem}{
\ottrulehead{{t_{\textsf{mem} } }}{::=}{\ottcom{memory tag}}\ottprodnewline
\ottfirstprodline{|}{\ottsym{(}  \ottnt{bt}  \ottsym{,}  \mathit{c}  \ottsym{,}  \ottnt{et}  \ottsym{,}  \ottnt{vt}  \ottsym{)}}{}{}{}{}}

\newcommand{\ottet}{
\ottrulehead{\ottnt{et}}{::=}{\ottcom{entry point tag}}\ottprodnewline
\ottfirstprodline{|}{\textbf{EP} \, {c_{\textsf{a} } }  \rightarrow  {c_{\textsf{r} } }}{}{}{}{}\ottprodnewline
\ottprodline{|}{\textbf{NEP}}{}{}{}{}}

\newcommand{\ottbt}{
\ottrulehead{\ottnt{bt}}{::=}{\ottcom{blessed tag}}\ottprodnewline
\ottfirstprodline{|}{\textbf{B} \, \mathit{c}}{}{}{}{}\ottprodnewline
\ottprodline{|}{\textbf{NB}}{}{}{}{}}

\newcommand{\otttreg}{
\ottrulehead{{t_{\textsf{reg} } }}{::=}{\ottcom{register tag}}\ottprodnewline
\ottfirstprodline{|}{\ottnt{vt}}{}{}{}{}}

\newcommand{\ottvt}{
\ottrulehead{\ottnt{vt}}{::=}{\ottcom{regular or cleared value tag}}\ottprodnewline
\ottfirstprodline{|}{\ottnt{rt}}{}{}{}{}\ottprodnewline
\ottprodline{|}{\bot}{}{}{}{}}

\newcommand{\ottrt}{
\ottrulehead{\ottnt{rt}}{::=}{\ottcom{regular value tag}}\ottprodnewline
\ottfirstprodline{|}{\ottkw{Ret} \, \mathit{n} \, {c_{\textsf{r} } }}{}{}{}{\ottcom{return capability}}\ottprodnewline
\ottprodline{|}{\textbf{O} \, \mathit{c}}{}{}{}{\ottcom{object pointer tagged with its type}}\ottprodnewline
\ottprodline{|}{\textbf{W}}{}{}{}{}}


\newcommand{\ottdrulethis}[1]{\ottdrule[#1]{%
}{
\ottnt{CT}  \vdash  \ottsym{(}  \ottnt{OT}  \ottsym{,}  \ottnt{CS}  \ottsym{,}  {o_{\textsf{t} } }  \ottsym{,}  {o_{\textsf{a} } }  \ottsym{,}  \ottnt{K}  \ottsym{,}  \ottkw{this}  \ottsym{)}  \longrightarrow  \ottsym{(}  \ottnt{OT}  \ottsym{,}  \ottnt{CS}  \ottsym{,}  {o_{\textsf{t} } }  \ottsym{,}  {o_{\textsf{a} } }  \ottsym{,}  \ottnt{K}  \ottsym{,}  {o_{\textsf{t} } }  \ottsym{)}}{%
{\ottdrulename{this}}{}%
}}

\newcommand{\ottdrulearg}[1]{\ottdrule[#1]{%
}{
\ottnt{CT}  \vdash  \ottsym{(}  \ottnt{OT}  \ottsym{,}  \ottnt{CS}  \ottsym{,}  {o_{\textsf{t} } }  \ottsym{,}  {o_{\textsf{a} } }  \ottsym{,}  \ottnt{K}  \ottsym{,}  \ottkw{arg}  \ottsym{)}  \longrightarrow  \ottsym{(}  \ottnt{OT}  \ottsym{,}  \ottnt{CS}  \ottsym{,}  {o_{\textsf{t} } }  \ottsym{,}  {o_{\textsf{a} } }  \ottsym{,}  \ottnt{K}  \ottsym{,}  {o_{\textsf{a} } }  \ottsym{)}}{%
{\ottdrulename{arg}}{}%
}}

\newcommand{\ottdruleselXXpush}[1]{\ottdrule[#1]{%
}{
\ottnt{CT}  \vdash  \ottsym{(}  \ottnt{OT}  \ottsym{,}  \ottnt{CS}  \ottsym{,}  {o_{\textsf{t} } }  \ottsym{,}  {o_{\textsf{a} } }  \ottsym{,}  \ottnt{K}  \ottsym{,}  \ottnt{e}  \ottsym{.}  \ottmv{f}  \ottsym{)}  \longrightarrow  \ottsym{(}  \ottnt{OT}  \ottsym{,}  \ottnt{CS}  \ottsym{,}  {o_{\textsf{t} } }  \ottsym{,}  {o_{\textsf{a} } }  \ottsym{,}  \square  \ottsym{.}  \ottmv{f}  \;\textsf{::}\;  \ottnt{K}  \ottsym{,}  \ottnt{e}  \ottsym{)}}{%
{\ottdrulename{sel\_push}}{}%
}}

\newcommand{\ottdruleselXXpop}[1]{\ottdrule[#1]{%
\ottpremise{\ottnt{OT}  \ottsym{(}  {o_{\textsf{t} } }  \ottsym{)}  \ottsym{=}  \ottkw{obj} \, {c_{\textsf{t} } }  \ottsym{\{}  {o_{\textsf{t} } }_{{\mathrm{1}}}  \ottsym{,} \, ... \, \ottsym{,}  {o_{\textsf{t} } }_{\ottmv{k}}  \ottsym{\}}}%
\ottpremise{\ottnt{OT}  \ottsym{(}  \mathit{o}  \ottsym{)}  \ottsym{=}  \ottkw{obj} \, {c_{\textsf{t} } }  \ottsym{\{}  \mathit{o_{{\mathrm{1}}}}  \ottsym{,} \, ... \, \ottsym{,}  \mathit{o_{\ottmv{l}}}  \ottsym{\}}}%
\ottpremise{\ottsym{1}  \leq  \ottmv{f}  \leq  \ottmv{l}}%
}{
\ottnt{CT}  \vdash  \ottsym{(}  \ottnt{OT}  \ottsym{,}  \ottnt{CS}  \ottsym{,}  {o_{\textsf{t} } }  \ottsym{,}  {o_{\textsf{a} } }  \ottsym{,}  \square  \ottsym{.}  \ottmv{f}  \;\textsf{::}\;  \ottnt{K}  \ottsym{,}  \mathit{o}  \ottsym{)}  \longrightarrow  \ottsym{(}  \ottnt{OT}  \ottsym{,}  \ottnt{CS}  \ottsym{,}  {o_{\textsf{t} } }  \ottsym{,}  {o_{\textsf{a} } }  \ottsym{,}  \ottnt{K}  \ottsym{,}  \mathit{o_{\ottmv{f}}}  \ottsym{)}}{%
{\ottdrulename{sel\_pop}}{}%
}}

\newcommand{\ottdruleupdXXpush}[1]{\ottdrule[#1]{%
}{
\ottnt{CT}  \vdash  \ottsym{(}  \ottnt{OT}  \ottsym{,}  \ottnt{CS}  \ottsym{,}  {o_{\textsf{t} } }  \ottsym{,}  {o_{\textsf{a} } }  \ottsym{,}  \ottnt{K}  \ottsym{,}  \ottnt{e}  \ottsym{.}  \ottmv{f}  \ottsym{:=}  \ottnt{e'}  \ottsym{)}  \longrightarrow  \ottsym{(}  \ottnt{OT}  \ottsym{,}  \ottnt{CS}  \ottsym{,}  {o_{\textsf{t} } }  \ottsym{,}  {o_{\textsf{a} } }  \ottsym{,}  \ottsym{(}  \square  \ottsym{.}  \ottmv{f}  \ottsym{:=}  \ottnt{e'}  \ottsym{)}  \;\textsf{::}\;  \ottnt{K}  \ottsym{,}  \ottnt{e}  \ottsym{)}}{%
{\ottdrulename{upd\_push}}{}%
}}

\newcommand{\ottdruleupdXXswitch}[1]{\ottdrule[#1]{%
}{
\ottnt{CT}  \vdash  \ottsym{(}  \ottnt{OT}  \ottsym{,}  \ottnt{CS}  \ottsym{,}  {o_{\textsf{t} } }  \ottsym{,}  {o_{\textsf{a} } }  \ottsym{,}  \ottsym{(}  \square  \ottsym{.}  \ottmv{f}  \ottsym{:=}  \ottnt{e}  \ottsym{)}  \;\textsf{::}\;  \ottnt{K}  \ottsym{,}  \mathit{o}  \ottsym{)}  \longrightarrow  \ottsym{(}  \ottnt{OT}  \ottsym{,}  \ottnt{CS}  \ottsym{,}  {o_{\textsf{t} } }  \ottsym{,}  {o_{\textsf{a} } }  \ottsym{,}  \ottsym{(}  \mathit{o}  \ottsym{.}  \ottmv{f}  \ottsym{:=}  \square  \ottsym{)}  \;\textsf{::}\;  \ottnt{K}  \ottsym{,}  \ottnt{e}  \ottsym{)}}{%
{\ottdrulename{upd\_switch}}{}%
}}

\newcommand{\ottdruleupdXXpop}[1]{\ottdrule[#1]{%
\ottpremise{\ottnt{OT}  \ottsym{(}  {o_{\textsf{t} } }  \ottsym{)}  \ottsym{=}  \ottkw{obj} \, {c_{\textsf{t} } }  \ottsym{\{}  {o_{\textsf{t} } }_{{\mathrm{1}}}  \ottsym{,} \, ... \, \ottsym{,}  {o_{\textsf{t} } }_{\ottmv{k}}  \ottsym{\}}}%
\ottpremise{\ottnt{OT}  \ottsym{(}  \mathit{o}  \ottsym{)}  \ottsym{=}  \ottkw{obj} \, {c_{\textsf{t} } }  \ottsym{\{}  \mathit{o_{{\mathrm{1}}}}  \ottsym{,} \, ... \, \ottsym{,}  \mathit{o_{{\ottmv{f}-1}}}  \ottsym{,}  \mathit{o_{\ottmv{f}}}  \ottsym{,}  \mathit{o'_{\ottmv{j}}}  \ottsym{,} \, ... \, \ottsym{,}  \mathit{o'_{\ottmv{l}}}  \ottsym{\}}}%
\ottpremise{\ottnt{OT'}  \ottsym{=}  \ottnt{OT}  \ottsym{[}  \mathit{o}  \mapsto  \ottkw{obj} \, \mathit{c}  \ottsym{\{}  \mathit{o_{{\mathrm{1}}}}  \ottsym{,} \, ... \, \ottsym{,}  \mathit{o_{{\ottmv{f}-1}}}  \ottsym{,}  \mathit{o''}  \ottsym{,}  \mathit{o'_{\ottmv{j}}}  \ottsym{,} \, ... \, \ottsym{,}  \mathit{o'_{\ottmv{l}}}  \ottsym{\}}  \ottsym{]}}%
}{
\ottnt{CT}  \vdash  \ottsym{(}  \ottnt{OT}  \ottsym{,}  \ottnt{CS}  \ottsym{,}  {o_{\textsf{t} } }  \ottsym{,}  {o_{\textsf{a} } }  \ottsym{,}  \ottsym{(}  \mathit{o}  \ottsym{.}  \ottmv{f}  \ottsym{:=}  \square  \ottsym{)}  \;\textsf{::}\;  \ottnt{K}  \ottsym{,}  \mathit{o''}  \ottsym{)}  \longrightarrow  \ottsym{(}  \ottnt{OT'}  \ottsym{,}  \ottnt{CS}  \ottsym{,}  {o_{\textsf{t} } }  \ottsym{,}  {o_{\textsf{a} } }  \ottsym{,}  \ottnt{K}  \ottsym{,}  \mathit{o''}  \ottsym{)}}{%
{\ottdrulename{upd\_pop}}{}%
}}

\newcommand{\ottdrulecallXXpush}[1]{\ottdrule[#1]{%
}{
\ottnt{CT}  \vdash  \ottsym{(}  \ottnt{OT}  \ottsym{,}  \ottnt{CS}  \ottsym{,}  {o_{\textsf{t} } }  \ottsym{,}  {o_{\textsf{a} } }  \ottsym{,}  \ottnt{K}  \ottsym{,}  \ottnt{e}  \ottsym{.}  \ottmv{m}  \ottsym{(}  \ottnt{e'}  \ottsym{)}  \ottsym{)}  \longrightarrow  \ottsym{(}  \ottnt{OT}  \ottsym{,}  \ottnt{CS}  \ottsym{,}  {o_{\textsf{t} } }  \ottsym{,}  {o_{\textsf{a} } }  \ottsym{,}  \square  \ottsym{.}  \ottmv{m}  \ottsym{(}  \ottnt{e'}  \ottsym{)}  \;\textsf{::}\;  \ottnt{K}  \ottsym{,}  \ottnt{e}  \ottsym{)}}{%
{\ottdrulename{call\_push}}{}%
}}

\newcommand{\ottdrulecallXXswitch}[1]{\ottdrule[#1]{%
}{
\ottnt{CT}  \vdash  \ottsym{(}  \ottnt{OT}  \ottsym{,}  \ottnt{CS}  \ottsym{,}  {o_{\textsf{t} } }  \ottsym{,}  {o_{\textsf{a} } }  \ottsym{,}  \square  \ottsym{.}  \ottmv{m}  \ottsym{(}  \ottnt{e}  \ottsym{)}  \;\textsf{::}\;  \ottnt{K}  \ottsym{,}  \mathit{o}  \ottsym{)}  \longrightarrow  \ottsym{(}  \ottnt{OT}  \ottsym{,}  \ottnt{CS}  \ottsym{,}  {o_{\textsf{t} } }  \ottsym{,}  {o_{\textsf{a} } }  \ottsym{,}  \mathit{o}  \ottsym{.}  \ottmv{m}  \ottsym{(}  \square  \ottsym{)}  \;\textsf{::}\;  \ottnt{K}  \ottsym{,}  \ottnt{e}  \ottsym{)}}{%
{\ottdrulename{call\_switch}}{}%
}}

\newcommand{\ottdrulecallXXpop}[1]{\ottdrule[#1]{%
\ottpremise{\ottnt{OT}  \ottsym{(}  {o_{\textsf{t} } }'  \ottsym{)}  \ottsym{=}  \ottkw{obj} \, {c_{\textsf{t} } }'  \ottsym{\{}  {o_{\textsf{t} } }'_{{\mathrm{1}}}  \ottsym{,} \, ... \, \ottsym{,}  {o_{\textsf{t} } }'_{\ottmv{k}}  \ottsym{\}}}%
\ottpremise{\ottnt{CT}  \ottsym{(}  {c_{\textsf{t} } }'  \ottsym{)}  \ottsym{=}  \ottkw{class} \, \ottsym{\{}  \mathit{c_{{\mathrm{1}}}}  \ottsym{,} \, ... \, \ottsym{,}  \mathit{c_{\ottmv{i}}}  \ottsym{;}  \ottnt{M_{{\mathrm{1}}}}  \ottsym{,} \, ... \, \ottsym{,}  \ottnt{M_{\ottmv{j}}}  \ottsym{\}}}%
\ottpremise{\ottsym{1}  \leq  \ottmv{m}  \leq  \ottmv{j}}%
\ottpremise{\ottnt{M_{\ottmv{m}}}  \ottsym{=}  {c_{\textsf{r} } }  \ottsym{(}  {c_{\textsf{a} } }  \ottsym{)}  \ottsym{\{}  \ottnt{e}  \ottsym{\}}}%
}{
\ottnt{CT}  \vdash  \ottsym{(}  \ottnt{OT}  \ottsym{,}  \ottnt{CS}  \ottsym{,}  {o_{\textsf{t} } }  \ottsym{,}  {o_{\textsf{a} } }  \ottsym{,}  {o_{\textsf{t} } }'  \ottsym{.}  \ottmv{m}  \ottsym{(}  \square  \ottsym{)}  \;\textsf{::}\;  \ottnt{K}  \ottsym{,}  {o_{\textsf{a} } }'  \ottsym{)}  \longrightarrow  \ottsym{(}  \ottnt{OT}  \ottsym{,}  \ottsym{(}  {o_{\textsf{t} } }  \ottsym{,}  {o_{\textsf{a} } }  \ottsym{,}  \ottnt{K}  \ottsym{)}  \;\textsf{::}\;  \ottnt{CS}  \ottsym{,}  {o_{\textsf{t} } }'  \ottsym{,}  {o_{\textsf{a} } }'  \ottsym{,}  \ottnt{K}  \ottsym{,}  \ottnt{e}  \ottsym{)}}{%
{\ottdrulename{call\_pop}}{}%
}}

\newcommand{\ottdrulereturn}[1]{\ottdrule[#1]{%
}{
\ottnt{CT}  \vdash  \ottsym{(}  \ottnt{OT}  \ottsym{,}  \ottsym{(}  {o_{\textsf{t} } }'  \ottsym{,}  {o_{\textsf{a} } }'  \ottsym{,}  \ottnt{K}  \ottsym{)}  \;\textsf{::}\;  \ottnt{CS}  \ottsym{,}  {o_{\textsf{t} } }  \ottsym{,}  {o_{\textsf{a} } }  \ottsym{,}  \text{\texttt{[]} }  \ottsym{,}  {o_{\textsf{r} } }  \ottsym{)}  \longrightarrow  \ottsym{(}  \ottnt{OT}  \ottsym{,}  \ottnt{CS}  \ottsym{,}  {o_{\textsf{t} } }'  \ottsym{,}  {o_{\textsf{a} } }'  \ottsym{,}  \ottnt{K}  \ottsym{,}  {o_{\textsf{r} } }  \ottsym{)}}{%
{\ottdrulename{return}}{}%
}}

\newcommand{\ottdruletestXXpush}[1]{\ottdrule[#1]{%
}{
\ottnt{CT}  \vdash  \ottsym{(}  \ottnt{OT}  \ottsym{,}  \ottnt{CS}  \ottsym{,}  {o_{\textsf{t} } }  \ottsym{,}  {o_{\textsf{a} } }  \ottsym{,}  \ottnt{K}  \ottsym{,}  \ottnt{e_{{\mathrm{1}}}}  \ottsym{==}  \ottnt{e_{{\mathrm{2}}}}  \;?\;  \ottnt{e_{{\mathrm{3}}}}  \ottsym{:}  \ottnt{e_{{\mathrm{4}}}}  \ottsym{)}  \longrightarrow  \ottsym{(}  \ottnt{OT}  \ottsym{,}  \ottnt{CS}  \ottsym{,}  {o_{\textsf{t} } }  \ottsym{,}  {o_{\textsf{a} } }  \ottsym{,}  \ottsym{(}  \square  \ottsym{==}  \ottnt{e_{{\mathrm{2}}}}  \;?\;  \ottnt{e_{{\mathrm{3}}}}  \ottsym{:}  \ottnt{e_{{\mathrm{4}}}}  \ottsym{)}  \;\textsf{::}\;  \ottnt{K}  \ottsym{,}  \ottnt{e_{{\mathrm{1}}}}  \ottsym{)}}{%
{\ottdrulename{test\_push}}{}%
}}

\newcommand{\ottdruletestXXswitch}[1]{\ottdrule[#1]{%
}{
\ottnt{CT}  \vdash  \ottsym{(}  \ottnt{OT}  \ottsym{,}  \ottnt{CS}  \ottsym{,}  {o_{\textsf{t} } }  \ottsym{,}  {o_{\textsf{a} } }  \ottsym{,}  \ottsym{(}  \square  \ottsym{==}  \ottnt{e_{{\mathrm{2}}}}  \;?\;  \ottnt{e_{{\mathrm{3}}}}  \ottsym{:}  \ottnt{e_{{\mathrm{4}}}}  \ottsym{)}  \;\textsf{::}\;  \ottnt{K}  \ottsym{,}  \mathit{o_{{\mathrm{1}}}}  \ottsym{)}  \longrightarrow  \ottsym{(}  \ottnt{OT}  \ottsym{,}  \ottnt{CS}  \ottsym{,}  {o_{\textsf{t} } }  \ottsym{,}  {o_{\textsf{a} } }  \ottsym{,}  \ottsym{(}  \mathit{o_{{\mathrm{1}}}}  \ottsym{==}  \square  \;?\;  \ottnt{e_{{\mathrm{3}}}}  \ottsym{:}  \ottnt{e_{{\mathrm{4}}}}  \ottsym{)}  \;\textsf{::}\;  \ottnt{K}  \ottsym{,}  \ottnt{e_{{\mathrm{2}}}}  \ottsym{)}}{%
{\ottdrulename{test\_switch}}{}%
}}

\newcommand{\ottdruletestXXpopXXeq}[1]{\ottdrule[#1]{%
}{
\ottnt{CT}  \vdash  \ottsym{(}  \ottnt{OT}  \ottsym{,}  \ottnt{CS}  \ottsym{,}  {o_{\textsf{t} } }  \ottsym{,}  {o_{\textsf{a} } }  \ottsym{,}  \ottsym{(}  \mathit{o_{{\mathrm{1}}}}  \ottsym{==}  \square  \;?\;  \ottnt{e_{{\mathrm{3}}}}  \ottsym{:}  \ottnt{e_{{\mathrm{4}}}}  \ottsym{)}  \;\textsf{::}\;  \ottnt{K}  \ottsym{,}  \mathit{o_{{\mathrm{1}}}}  \ottsym{)}  \longrightarrow  \ottsym{(}  \ottnt{OT}  \ottsym{,}  \ottnt{CS}  \ottsym{,}  {o_{\textsf{t} } }  \ottsym{,}  {o_{\textsf{a} } }  \ottsym{,}  \ottnt{K}  \ottsym{,}  \ottnt{e_{{\mathrm{3}}}}  \ottsym{)}}{%
{\ottdrulename{test\_pop\_eq}}{}%
}}

\newcommand{\ottdruletestXXpopXXneq}[1]{\ottdrule[#1]{%
\ottpremise{\mathit{o_{{\mathrm{1}}}}  \not=  \mathit{o_{{\mathrm{2}}}}}%
}{
\ottnt{CT}  \vdash  \ottsym{(}  \ottnt{OT}  \ottsym{,}  \ottnt{CS}  \ottsym{,}  {o_{\textsf{t} } }  \ottsym{,}  {o_{\textsf{a} } }  \ottsym{,}  \ottsym{(}  \mathit{o_{{\mathrm{1}}}}  \ottsym{==}  \square  \;?\;  \ottnt{e_{{\mathrm{3}}}}  \ottsym{:}  \ottnt{e_{{\mathrm{4}}}}  \ottsym{)}  \;\textsf{::}\;  \ottnt{K}  \ottsym{,}  \mathit{o_{{\mathrm{2}}}}  \ottsym{)}  \longrightarrow  \ottsym{(}  \ottnt{OT}  \ottsym{,}  \ottnt{CS}  \ottsym{,}  {o_{\textsf{t} } }  \ottsym{,}  {o_{\textsf{a} } }  \ottsym{,}  \ottnt{K}  \ottsym{,}  \ottnt{e_{{\mathrm{4}}}}  \ottsym{)}}{%
{\ottdrulename{test\_pop\_neq}}{}%
}}

\newcommand{\ottdruleseqXXpush}[1]{\ottdrule[#1]{%
}{
\ottnt{CT}  \vdash  \ottsym{(}  \ottnt{OT}  \ottsym{,}  \ottnt{CS}  \ottsym{,}  {o_{\textsf{t} } }  \ottsym{,}  {o_{\textsf{a} } }  \ottsym{,}  \ottnt{K}  \ottsym{,}  \ottsym{(}  \ottnt{e}  \ottsym{;}  \ottnt{e'}  \ottsym{)}  \ottsym{)}  \longrightarrow  \ottsym{(}  \ottnt{OT}  \ottsym{,}  \ottnt{CS}  \ottsym{,}  {o_{\textsf{t} } }  \ottsym{,}  {o_{\textsf{a} } }  \ottsym{,}  \ottsym{(}  \square  \ottsym{;}  \ottnt{e'}  \ottsym{)}  \;\textsf{::}\;  \ottnt{K}  \ottsym{,}  \ottnt{e}  \ottsym{)}}{%
{\ottdrulename{seq\_push}}{}%
}}

\newcommand{\ottdruleseqXXpop}[1]{\ottdrule[#1]{%
}{
\ottnt{CT}  \vdash  \ottsym{(}  \ottnt{OT}  \ottsym{,}  \ottnt{CS}  \ottsym{,}  {o_{\textsf{t} } }  \ottsym{,}  {o_{\textsf{a} } }  \ottsym{,}  \ottsym{(}  \square  \ottsym{;}  \ottnt{e}  \ottsym{)}  \;\textsf{::}\;  \ottnt{K}  \ottsym{,}  \mathit{o}  \ottsym{)}  \longrightarrow  \ottsym{(}  \ottnt{OT}  \ottsym{,}  \ottnt{CS}  \ottsym{,}  {o_{\textsf{t} } }  \ottsym{,}  {o_{\textsf{a} } }  \ottsym{,}  \ottnt{K}  \ottsym{,}  \ottnt{e}  \ottsym{)}}{%
{\ottdrulename{seq\_pop}}{}%
}}

\newcommand{\ottdruleexitXXpush}[1]{\ottdrule[#1]{%
}{
\ottnt{CT}  \vdash  \ottsym{(}  \ottnt{OT}  \ottsym{,}  \ottnt{CS}  \ottsym{,}  {o_{\textsf{t} } }  \ottsym{,}  {o_{\textsf{a} } }  \ottsym{,}  \ottnt{K}  \ottsym{,}  \ottkw{exit} \, \ottnt{e}  \ottsym{)}  \longrightarrow  \ottsym{(}  \ottnt{OT}  \ottsym{,}  \ottnt{CS}  \ottsym{,}  {o_{\textsf{t} } }  \ottsym{,}  {o_{\textsf{a} } }  \ottsym{,}  \ottsym{(}  \ottkw{exit} \, \square  \ottsym{)}  \;\textsf{::}\;  \ottnt{K}  \ottsym{,}  \ottnt{e}  \ottsym{)}}{%
{\ottdrulename{exit\_push}}{}%
}}

\newcommand{\ottdefnreduce}[1]{\begin{ottdefnblock}[#1]{$\ottnt{CT}  \vdash  \ottnt{Cfg}  \longrightarrow  \ottnt{Cfg'}$}{}
\ottusedrule{\ottdrulethis{}}
\ottusedrule{\ottdrulearg{}}
\ottusedrule{\ottdruleselXXpush{}}
\ottusedrule{\ottdruleselXXpop{}}
\ottusedrule{\ottdruleupdXXpush{}}
\ottusedrule{\ottdruleupdXXswitch{}}
\ottusedrule{\ottdruleupdXXpop{}}
\ottusedrule{\ottdrulecallXXpush{}}
\ottusedrule{\ottdrulecallXXswitch{}}
\ottusedrule{\ottdrulecallXXpop{}}
\ottusedrule{\ottdrulereturn{}}
\ottusedrule{\ottdruletestXXpush{}}
\ottusedrule{\ottdruletestXXswitch{}}
\ottusedrule{\ottdruletestXXpopXXeq{}}
\ottusedrule{\ottdruletestXXpopXXneq{}}
\ottusedrule{\ottdruleseqXXpush{}}
\ottusedrule{\ottdruleseqXXpop{}}
\ottusedrule{\ottdruleexitXXpush{}}
\end{ottdefnblock}}



\definecolor{darkblue}{rgb}{0,0.1,0.5}
\definecolor{dkblue}{rgb}{0,0.1,0.5}
\definecolor{dkgreen}{rgb}{0,0.4,0}
\definecolor{dkred}{rgb}{0.6,0,0}
\definecolor{dkpurple}{rgb}{0.7,0,0.4}
\definecolor{olive}{rgb}{0.5, 0.5, 0.0}
\definecolor{teal}{rgb}{0.0,0.5,0.5}
\definecolor{orchid}{rgb}{0.85,0.44,0.84}

\long\def\comment#1{}

\newcommand{\comm}[3]{\ifdraft{\color{#1}[#2: #3]}\fi}
\newcommand{\bcp}[1]{\comm{dkpurple}{BCP}{#1}}

\newcommand{\ch}[1]{\comm{dkgreen}{CH}{#1}} 
\newcommand{\chrev}[1]{{\ifdraft\textcolor{dkgreen}{\fi#1\ifdraft}\fi}} 
\newcommand{\yj}[1]{\comm{orchid}{YJ}{#1}} 

\begin{document}

\title{\bf Towards a Fully Abstract Compiler Using Micro-Policies\\[1ex]
  \Large --- Secure Compilation for Mutually Distrustful Components ---}



\iftr
\author{
  Yannis Juglaret\textsuperscript{1,2}\qquad
  C\u{a}t\u{a}lin Hri\c{t}cu\textsuperscript{1}\qquad 
  Arthur Azevedo de Amorim\textsuperscript{3}\\[2ex] 
  Benjamin C. Pierce\textsuperscript{3} \qquad 
  Antal Spector-Zabusky\textsuperscript{3} \qquad 
  Andrew Tolmach\textsuperscript{4}\\[2em] 
  \textsuperscript{1}Inria Paris\qquad
  \textsuperscript{2}Universit\'{e} Paris Diderot (Paris 7)\\[2ex]
  \textsuperscript{3}University of Pennsylvania\qquad
  \textsuperscript{4}Portland State University\\[2em]
}
\date{Technical Report\\[0.7em]
      initial version: August 21, 2015\\[0.7em]
      last revised: \today}
\else
\author{Yannis Juglaret, supervised by C\u{a}t\u{a}lin Hri\c{t}cu and
  Bruno Blanchet\\
  Inria Paris-Rocquencourt, Prosecco team}
\date{\today}
\fi

\iftr
\thispagestyle{empty}
\fi

\maketitle

\newgeometry{left=4.5cm,right=4.5cm}
\begin{abstract}
  Secure compilation prevents all low-level attacks on compiled code
  and allows for sound reasoning about security in the source language.
  In this work we propose a new attacker model for secure compilation
  that extends the well-known notion of full abstraction to ensure
  protection for mutually distrustful components.
  We devise a compiler chain (compiler, linker, and loader) and a
  novel security monitor that together defend against this strong
  attacker model.
  The monitor is implemented using a recently proposed, generic
  tag-based protection framework called micro-policies, which comes
  with hardware support for efficient caching and with a formal
  verification methodology.
  Our monitor protects the abstractions of a simple object-oriented
  language---class isolation, the method call discipline, and type
  safety---against arbitrary low-level attackers.
\end{abstract}
\restoregeometry

\onecolumn
\tableofcontents
\twocolumn

\pagestyle{plain}


\iftr
\section{Introduction}
\fi

\iflater
\ch{If you need figures explaining things your slides are full of them.
  You just need to export them as PDF and import them in the report. It's easy!}
\fi

\iftr
\subsection{General Context}
\else
\subsection*{General Context}
\fi




%
In this work we study compiled partial programs evolving within a low-level
environment, with which they can interact.
Such interaction is useful --- think of high-level programs performing
low-level library calls, or of a browser interacting with native code
that was sent over the internet~\cite{DBLP:conf/icics/DaiSYL12,
DBLP:journals/cacm/YeeSDCMOONF10} --- but also dangerous: parts of the
environment could be malicious or compromised and try to compromise
the program as well~\cite{DBLP:conf/fosad/Erlingsson07,
DBLP:conf/icics/DaiSYL12,DBLP:journals/cacm/YeeSDCMOONF10}.
Low-level libraries written in C or in C++ can be
vulnerable to control hijacking attacks~\cite{Szekeres2013,
DBLP:conf/fosad/Erlingsson07} and be taken over by a remote attacker.
When the environment can't be trusted, it is a major concern to ensure
the security of running programs.

%
With today's compilers, low-level attackers~\cite{
DBLP:conf/fosad/Erlingsson07} can circumvent high-level
abstractions~\cite{abadi_protection98,DBLP:journals/tcs/Kennedy06} and
are thus much more powerful than high-level attackers, which means
that the security reasoning has to be done at the lowest level, which
it is extremely difficult.
An alternative is to build a {\em secure compiler} that ensures that
low- and high-level attackers have exactly the same power, allowing
for easier, source-level security reasoning~\cite{abadi_aslr12,
DBLP:conf/csfw/JagadeesanPRR11,patrignani2014secure,
DBLP:conf/popl/FournetSCDSL13}.
%
Formally, the notion of secure compilation is usually expressed
as \emph{full abstraction} of the
translation~\cite{abadi_protection98}.
Full abstraction is a much stronger property than just compiler
correctness~\cite{leroy09:compcert}.

%
Secure compilation is, however, very hard to achieve in practice.
Efficiency, which is crucial for broad adoption~\cite{Szekeres2013},
is the main challenge.
Another concern is transparency.
%
While we want to constrain the power of low-level attackers, the
constraints we set should be relaxed enough that there is a way for all
benign low-level environments to respect them.
%
If we are not transparent enough, the partial program might be
prevented from properly interacting with its environment
(e.g.\ the low-level libraries it requires).

%
For a compiler targeting machine code, which lacks structure and
checks, a typical low-level attacker has write access to the whole
memory, and can redirect control flow to any location in
memory~\cite{DBLP:conf/fosad/Erlingsson07}.
Techniques have been developed to deal with such powerful
attackers, in particular
involving randomization~\cite{abadi_aslr12} and binary code
rewriting~\cite{ErlingssonAVBN06,MorrisettTTTG12}.
The first ones only offer weak probabilistic guarantees; as a
consequence, address space layout randomization~\cite{abadi_aslr12} is
routinely circumvented in practical
attacks~\cite{DBLP:conf/sp/SnowMDDLS13,DBLP:conf/sp/EvansFGOTSSRO15}.
The second ones add extra software checks which often come at a high
performance cost.

%
Using additional protection in the hardware can result in secure
compilers with strong guarantees~\cite{patrignani2014secure}, without
sacrificing efficiency or transparency.
Because updating hardware is expensive and hardware adoption
takes decades, the need for generic protection mechanisms that can fit
with ever-evolving security requirements has emerged.
Active research in the domain includes \emph{capability
machines}~\cite{DBLP:conf/asplos/CarterKD94,cheri2012,
cheri_oakland2015} and \emph{tag-based architectures}~\cite{
ieee_hst2013,PicoCoq2013,pump_asplos2015,micropolicies2015}.
%
In this work, we use a generic tag-based protection mechanism called
  \emph{micro-policies}~\cite{pump_asplos2015, micropolicies2015} as
  the target of a secure compiler.  

%
Micro-policies provide instruction-level
monitoring based on fine-grained metadata tags.
In a micro-policy machine, every word of data is augmented with a
word-sized tag, and a hardware-accelerated monitor propagates
these tags every time a machine instruction gets executed.
Micro-policies can be described as a combination of
software-defined rules and monitor services.
The rules define how the monitor will perform tag propagation
instruction-wise, while the services allow for direct interaction
between the running code and the monitor.
This mechanism comes with an efficient hardware implementation built
on top of a RISC processor~\cite{pump_asplos2015} as well as a mechanized
metatheory~\cite{micropolicies2015}, and has already been used to
enforce a variety of security policies~\cite{pump_asplos2015,
micropolicies2015}.

\iftr
\subsection{Research Problem}
\else
\subsection*{Research Problem}
\fi


Recent work~\cite{DBLP:conf/csfw/AgtenSJP12,patrignani2014secure}
has illustrated how \emph{protected module architectures} --- a class
of hardware architectures featuring coarse-grained isolation
mechanisms~\cite{DBLP:conf/isca/McKeenABRSSS13,
DBLP:conf/isca/HoekstraLPPC13,DBLP:conf/ccs/StrackxP12} --- can help
in devising a fully abstract compilation scheme for a Java-like
language.
This scheme assumes the compiler knows which components
in the program can be trusted and which ones cannot, and protects the
trusted components from the distrusted ones by isolating them in a
protected module.

%
%
\ifsooner
\yj{Not related anymore with what follows: please don't just remove
ifsooner}
Assume that we want to protect an arbitrary low-level component:
Even if this component was not generated using our compiler, it is
likely that it could still benefit from the protection mechanism as
long as it behaves like a compiled component would.
We should however ensure that protecting this component won't break
the protection for other protected components:
The low-level component could have been written in an unsafe language
and hence be vulnerable to control hijacking attacks~\cite{
Szekeres2013,DBLP:conf/fosad/Erlingsson07}, which could lead to
a \emph{dynamic compromise} of the component.
\fi
\iflater
\yj{I guess that one of Benjamin's points was that this is not a
dynamic compromise: It can be seen as static because the control
hijacking vulnerability was present from the very start of the
computation, just not exploited.}\yj{I understand the point but would
find it easier to nonetheless call this dynamic compromise for the
report, and focus on mutual distrust itself as the interesting thing
when we turn this into a paper.}\yj{The important point is that we
want to protect the component from the distrusted components, but we
still want to protect compiled components from this one in case a
control hijacking attack happens and the component turns bad.}\bcp{Why not
simply say that we want to protect each component from all the other ones?}
\fi
This kind of protection is only appropriate when all the components we
want to protect can be trusted, for example because they have been
verified~\cite{DBLP:conf/popl/Agten0P15}.
Accounting for the cases in which this is not possible, we present and
adopt a stronger attacker model of \emph{mutual distrust}: in this
setting a secure compiler should protect each component from every
other component, so that whatever the compromise scenario
may be, uncompromised components always get protected from the
compromised ones.

The main questions we address in this work are:
(1) can we build a fully abstract compiler to a micro-policy machine?
and (2) can we support a stronger attacker model by
protecting mutually distrustful components against each other?

\iflater
\bcp{IMO, this belongs in a related work section, not here...}\yj{Note
  that the current introduction follows a strict format that was
  imposed my university: for the TR version we can / should change
  the format.}\yj{That being said, maybe we can just remove this
  paragraph? Its content is already part of the related work section
  anyway.}\ch{I'm for keeping this for now and restructuring later if needed}
\fi
We are the first to work on question 1, and among
  the first to study question 2:
Micro-policies are a recent hardware mechanism~\cite{
pump_asplos2015,micropolicies2015} that is flexible and fine-grained
enough to allow building a secure compiler against this strong
attacker model.
In independent parallel work~\cite{patrignani_thesis,PatrignaniDP15},
Patrignani et al.\ are trying to extend their previous
results~\cite{patrignani2014secure} to answer question 2 using
different mechanisms (\EG multiple protected modules and
randomization).
Related work is further discussed in \autoref{sec:related}.

\iftr
\subsection{Our Contribution}
\else
\subsection*{Our Contribution}
\fi

  In this work we propose a new attacker model for secure compilation
  that extends the well-known notion of full abstraction to ensure
  protection for mutually distrustful components (\autoref{sec:attacker-model}).
  We devise a secure compilation solution (\autoref{sec:overview}) for
  a simple object-oriented language (\autoref{sec:source}) that
  defends against this strong attacker model.
  Our solution includes a simple compiler chain (compiler,
  linker, and loader; \autoref{sec:compiler}) and a novel micro-policy
  (\autoref{sec:micro-policy}) that protects the abstractions of our
  simple language---class isolation, the method call discipline, and
  type safety---against arbitrary low-level attackers.
  Enforcing a method call discipline and type safety using a
  micro-policy is novel and constitutes a contribution of independent
  interest.





\iflater
Our main contribution is the design of a micro-policy that enables
the secure compilation from a simple object-oriented language to a
micro-policy machine.
\ch{main problem is that you don't really explain it yet!}
\fi

\iflater\ch{There is a lot of crap in comments below}\fi
\iflater
We present a proof strategy for this stronger attacker model
and will use this to produce a complete formal proof in the future.
\else
We have started proving that our compiler is secure, but since
that proof is not yet finished, we do not present it in the report.
\fi
Section \ref{sec:efficiency} explains why we have good hopes in the
efficiency and transparency of our solution for the protection of
realistic programs.
We also discuss ideas for mitigation when our mechanism is not
transparent enough.
However, in both cases gathering evidence through experiments to
confirm our hopes is left for future work.

\subsection{Other Insights}

\bcp{Suggest postponing these discussions to the conclusions.  Putting them
  here slows down the main story.}

\ch{Any important insights we are still missing?}%
\ch{{\em Completely/mostly standard} un-optimizing compiler with
  separate compilation (not whole program compilation) + micro-policy?
  See discussion in 6.1}


Throughout this work, we reasoned a lot about abstractions.
One insight we gained is that even very simple high-level languages are
much more abstract than one would naively expect.
Moreover, we learned that some abstractions --- such as
functional purity --- are impossible to efficiently enforce
dynamically.

We also needed to extend the current hardware and formalism of
micro-policies (\autoref{sec:target}) in order to achieve our
challenging security goal.
We needed two kinds of extensions: some only ease micro-policy
writing, while the others increase the power to the monitoring mechanism.
The first ones require a policy compiler, allowing an easier
specification for complex micro-policies, which can then still run on
the current hardware.
The second ones require actual hardware extensions.
Both of these extensions keep the spirit of micro-policies unchanged:
Rules, in particular, are still specified as a mapping from tags to
tags.

Finally, as we mention in \autoref{sec:i2t}, we were able to provide
almost all security at the micro-policy level rather than the compiler
level.
This is very encouraging because it means that we might be able to
provide full abstraction for complex compilers that already exist,
using micro-policies while providing very little change to the
compiler itself.

\iftr
\else
\clearpage
\fi

\section{Stronger Attacker Model for Secure Compilation
  of Mutually Distrustful Components}
\label{sec:attacker-model}

Previous work on secure compilation~\cite{abadi_aslr12,
DBLP:conf/csfw/JagadeesanPRR11,patrignani2014secure,
DBLP:conf/popl/FournetSCDSL13} targets a property called
full abstraction~\cite{abadi_protection98}.
This section presents full abstraction
(\autoref{sec:fullabstraction}), motivates why it is not
enough in the context of mutually distrustful components
(\autoref{sec:not-enough}), and introduces a stronger attacker model
for this purpose (\autoref{sec:mutual-distrust}).

\subsection{Full Abstraction}
\label{sec:fullabstraction}

\iflater
\ch{Full abstraction is indeed the standard notion of secure
  compilation. Is there any intuition we can add about why full
  abstraction is {\em the right notion}? For instance, why is
  compiler correctness (a la CompCert) not enough?}
\yj{Agree it would be good to talk about this but not sure I will have
  the time to write it.}
\fi

Full abstraction is a property of compilers that talks about the
observable behaviors of partial programs evolving in a context.
When we use full abstraction for security purposes, we will think
of contexts as \emph{attackers} trying to learn the partial program's
secrets, or to break its internal invariants.
Full abstraction relates the observational power of low-level contexts
to that of high-level contexts.
Hence, when a compiler achieves full abstraction, low-level attackers
can be modeled as high-level ones, which makes reasoning about the
security of programs much easier:
Because they are built using source-level abstractions, high-level
attackers have more structure and their interaction with the program
is limited to that allowed by the semantics of the source language.

In order to state full abstraction formally, one first has to provide
definitions for partial programs, contexts, and observable behaviors
both in the high- and the low-level.
Partial programs are similar to usual programs; but they
could still be missing some elements---\EG
external libraries---before they can be executed.
The usual, complete programs can be seen as a particular case of
partial programs which have no missing elements, and are thus ready
for execution.
A context is usually defined as a partial program with a hole;
this hole can later be filled with a partial program in order to yield
a new partial program.
Finally, observable behaviors of complete programs can vary depending
on the language and may include, termination,
I/O during execution, final result value, 
or final memory state. 

The chosen definition for contexts will set the granularity at which
the attacker can operate.
%
Similarly, defining the observable behaviors of complete programs can affect
the observational power of the attacker in our formal model.
The attacker we want to protect the program against is the context
itself:
The definition we choose for observable behaviors should allow the context to produce an
observable action every time it has control, thus letting it convert
its knowledge into observable behaviors.
In our case, our source and target languages feature immediate program
termination constructs.
We can thus choose \emph{program termination} as an observable
behavior which models such strong observational power.
%

We denote high-level partial programs by $\ottnt{P}, \ottnt{Q}$, and
high-level contexts by $\ottnt{A}$.
We denote by $\ottnt{A}  \ottsym{[}  \ottnt{P}  \ottsym{]}$ the partial program obtained by inserting a
high-level partial program $\ottnt{P}$ in a high-level context $\ottnt{A}$.
We denote low-level partial programs by $\ottnt{p}, \ottnt{q}$, and
high-level contexts by $\ottnt{a}$.
We denote by $\ottnt{a}  \ottsym{[}  \ottnt{p}  \ottsym{]}$ the insertion of a low-level partial program
$\ottnt{p}$ in a low-level context $\ottnt{a}$.
Given a high-level partial program $\ottnt{P}$, we denote by $\ottnt{P}  \hspace{-0.35em}\downarrow$ the
low-level program obtained by compiling $\ottnt{P}$.
We denote the fact that two complete high-level programs $\ottnt{P}$
and $\ottnt{Q}$ have the same observable behavior by $\ottnt{P}  \sim_H  \ottnt{Q}$.
For two complete low-level programs $\ottnt{p}$ and $\ottnt{q}$, we denote
this by $\ottnt{p}  \sim_L  \ottnt{q}$.
With these notations, full abstraction of the compiler is stated as
\begin{multline*}
\ottsym{(}  \forall \, \ottnt{A}  \ottsym{,}  \ottnt{A}  \ottsym{[}  \ottnt{P}  \ottsym{]}  \sim_H  \ottnt{A}  \ottsym{[}  \ottnt{Q}  \ottsym{]}  \ottsym{)}  \iff  \ottsym{(}  \forall \, \ottnt{a}  \ottsym{,}  \ottnt{a}  \ottsym{[}  \ottnt{P}  \hspace{-0.35em}\downarrow  \ottsym{]}  \sim_L  \ottnt{a}  \ottsym{[}  \ottnt{Q}  \hspace{-0.35em}\downarrow  \ottsym{]}  \ottsym{)}
\end{multline*}
for all $\ottnt{P}$ and $\ottnt{Q}$.
Put into words, a compiler is fully abstract when any two
high-level partial programs $\ottnt{P}$ and $\ottnt{Q}$ behave the same in
every high-level context if and only if the compiled partial programs
$\ottnt{P}  \hspace{-0.35em}\downarrow$ and $\ottnt{Q}  \hspace{-0.35em}\downarrow$
behave the same in every low-level context.
In other words, a compiler is fully abstract when a low-level attacker
is able to distinguish between exactly the same programs as a
high-level attacker.

Intuitively, in the definition of full abstraction the trusted
compiled program ($\ottnt{P}  \hspace{-0.35em}\downarrow$ or $\ottnt{Q}  \hspace{-0.35em}\downarrow$) is protected from the
untrusted low-level context ($\ottnt{a}$) in a way that the context cannot
cause more harm to the program than a high-level context ($\ottnt{A}$)
already could.
This static separation between the trusted compiled program and the
context is in practice chosen by the user and communicated to the
compiler chain, which can insert a single protection barrier between
between the two.
In particular, in the definition of full abstraction the
compiler is only ever invoked for the protected program ($\ottnt{P}  \hspace{-0.35em}\downarrow$ or
$\ottnt{Q}  \hspace{-0.35em}\downarrow$), and can use this fact to its advantage, \EG to add dynamic checks.
Moreover, the definition of $\ottnt{a}  \ottsym{[}  \ottnt{p}  \ottsym{]}$ (low-level linking) can insert
a dynamic protection barrier between the trusted $p$ and the untrusted $a$.
For instance, Patrignani~\ETAL\cite{patrignani2014secure} built a
fully abstract compilation scheme targeting protected module
architectures by putting the compiled program into a protected part of
the memory (the protected module) and giving only unprotected memory
to the context.
A {\em single} dynamic protection barrier is actually enough to
enforce the full abstraction attacker model.

\subsection{Limitations of Full Abstraction}
\label{sec:not-enough}

\ch{Why is full abstraction not enough for multiple distrustful
  compartments}

We study languages for which programs can be decomposed into
\emph{components}.
Real-world languages have good candidates for such a notion of
components: depending on the granularity we target, they could be
packages, modules, or classes.
Our compiler is such that source components are \emph{separately compilable}
program units, and compilation maps source-level components to
target-level components.

When using a fully abstract compiler in the presence of multiple
components, the user has a
choice whether a component written in the high-level language is
trusted, in which case it is considered part of the program, or
untrusted, in which case it is considered part of the context.
If it is untrusted, the component can as well be compiled with an
insecure compiler, since anyway the fully abstract compiler only
provides security to components that are on the good side of the
protection barrier.
If the program includes components written in the low-level language,
\EG for efficiency reasons, then the user has generally no choice but
to consider these components untrusted.
Because of the way full abstraction is stated, low-level components
that are not the result of the compiler cannot be part of the trusted
high-level program, unless they have at least a high-level equivalent
(we discuss this idea in \autoref{sec:mutual-distrust}).

\begin{figure}
\includegraphics[trim=0.9cm 22cm 5.5cm
1.1cm,clip,width=\columnwidth]{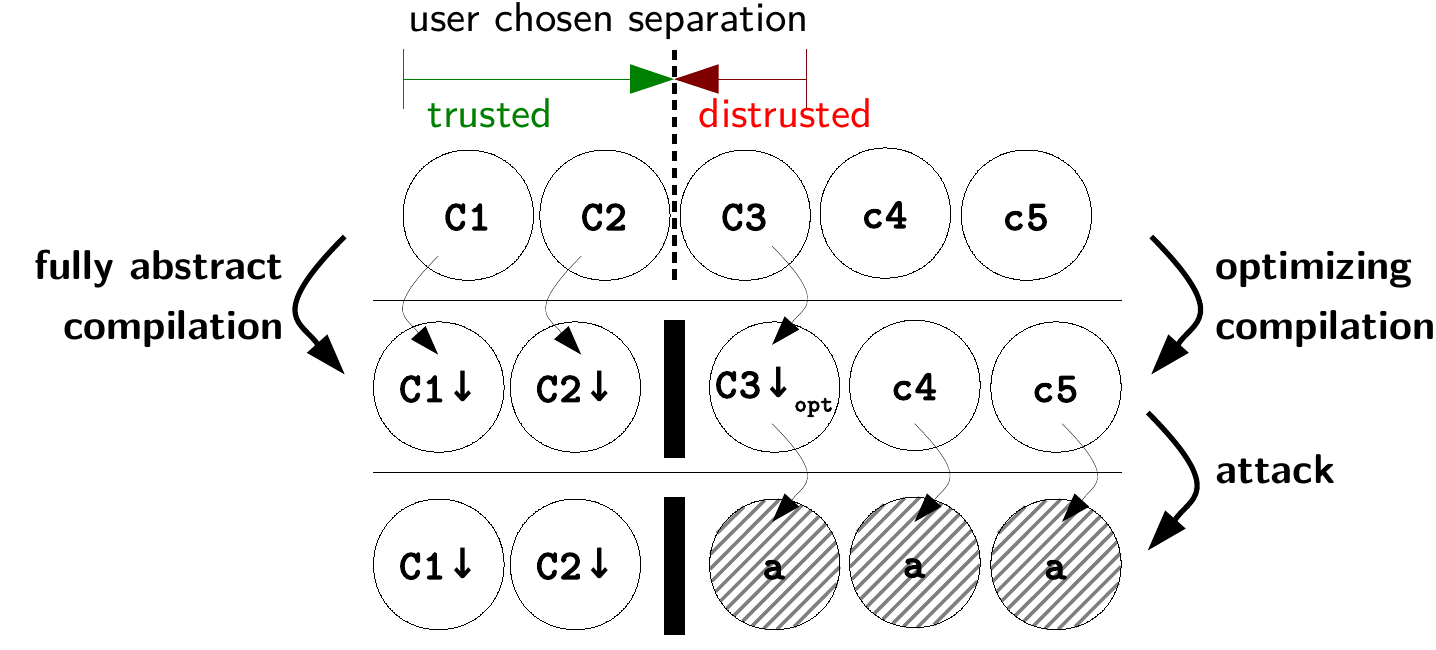}
\caption{Full abstraction for multiple components}
\label{fig:multi-component-full-abstraction}
\end{figure}

Figure~\ref{fig:multi-component-full-abstraction} graphically
illustrates how full abstraction could be applied in a multi-component
setting.
Components {\bf C1}, {\bf C2}, and {\bf C3} are written in the
high-level language, while {\bf c4} and {\bf c5} are written in
the low-level one.
Suppose the user chooses to trust {\bf C1} and {\bf C2} and not to
trust {\bf C3}, then the compiler will introduce a single barrier
protecting {\bf C1} and {\bf C2} from all the other components.


%
There are two assumptions on the attacker model when we take full
abstraction as a characterization of secure compilation:
{\em the user correctly identifies trusted and untrusted components so that
(1) trusted components need not be protected from each other, and
%
%
(2) untrusted components need no protection whatsoever.}
%
%
%
We argue that there are common cases in which the absolute, binary trust notion
implied by full abstraction is too limiting (\EG there is no way to
achieve all the user's security goals), and for which a stronger
attacker model protecting mutually distrustful components is needed.

Assumption (1) is only realistic if all trusted components are memory
safe~\cite{arthur-alpha-draft} and do not exhibit C-style undefined behaviors.
Only when all trusted components have a well-defined semantics in
the high-level language is a fully abstract compiler required to
preserve this semantics at the low level.
Memory safety for the trusted components may follow either from the
fact that the high-level language is memory safe as a whole or that
the components have been verified to be memory
safe~\cite{DBLP:conf/popl/Agten0P15}.
In the typical case of unverified C code, however, assumption (1) can
be unrealistically strong, and the user cannot be realistically
expected to decide which components are memory safe.
If he makes the wrong choice all bets are off for security,
a fully abstract compiler can produce code in which a control
hijacking attack~\cite{Szekeres2013,DBLP:conf/fosad/Erlingsson07} in
one trusted component can take over all the rest.
While we are not aware of any fully abstract compiler for unverified C,
we argue that if one focuses solely on achieving the full abstraction
property, such a compiler could potentially be as insecure in practice
as standard compilers.
%

Even in cases where assumption (1) is acceptable, assumption (2) is
still a very strong one.
In particular, since components written in the low-level language
cannot get protection, every security-critical component would have to
be written in the high-level source language, which is often not realistic.
Compiler correctness would be sufficient on its own if all components
could be written in a safe high-level language.
The point in moving from compiler correctness to full abstraction,
which is stronger, is precisely to account for the fact that some
components have to be written in the low-level language, \EG for
performance reasons.

Assumption (2) breaks as soon as we consider that it makes a
difference whether the attacker owns one or all the untrusted components.
As an example, assume that an attacker succeeds in taking over an
untrusted component that was used by the program to render the picture
of a cat.
Would one care whether this allows the attacker to also take over the
low-level cryptographic library that manages private keys?
We believe that the cryptographic library, which is a
security-critical component, should get the same level of protection
as a compiled component, even if for efficiency it is implemented in
the low-level language.

When assumption (1) breaks, trusted components need to be protected
from each other, or at least from the potentially memory unsafe ones.
When assumption (2) breaks, untrusted security-critical components
need to be protected from the other untrusted components.
In this work, we propose a stronger attacker model that removes both these
assumptions by requiring all components to be protected from each other.
%

\subsection{Mutual Distrust Attacker Model}
\label{sec:mutual-distrust}

\bcp{I'm still not very clear on whether this attacker model is an
  instance of classical full abstraction, or a minor variant, or a
  significant generalization.  To figure this out, it would really help to
  see the property you are describing written out rigorously /
  mathematically.} 

\begin{figure}
\includegraphics[trim=1.2cm 21.7cm 8cm
1cm,clip,width=\columnwidth]{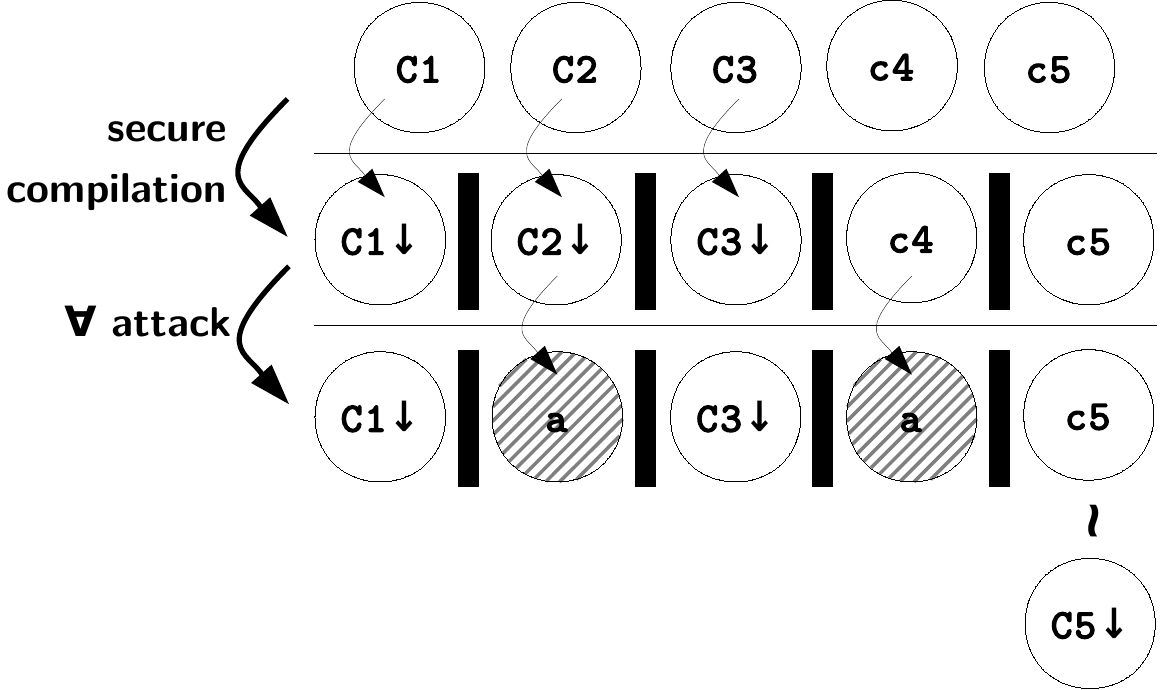}
\caption{Secure compilation for mutually distrustful
components}
\label{fig:mutual-distrust}
\end{figure}

We propose a new attacker model that overcomes the previously
highlighted limitations of full abstraction.
In this attacker model, we assume that each component could be
compromised and protect all the other components from it:
we call it an attacker model for \emph{mutually distrustful components}.
This model can provide security even in C-like unsafe languages when
some of the high-level components are memory unsafe or have undefined
behaviors. This is possible if the high-level semantics treats
undefined behavior as arbitrary changes in the state of the component
that triggered it, rather than in the global state of the program.
In the following we will assume the high-level language is secure.

All compiled high-level components get security unconditionally:
the secure compiler and the dynamic barriers protect them
from all low-level attacks, which allows reasoning about their
security in the high-level language.
For low-level components to get security they have to satisfy
additional conditions, since the protection barriers are often not
enough on their own for security, as the compiler might be inserting
boundary checks, cleaning registers, \ETC and the low-level code still
needs to do these things on its own in order to get full protection.
Slightly more formally, in order for a low-level component {\bf c} to
get security it must behave in all low-level contexts like some
compiled high-level component {\bf C$\downarrow$}.
In this case we can reason about its security at the high level by
modelling it as {\bf C}.
This captures the scenario in which {\bf c} is written in the
low-level language for efficiency reasons.

We illustrate our stronger attacker model in
figure~\ref{fig:mutual-distrust}.
The protected program is the same as in the previous full abstraction
diagram of figure~\ref{fig:multi-component-full-abstraction}.
This time, however, the user doesn't choose a trust barrier:
all components are considered mutually distrustful instead.
Each of them gets protected from the others thanks to barriers
inserted by the compiler.
While components {\bf C3}, {\bf c4}, and {\bf c5} were distrusted and
thus not protected in the previous diagram, here all of them can get
the same amount of protection as other components.
To get security {\bf C3} is compiled using the secure compiler, while
for {\bf c4} and {\bf c5} security is conditioned on equivalence
to high-level components; in the figure we assume this only for {\bf c5}.
The attacker can compromise arbitrary components (including
high-level compiled components), \EG {\bf C2$\downarrow$} and {\bf c4}
in the diagram.
In this compromise scenario, we ensure that the uncompromised components {\bf
  C1$\downarrow$}, {\bf C3$\downarrow$}, and {\bf c5} are protected
from all low-level attacks coming from the compromised components.
In general, our attacker model defends against all such compromise
scenarios.

To sum up, our attacker model can be stated as follows:
(a) the attacker compromises with component granularity,
(b) the attacker may compromise any set of components,
(c) in every compromise scenario, each uncompromised compiled
high-level component is secure against low-level attacks from all
compromised components, and
(d) in every compromise scenario, each uncompromised low-level
component that has a high-level equivalent is secure against low-level
attacks from all compromised components.



\ifsooner







In this section, we set our attacker model which is one of dynamic
compromise, and explain how a secure compilation of mutually
distrustful components can protect against this attacker model.
We highlight the similarities and differences with traditional full
abstraction.

\subsection{Static and Dynamic Compromise}

In this work, the source and target programs we consider can be seen
as \emph{linked components}.
%
Real-world languages have good candidates for such a notion: depending
on the granularity we target, they could be packages, modules, or classes.
%
The compiler we define
must agree with our chosen definition for components:
Source components must be \emph{separately compilable program units},
and compilation should map source-level components to target-level
components.

The attacker we consider can compromise running low-level programs at
component granularity:
We consider that a component as a whole is compromised as soon as a
part of it is.
This models two kinds of concrete settings, which we respectively call
static and dynamic compromise.
%
A static compromise setting is one where a part of the program
directly comes from a potential attacker: this part could be a
distrusted library, or could have been sent over the internet from a
distrusted party.
Hence in the case of a static compromise, some components are given to
the attacker from the very start of the computation.
In a dynamic compromise scenario, however, the attacker gains control
over some components \emph{during} the computation.
For example, the attacker could exploit a vulnerability in an
originally trusted part of a program.
Low-level libraries, in particular, are likely to be vulnerable to
control hijacking attacks~\cite{Szekeres2013,
DBLP:conf/fosad/Erlingsson07} as most of them are still written in C
or C++.

In this work, we call a compiler secure when it ensures that
the knowledge a low-level attacker performing a dynamic compromise can
get about the uncompromised components is exactly the one a high-level
attacker can get by dynamically compromising the corresponding source
components  --- and vice-versa.
Because we assume that any of the components could get compromised,
components cannot trust each other:
We call this property a \emph{secure compilation of mutually
distrustful components}.
It allows modeling low-level compromised components as high-level
compromised ones, which means that all security reasoning can be done
directly in the source level and using the source language's abstractions.

\subsection{Full Abstraction and Beyond}


\begin{figure}
\includegraphics[trim=0.8cm 21.5cm 7.5cm
1cm,clip,width=\columnwidth]{media/statatt.pdf}
\caption{Facing\bcp{Addressing?} dynamic corruption with a static trust barrier}
\label{fig:statatt}
\end{figure}

\begin{figure}
\includegraphics[trim=0.8cm 22.8cm 7.5cm
1cm,clip,width=\columnwidth]{media/dynatt.pdf}
\caption{Facing dynamic corruption with a dynamic trust barrier}
\label{fig:dynatt}
\end{figure}

While full abstraction is the standard definition used for secure
compilation in the literature~\cite{abadi_aslr12,
DBLP:conf/csfw/JagadeesanPRR11,patrignani2014secure,
DBLP:conf/popl/FournetSCDSL13}, the protection barrier it considers is
set \emph{before} compilation:
The partial programs $\ottnt{P}$, $\ottnt{Q}$ to protect are fixed and known
to the compiler, which can take advantage of this knowledge and put
them in a special protected portion of memory:
This is how the\bcp{saying ``the'' here implies that the reader should have
  one in mind or already know which one you are talking about... You could
  instead say something like ``Patrignani et al.'s compiler targeting protected module 
architectures~\cite{patrignani2014secure}'' (though ``compiler targeting protected module 
architectures'' is still kind of a mouthful...'')} secure
compiler targeting protected module 
architectures~\cite{patrignani2014secure} works.

When the protection barrier is static, the components that might get
dynamically compromised should clearly be modeled as distrusted
ones, \IE a part of the context.
Unfortunately, full abstraction gives no guarantee to the components
that are part of the context:
They are on the wrong side of the static trust barrier.
For instance in the work on protected module
architectures~\cite{patrignani2014secure}, these
components do not benefit from the protection mechanism at all.

We illustrate why this can be a problem on figure~\ref{fig:statatt}:
Before compilation, components are split between trusted {\bf C1}, {\bf C3}
and untrusted {\bf C2}, {\bf C4}, {\bf C5}.
The assumption is that the attacker cannot take over trusted
components, and hence can only compromise {\bf C2}, {\bf C4} and {\bf C5}.
However, if some of them do not get compromised, e.g. {\bf C4} or {\bf C2},
there is no guarantee that these distrusted but not compromised
components will benefit from the protection mechanism:
Full abstraction only ensures that the trusted components will be
protected from the distrusted ones.  \bcp{This explanation with the figure
  doesn't say anything more than the first explanation without the figure.
  Also, the figure itself is a bit hard to follow---e.g., I don't understand what
  the first horizontal line means.  Maybe it's better without the figures?}
\ch{I like the figures (sine I can understand them from before),
  for me the text seems completely off}

\ch{The even bigger problem is that one often does not know in advance
  what's trusted and what's untrusted, so where to put the barrier,
  and putting the barrier at the wrong place means that trusted
  components will get compromised and no protection obtained.}

Our notion of secure compilation is an extension to\bcp{of} full abstraction
which accounts for a dynamic trust barrier:
As we illustrate on figure~\ref{fig:dynatt}, the distrusted components
should\ch{who's obligation is this? ``should'' might be the wrong verb
  here}\bcp{I don't understand either.}
always be exactly the compromised ones.
The barrier evolves as the attacker gets control over initially
trusted components, and the components that are not yet controlled by
the attacker must be protected from the compromised ones.
In this setting, the coarsest granularity at which the compiler can
set fixed\bcp{?} protection domains is that of components.
This is precisely how we use micro-policies to provide protection in
this work:
We protect each component from every other component, and call this
approach \emph{mutual distrust}.

Like full abstraction, our secure compilation property would\bcp{will} be
formalized in terms of observable behaviors.
We are currently working on a formal characterization\bcp{it's disappointing
to hear that there isn't one yet---all this introduction seemed like getting us
ready to read one!}, as well as the
study of how proof techniques for full abstraction can be reused in
this setting.
In the case of our compiler, because our scheme provides protection at
the level of components, our intuition is that proving that it
gives a secure compilation of mutually distrustful components will
be just as difficult as proving that it is fully abstract, and that
the proof will rely on similar lemmas.

\bcp{I don't actually understand the distinction you're making between
  static and dynamic compromise.  After all, a (statically) compromised
  component might choose to behave like the original uncompromised version
  for a while, so the static setting seems to include the dynamic one.  The
  key point seems to be that none of the modules trust each other.  But why
  can't this be phrased as just a standard full abstraction property (for
  each module individually, taking the attacker context to be all the other
  modules)?  Or, to put it another way, does your way of formulating the
  situation give rise to any additional principles for reasoning about
  programs, compared to standard full abstraction?}

\fi


\section{Micro-Policies and the PUMP: Efficient Tag-Based Security
Monitors}

\ch{move this as late as possible (putting related
  work so early on is a mistake, see Simon PJ's advice on writing good papers).}
\ch{moving this later probably means splitting it up}

We present \emph{micro-policies}~\cite{pump_asplos2015,
micropolicies2015}, the mechanism we use to
monitor low-level code so as to enforce that our compiler is secure.
Micro-policies~\cite{pump_asplos2015,micropolicies2015}
are a tag-based dynamic protection mechanism for machine code.
The reference implementation on which micro-policies are based is
called the PUMP~\cite{pump_asplos2015} (Programmable Unit for Metadata
Processing).

The PUMP architecture associates each piece of data in the system with
a {\em metadata tag} describing its provenance or purpose (\EG
``this is an instruction,'' ``this came from the network,'' ``this is
secret,'' ``this is sealed with key $k$''), propagates this metadata
as instructions are executed, and checks that policy rules are obeyed
throughout the computation.
It provides great flexibility for defining policies and puts no
arbitrary limitations on the size of the metadata or the number of
policies supported.
Hardware simulations show~\cite{pump_asplos2015} that an Alpha
processor extended with PUMP hardware achieves performance comparable
to dedicated hardware on a standard benchmark suite when enforcing
either memory safety, control-flow integrity, taint tracking, or code
and data separation.
\def\rtover{10}%
\def\eover{60}%
\def\pceil{10}%
\def\aover{110}%
When enforcing these four policies simultaneously, monitoring imposes
modest impact on runtime (typically under \rtover\%) and power ceiling
(less than \pceil\%), in return for some increase in energy usage
(typically under \eover\%) and chip area (\aover\%).

The reference paper on micro-policies~\cite{micropolicies2015}
generalizes previously used methodology~\cite{PicoCoq2013} to provide
a generic framework for formalizing and verifying \emph{arbitrary}
policies enforceable by the PUMP architecture.
In particular, it defines a \emph{generic symbolic machine}, which
abstracts away from low-level hardware details and serves
as an intermediate step in correctness proofs.
This machine is parameterized by a {\em symbolic micro-policy},
provided by the micro-policy designer, that expresses tag propagation
and checking in terms of structured mathematical objects rather than
bit-level concrete representations.
The micro-policies paper also defines a \emph{concrete machine}
which is a model of PUMP-like hardware, this time including implementation
details.

The proposed approach to micro-policy design and verification is
presented as follows.
First, one designs a reference \emph{abstract machine}, which will
serve as a micro-policy specification.
Then, one instantiates the generic symbolic machine with a
symbolic micro-policy and proves that the resulting symbolic
machine \emph{refines}
the abstract machine: the observable behaviors of the symbolic
machine are also legal behaviors of the abstract machine, and in
particular the symbolic machine fail-stops whenever the abstract
machine does.
Finally, the symbolic micro-policy is implemented in low-level terms,
and one proves that the concrete machine running the micro-policy
implementation refines the symbolic machine.

In this work, we use a slightly modified symbolic machine as
the target of our secure compiler.
Our symbolic machine differs from the previous
one~\cite{micropolicies2015} in two ways:
First, its memory is separated into regions, which are addressed by
\emph{symbolic pointers}.
%
Note, however, that our protection does not rely on this separation
but only on the tagging mechanism itself: in particular, all
components can refer to all symbolic pointers, without restriction.
Mapping memory regions to concrete memory locations before executing
the program on a concrete machine would be a main task of
a \emph{loader} --- we leave a complete formalization of loading for
future work.
Second, we extend the micro-policy mechanism
itself, allowing rules to read and write the tags on more components of
the machine state.
We detail these extensions in section \ref{sec:target}, which is
dedicated to our target machine.
%
We also leave for future work the implementation of these additional
tags in the PUMP rules, their formalization in an extended
concrete machine, and the extension of our results to
the concrete level.


\section{Compilation Chain Overview}
\label{sec:overview}




\iflater
In this section, we present our compiler, introduce the various
languages and machines that will be used throughout the document, and
give intuition about the connections between the different parts that
play a role in our solution.
\else
In this section, we present our compiler and give intuition about the
connections between the different parts that play a role in our
solution.
\fi

\yj{Idea: low-level linking = put two partial programs together after
                              checking that their interfaces are
                              compatible;
          low-level loading = tag a complete program to enforce that
                              all components will correctly implement
                              their interfaces at runtime
          \ch{(Old) Really? It does not type-check them!}\yj{Updated.}}\bcp{These are not the standard uses of the
          words...}\yj{I thought that we matched with standard uses
          except that we probably do ``more'' than usual
          linking/loading? What would be the standard uses then?}
\bcp{My usage (which I thought was fairly standard) is: linking =
          combining components, resolving cross-references, and checking
          completeness; loading = relocation and other low-level tasks
          related to creating an executable memory image }

\begin{figure}
\centering
\includegraphics[trim=1.9cm 14.4cm 1.3cm 2.3cm,clip,width=\columnwidth]{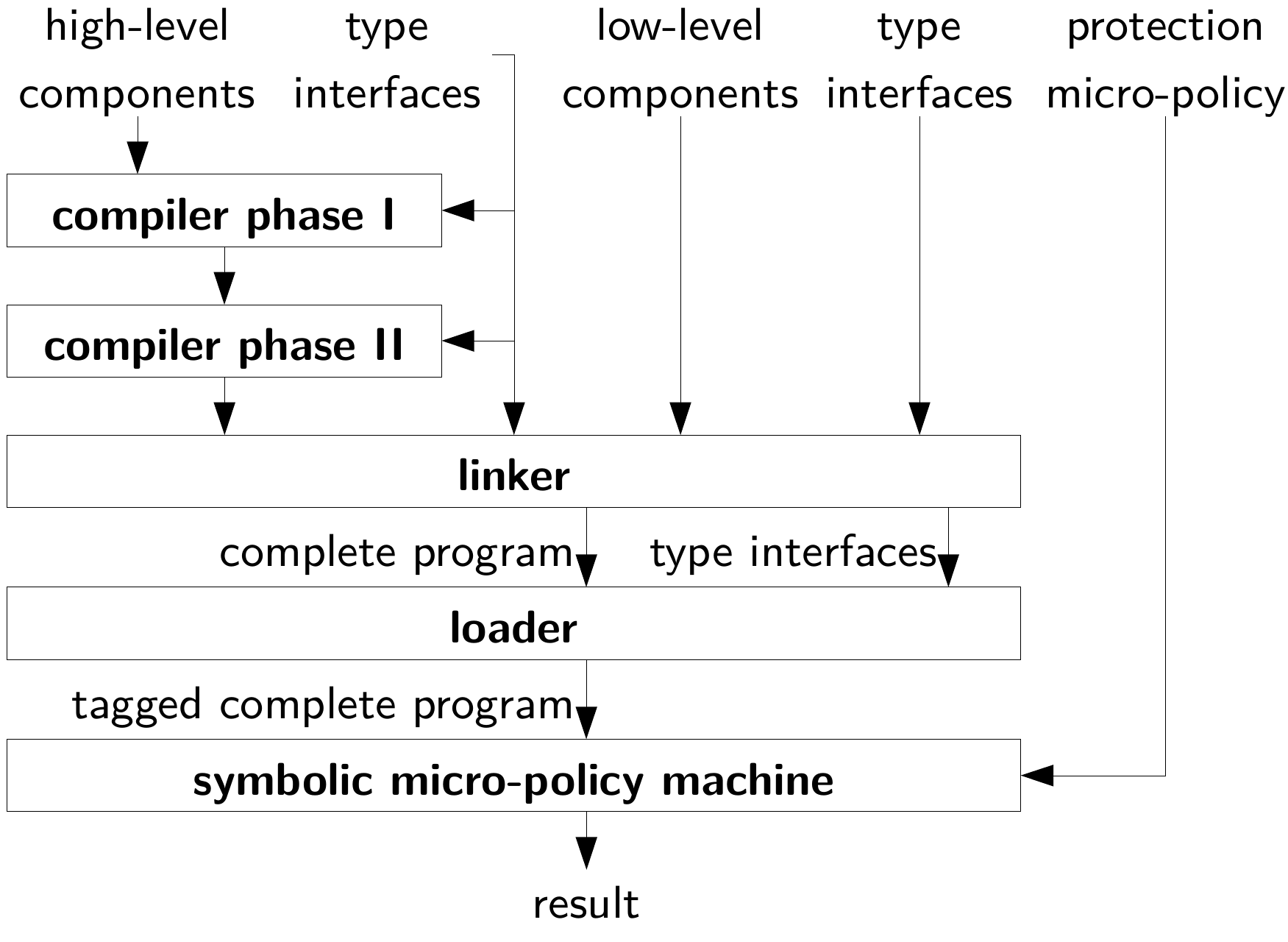}
\caption{Overview of the compilation chain}

\label{fig:chain}
\end{figure}

Our compilation chain, which we present in figure~\ref{fig:chain},
splits into a two-step compiler, a linker and a loader.
It produces a program to execute on the symbolic micro-policy machine.
Our dedicated protection micro-policy will be loaded into the machine,
allowing proper runtime monitoring of the program.

The compilation chain takes source components (\EG a main program and
standard libraries) and target components (\EG low-level libraries) as
input, and outputs a target-executable program.
Components must all come with interface specifications, written in
a common interface language.
These interfaces specify the definitions that each component provides,
and the definitions it expects other components to provide.

In the compilation phase, the compiler first translates source
components to an intermediate representation, which then gets
translated to target components.

In the linking phase, the linker checks that the interfaces of the
components getting linked are compatible.
It also makes sure that all components only give definitions under
names that follow from their interfaces; and symmetrically that they
do provide a definition for each of these names.
Is so, the linker puts them together to form a partial program, and
makes sure that this partial program is actually complete (\IE no
definition is missing).

In the loading phase, the loader builds an initial machine state out
of this complete program by tagging its memory using type information
that was gathered by the linker.
The result is thus a target-level executable program --- \IE a
symbolic machine tagged program.
The loader's tagging will force all components to correctly implement
the interfaces that was provided for them, when we later run and
monitor them using our protection micro-policy:
The machine will failstop as soon as a component violates its
interface upon interaction with another component (violations on
internal computational steps is not harmful and hence allowed).

Because we required that every component should have an interface,
low-level libraries that were compiled using other compilers --- \EG C
compilers --- are only usable in our system if somebody writes
interfaces for them.
In section~\ref{sec:discussion} we discuss more generally the need for
manual or automated \emph{wrapping} of low-level code; once we have a
way to wrap low-level code, providing a matching interface will be
easy.


\iflater
\subsection{A Quick Tour of the Involved Languages}

\yj{Potential additional section with simple examples, TODO for the
moment}
\fi

\section{Languages and Machines}
\label{sec:languages}

In this section, we present and formalize the languages that are used
in the compilation scheme.
We introduce a simple object-oriented language, an abstract stack
machine, and an extended symbolic micro-policy machine with segmented
memory.
The first will be our source language and the last our target machine,
while the intermediate abstract machine will offer a different view
from the source language which makes the connection with the low level
more direct.
The source language includes constructs for specifying the interfaces
of components; these get reused as-is at all levels.

\subsection{Source Level: An Object-Oriented Language}
\label{sec:source}

We first formalize our source language,
beginning with the interface constructs that it offers
and then presenting its syntax and semantics.
The source language we consider is an imperative class-based
object-oriented language with static objects, private fields, and
public methods.
It is inspired by previous formalizations of Java core
calculi~\cite{DBLP:journals/toplas/IgarashiPW01,
DBLP:journals/entcs/BiermanP03}
and Java subsets~\cite{DBLP:conf/esop/JeffreyR05}, with the aim of
keeping things as simple as possible.
As a consequence, we do not handle inheritance nor dynamic allocation,
which are available in all these works.

We start with the simplicity of Featherweight
Java~\cite{DBLP:journals/toplas/IgarashiPW01}, and add imperative
features in the way Middleweight
Java~\cite{DBLP:journals/entcs/BiermanP03} does.
However, we do not add as many imperative features: just branching,
field update and early termination (the latter is not a feature of
Middleweight Java).
The resulting language is similar to Java
Jr.~\cite{DBLP:conf/esop/JeffreyR05} with early termination, but
without packages:
Our granularity for components is that of classes instead.

Example components which encode some usual types are provided in
appendix section~\ref{sec:encoding}, and could help in getting
familiar with this language.

\subsubsection{Interfacing: A Specification Language for Communicating
Classes}

\renewcommand{\ottgrammartabular}[1]{\medskip\par{\raggedright #1}\par\medskip}
\renewcommand{\ottrulehead}[3]{{$#1$  $#2$ \hfill {#3}} \\ }
\renewcommand{\ottprodline}[6]{{}\ ${}#1$\mbox{\ {}$#2$}{}}
\renewcommand{\ottfirstprodline}[6]{{}\ ${}$\mbox{\ {}$#2$}{}}
\renewcommand{\ottprodnewline}{}
\renewcommand{\ottinterrule}{\\[1em]}
\renewcommand{\ottnt}[1]{ { \mathit{#1} } }
\renewcommand{\ottmv}[1]{ { \mathit{#1} } }

\begin{figure}
\ottgrammartabular{
\ottIDT\ottinterrule
\ottEDT\ottinterrule
\ottDT\ottinterrule
\ottCDT\ottinterrule
\ottCD\ottinterrule
\ottMD\ottinterrule
\ottODT\ottinterrule
\ottOD\ottafterlastrule
}
\caption{Interface language syntax\bcp{The declarations are in a kind of
    random order.  Are $c$ and $o$ identifiers?}\yj{Put the text on
    $\mathit{c}$ and $\mathit{o}$ earlier in the accompanying text. As for the
    order of declarations I'm not sure which one would be less random
    than the current one, which make senses for me.}}
\label{fig:interfacingsyntax}
\end{figure}

\iflater
\ch{Try to explain this better, maybe with an example?}
\yj{Yes... If I have enough time I'll add some...}
\fi

The notion of component in this work is that of a class $\mathit{c}$
together with all its object instances' definitions.
%
Because we have no dynamic allocation, for the moment these instances
are simply all the static objects defined with type $\mathit{c}$.
To allow interaction between components while being able to separately
compile them, we have a simple interface syntax based on import and
export declarations.
This interface language gives the external view on source components
and is presented in figure~\ref{fig:interfacingsyntax}.

\paragraph{Syntax and Naming Conventions}

Object names $\mathit{o}$ and class names $\mathit{c}$ are global and assumed to
be arbitrary natural numbers.
%
%
The two main syntactic constructs in the interface language are
class declarations and static object declarations.
Class declarations specify public methods with argument and result
types without providing their body; no fields are declared in an
interface because we only consider private fields in our languages.
Static object declarations specify an object to be an instance of a
given class, without providing the values of its fields.

The interface of a partial program at all levels is composed of an
import declaration table $\ottnt{IDT}$ specifying the class and object
definitions it expects other components to provide, and an export
declaration table $\ottnt{EDT}$ which declares the definitions that this
partial program provides.
Export and import declaration tables share common syntax and are
composed of class and object declarations.
The type of the declared objects must come from the classes specified
by the partial program: defining object instances of classes coming
from other components is not allowed.
Intuitively, our object constructors (and fields) are private and the
only way to interact with objects is via method calls.

In contrast with objects and classes to which we refer using global
names, methods are class-wise indexed: the methods $\ottmv{m}$ of a class
$\mathit{c}$ are referred to as $1, \dots, k$ following the order of their
declarations.
(The same goes for fields $\ottmv{f}$, below.)
The syntax we consider for names can be thought of as coming out of a
parser, that would take a user-friendly Java-like syntax and perform
simple transformations so that the names match our conventions.

\paragraph{Use at Linking and Loading}


We have presented the roles of the linker and the loader when we
introduced the compilation chain in section~\ref{sec:overview}:
%
%
\iflater
In order to study our compiler formally\yj{Actually we don't need to
study the compiler formally in this document, so let's forget linking
and loading in source and intermediate levels.}, it is useful to
define linking and loading not only for target programs, but also for
source and intermediate programs.
We thus more generally define linking as an operation that takes two
partial programs living in the same level (either two source, two
intermediate, or two target programs) and their interfaces; and yields
a new partial program which contains all definitions from
both partial programs, with a new matching interface.
We also more generally define loading as an operation that takes a
complete program in a given level (either a source, an intermediate,
or a target program), and yields a term that can be reduced using
the semantics for that level:
either a high-level configuration, an intermediate machine state, or a
target machine state.
\else
We can model linking as an operation that takes two target partial
programs and their interfaces, and yields a new partial program which
contains all definitions from both partial programs, with a new
matching interface.
Loading then takes a complete target program and tags it, yielding a
machine state that can be reduced using the semantics of our symbolic
micro-policy machine.
\fi
Let us now explain how interfaces are used at linking and loading.

A class import declaration gives the precise type signatures that the
partial program expects from the methods of the corresponding class:
When linking against a partial program that defines this class, the
class export declaration should exactly match with the import one.
Similarly, an import object declaration gives the expected type for
this object, and should match with the corresponding export
declaration when linking against a partial program that defines
it.

Two partial programs have compatible interfaces when (1) they
don't both have export declarations for the same class nor the same
object, and (2) for every name in an import declaration of any of the
two programs, if the other program happens to have an export
declaration for this name, then the import and export declarations are
syntactically equal.
Linking two partial programs with compatible
interfaces yields a new partial program with updated import/export
declarations:
Export declarations are combined, and import declarations that
found matching export declarations are removed.
When all partial programs have been linked together, the linker
can check that there are no remaining import declarations to make sure
that the program is complete.

Finally, the loader will make use of the export declarations to ensure
that programs comply with the export declarations they declared:
\iflater
In the typed source and intermediate languages, this is done by means
of static type checking.
\fi
In the untyped target language, the loader sets the initial memory tags
in accordance with the export declarations, which will allow our
micro-policy to perform dynamic type checking.
%
This will be further explained in section \ref{sec:typesafety}.
%

\subsubsection{Source Syntax and Semantics}


\begin{figure}
\ottgrammartabular{
\ottSP\ottinterrule
\ottT\ottinterrule
\ottCT\ottinterrule
\ottC\ottinterrule
\ottM\ottinterrule
\otte\ottinterrule
\ottOT\ottinterrule
\ottO\ottafterlastrule
}
\caption{Source language syntax}
\label{fig:sourcesyntax}
\end{figure}

The syntax of our source language is presented in
figure \ref{fig:sourcesyntax}.
The two main syntactic constructs in this language are class
definitions and static object definitions.
Class definitions declare typed private fields and define public
methods with argument and result types as well as an expression which
serves as a method body.
Static object definitions define instances of defined classes by
providing well-typed values for the fields.
For simplicity, methods only take one argument: this does not affect
expressiveness because our language is expressive enough to encode
tuple types
(appendix section~\ref{sec:encoding} shows examples that encode
more complex types than tuple types).

Most expressions are not surprising for an object-oriented language:
apart from static object references $\mathit{o}$ and variables ($\ottkw{this}$
for the current object and $\ottkw{arg}$ for the method's argument), we
have support for selecting private fields, calling public methods, and
testing object identities for equality.
The language also features field update and early termination.
Both are crucial for modeling realistic low-level attackers in our
high-level language:
Low-level attackers can indeed keep information between calls using
the memory and stop the machine whenever they have control.
We thus add primitives that can model this in the high-level:
field updates enable methods to have state (methods are not pure
functions anymore), and early termination allows an attacker to
prematurely end the program.

Like we already mentioned, fields are private and methods are public.
This means that in the method body of a given class, the only objects
whose fields may be accessed are the instances of that specific class.
The only way to interact with object instances of other classes is to
perform a method call.

The only values in the language are object names, and the only types
are class names.
The language comes with a type system that ensures that object
and method definitions match with the types that were
declared for them.
Our language does not feature dynamic allocation, inheritance, or
exceptions.
We hope to add some of these features in the future.
Loops are simulated using recursive method calls and
branching is done via object identity tests.

The semantics of the source language is standard and is given
in appendix~\ref{sec:sourcesem}.





\subsection{Intermediate Level: An Object-Oriented Stack Machine}

\begin{figure}
\ottgrammartabular{
\ottIP\ottinterrule
\ottICT\ottinterrule
\ottIC\ottinterrule
\ottIM\ottinterrule
\ottIcode\ottinterrule
\ottIinstr\ottinterrule
\ottLOT\ottinterrule
\ottLO\ottinterrule
\ottLS\ottafterlastrule
}
\caption{Intermediate language syntax}
\label{fig:intermsyntax}
\end{figure}

Our intermediate machine is an object-oriented stack machine with one
local stack per class.
The syntax for intermediate machine programs is presented in
figure~\ref{fig:intermsyntax}.
%
The main syntactic construct is that of a compartment, which is the
notion of component at this level.
A compartment combines a class definition with all the object
instances of this class and with a local stack.

The main difference with respect to the source language is that instead of
expressions, method bodies are expressed as sequences of instructions
in abstract machine code.
These instructions manipulate values stored on the local stack
associated with each compartment.

$\ottkw{Nop}$ does nothing.
$\ottkw{This}$, $\ottkw{Arg}$ and $\ottkw{Ref} \, \mathit{o}$ put an object on the local stack
--- the current object for $\ottkw{This}$, the current method argument for
$\ottkw{Arg}$, and
object $\mathit{o}$ for $\ottkw{Ref} \, \mathit{o}$.
$\ottkw{Sel} \, \ottmv{f}$ pops an object from the stack, selects field $\ottmv{f}$ of
this object and pushes the selected value back to the stack.
$\ottkw{Upd} \, \ottmv{f}$ pops a value and an object, sets the $\ottmv{f}$ field of
the object to the popped value, then pushes back this value on the
stack.
$\ottkw{Call} \, \mathit{c} \, \ottmv{m}$ pops an argument value and a target object $\mathit{o}$,
performs a call of the object $\mathit{o}$'s $\ottmv{m}$ method with the popped
argument if $\mathit{o}$ has type $\mathit{c}$ (otherwise, the machine
failstops).
%
The callee can then use the $\ottkw{Ret}$ instruction to give control
back to the caller: this instruction pops a value from the callee's
stack and pushes it on the caller's stack, as a result value
for the call.
$\ottkw{Skip} \, \mathit{n}$ skips the $\mathit{n}$ next instructions.
%
$\ottkw{Skeq} \, \mathit{n}$ pops two values from the local stack and either skips the
$\mathit{n}$ next instructions if the values are equal, or does nothing
more if they are different.
$\ottkw{Drop}$ removes the top value from the stack.
$\ottkw{Halt}$ halts the machine immediately, the result of the
computation being the current top value on the stack.

The purpose of this intermediate language is to ease going
low-level.
In particular, its syntax with the $\ottkw{Call}$ instruction being
annotated with a class makes explicit the fact that method calls are
statically resolved by the source to intermediate compiler.
This is possible in our source language, because we have no
inheritance.

\iflater
Intermediate level programs are typed with a type system similar to
the source-level one.\yj{I know this is not what we initially
said/wanted but it seems easier at least for understanding to have a
type system in this level: otherwise it's not obvious at all that
the intermediate machine is not more lax than the low-level one once
it's equipped with a micro-policy (which would be bad for an
  intermediate machine!)}\ch{Still don't understand the problem with
  dynamic checking, and I must warn you that statically type-checking
  bytecode is not an easy or fully precise thing:
  \url{http://gallium.inria.fr/~xleroy/publi/bytecode-verification-JAR.pdf}.
  Are you going to prove that all well-typed source programs
  are still going to be accepted by the bytecode verification?}\yj{
  My idea was to do static typing that will only accept a subset of
  the intermediate programs that could be dynamically well-typed, and
  only consider these components at this level. Of course this subset
  includes compiled components: it's the programs whose methods use
  the local stack in a ``standard'' way. I can later explain what I
  mean by this but for the moment it's not important whether typing is
  static or dynamic in this language, actually: it will only be
  relevant when we want to consider intermediate contexts as
  attackers, i.e. for a full abstraction proof between the source and
  intermediate level.
}
The only type information available is that of import and export
declarations.
Object type checking is done by verifying that all objects having in a
given compartment agree on the types and number of the fields,
which gives type information for that compartment's fields.
Method type checking is done by propagating this gathered type
information and the one from the import and export declarations, which
allows knowing the types of the objects that are on the local stack at
every point of an intermediate method body:
Every $\ottkw{Ret}$ instruction should be done with exactly one object
of the appropriate type put on the local stack.
\fi

\subsection{Target Level: An Extended Symbolic Micro-Policy Machine}
\label{sec:target}


Here, we present our the target of our compiler: an extended symbolic
micro-policy machine with segmented memory.
We first recall the common basis for our machine and the symbolic
machine presented in the original micro-policies paper~\cite{micropolicies2015},
then present and comment on the differences between them.

\subsubsection{Symbolic Micro-Policy Machine}

\begin{figure}
\ottgrammartabular{
\ottmem\ottinterrule
\ottloc\ottinterrule
\ottR\ottinterrule
\ottword\ottinterrule
\ottinstr\ottinterrule
\ottimm\ottafterlastrule
}
\caption{Symbolic machine memory}
\label{fig:symbmem}
\end{figure}

\begin{figure}
\ottgrammartabular{
\ottLP\ottinterrule
\ottLPmem\ottinterrule
\ottLPR\ottinterrule
\ottimm\ottafterlastrule
}
\caption{Symbolic machine program syntax}
\label{fig:symbprog}
\end{figure}

A symbolic micro-policy machine~\cite{micropolicies2015} is an
executable model of micro-policies that abstracts away from some of
the implementation details (e.g. the implementation of the
micro-policy monitor in machine code).
The definition of a symbolic micro-policy machine is abstracted over
a \emph{symbolic micro-policy}.

In our case, a symbolic micro-policy is defined as a collection
of \emph{symbolic tags}, which are used to label instructions and
data, and a \emph{transfer function}, which is invoked on each step of
the machine to propagate tags from the current machine state to the
next one.
We ignore \emph{monitor services} of Azevedo de Amorim et
al.~\cite{micropolicies2015} and \emph{extra pieces of state}
which are only used by monitor services --- because we don't need
them: we successfully avoid monitor services, which in the context of
micro-policies are much more expensive than rules.
The transfer function is a mathematical function that defines the
micro-policy rules --- in the mechanized metatheory of the original
micro-policies paper this function is written in Gallina, the
purely-functional language at the heart of the Coq proof assistant.

A machine state of our symbolic micro-policy machine is composed of a
memory, a register file of general-purpose registers, and a program
counter register $\ottnt{pc}$.
The program counter register points to a location in memory which
contains the next instruction to execute.

We present the list of instructions in figure~\ref{fig:symbmem}, together with
other memory-related definitions on which we will focus when we
later explain the segmented memory.
These instructions are those of the machine code of the original
micro-policies paper~\cite{micropolicies2015}:
$\ottkw{Nop}$ does nothing.
$\ottkw{Const} \, i \, {r_{\textsf{d} } }$ puts an immediate constant $i$ into register
${r_{\textsf{d} } }$.
$\ottkw{Mov} \, {r_{\textsf{s} } } \, {r_{\textsf{d} } }$ copies the contents of ${r_{\textsf{s} } }$ into
${r_{\textsf{d} } }$.
$ { \ottkw{Binop} _{ \ottnt{op} } }  \mathit{r_{{\mathrm{1}}}}   \mathit{r_{{\mathrm{2}}}}   {r_{\textsf{d} } } $ performs a binary operation $\ottnt{op}$ (\EG
addition, subtraction, equality test) on the content of registers
$\mathit{r_{{\mathrm{1}}}}$ and $\mathit{r_{{\mathrm{2}}}}$, and puts the result in register ${r_{\textsf{d} } }$.
$\ottkw{Load} \, {r_{\textsf{p} } } \, {r_{\textsf{s} } }$ copies the content of the memory cell at the
memory location stored in ${r_{\textsf{p} } }$ to ${r_{\textsf{s} } }$.
$\ottkw{Store} \, {r_{\textsf{p} } } \, {r_{\textsf{s} } }$ copies the content of ${r_{\textsf{s} } }$ to the memory
cell at the memory location stored in ${r_{\textsf{p} } }$.
$\ottkw{Jump}$ and $\ottkw{Jal}$ (jump-and-link) are unconditional
indirect jumps, while $\ottkw{Bnz} \, \mathit{r} \, i$ branches to a
fixed offset $imm$ (relative to the current pc) if register $\mathit{r}$ is
nonzero.
$\ottkw{Halt}$ halts the machine immediately.

In the following, we denote $ { \ottkw{Binop} _{ \ottsym{+} } }  \mathit{r_{{\mathrm{1}}}}   \mathit{r_{{\mathrm{2}}}}   {r_{\textsf{d} } } $ (addition) by
$\ottkw{Add} \, \mathit{r_{{\mathrm{1}}}} \, \mathit{r_{{\mathrm{2}}}} \, {r_{\textsf{d} } }$, and $ { \ottkw{Binop} _{ \ottsym{-} } }  \mathit{r_{{\mathrm{1}}}}   \mathit{r_{{\mathrm{2}}}}   {r_{\textsf{d} } } $ (subtraction) by
$\ottkw{Sub} \, \mathit{r_{{\mathrm{1}}}} \, \mathit{r_{{\mathrm{2}}}} \, {r_{\textsf{d} } }$.

\subsubsection{Extensions to Monitoring Mechanism}

We consider a more powerful symbolic micro-policy machine that allows
the transfer function to inspect and update more tags.

First, we assume that the transfer function produces new tags for the
input arguments of the instructions; not only for the output one.
This is required, for instance, to transfer a linear capability from an input
to an output: one wants not only to copy the capability in the output tag,
but also to erase it from the input tag.

Second, we assume that there are some fixed registers whose
tags can always be checked and updated by the transfer function, even
if the registers are neither input nor output to the current
instruction.
This allows us to efficiently clean these fixed registers upon
\chrev{calls and returns}.

Third, we assume that the transfer function receives as an argument
not only the tag of the current instruction, but also the tag on the
{\em next} instruction.
For instance, when monitoring a $\ottkw{Jump}$ instruction, we assume the
tag on the targeted location can be checked.
This extension allows us to write and explain our micro-policy in a
much simpler way.

The first two extensions require extra hardware support.
%
%
For the last extension, however, we conjecture that our micro-policy
--- and probably other similar micro-policies --- can be
transformed into a policy which doesn't have to perform next
instruction checks.
This kind of translation has already been done by hand,
for example in a previous compartmentalization
micro-policy~\cite{micropolicies2015}: the next
instruction checks are replaced by current instruction checks
happening on the next step, making the machine failstop one step later
in the case of an error.
We leave for future work the writing of a policy compiler
doing such a transformation automatically.

\subsubsection{Easing Component-Oriented Reasoning: Segmented Memory}


Instead of having a monolithic word-addressed memory, the machine we
consider has a segmented memory which consists of several memory
regions which are addressed by \emph{symbolic locations}.
Targeting such a machine allows for easy separate compilation, and
is a pragmatic intermediate step when going towards real machine code,
which we plan to do in the future.

As presented in figure~\ref{fig:symbmem}, the definition of memory
directly mentions symbolic locations.
A generic symbolic machine definition would be abstracted over the
definition of symbolic locations, but
in our case, we define them to be either method, object, or stack
locations for reasons that will be clear when we present our
compiler in section~\ref{sec:i2t}.
Our instantiation of memory tags ${t_{\textsf{mem} } }$ will be studied with other
definitions related to the symbolic micro-policy, in
section~\ref{sec:micro-policy}.

Immediate constants and words in the symbolic machine are extended to
include symbolic locations with an offset, which are memory addresses:
The $\ottmv{k}$ memory cells of a region located at $\ottnt{loc}$ are
addressed from $\ottnt{loc}  \ottsym{+}  \ottsym{0}$ to $\ottnt{loc}  \ottsym{+}  \ottsym{(}  \ottmv{k}  \ottsym{-}  \ottsym{1}  \ottsym{)}$.
In the following, we use the simpler notation
$\ottnt{loc}$ for $\ottnt{loc}  \ottsym{+}  \ottsym{0}$.

Words are also extended to include a new $\ottkw{encode} \, \ottnt{instr}$ construct:
Decoding instructions in this machine is a symbolic operation of
deconstructing such a word.
Now that instructions feature symbolic locations with an offset as
immediate values, it would indeed have no practical meaning to try to
extend the encoding of instructions to this.
When we use the PUMP in future work, some of the symbolic-level
instructions could have to be compiled to sequences of PUMP
instructions:
PUMP memory addresses, the PUMP equivalent of symbolic locations with
an offset, are word-sized and thus too big to fit in a PUMP immediate
value.
Another solution would be to restrict addressable memory to less than
a full word; the symbolic encoding allows us to delay the choice
between these solutions.

Tags are not affected by all the changes that we listed, hence the
monitoring mechanism isn't affected either.
The semantics, however, is affected in the following way:
Trying to decode an instruction out of a regular word or of a symbolic
location with an offset failstops the machine.
All instructions failstop when one of their operands has the form
$\ottkw{encode} \, \ottnt{instr}$.
A $\ottkw{Jump}$, $\ottkw{Jal}$, $\ottkw{Load}$ or $\ottkw{Store}$ instruction used
with a regular word for the pointer failstops the machine.
These instructions keep their usual behavior when provided with a
symbolic location and a valid offset; if it does not correspond to a
valid memory address, however, the machine failstops.
Most binary operations failstop when the first or second
operand is a symbolic location with an offset:
exceptions are
1) addition and subtraction with regular words, when the first operand
is the location with an offset, which respectively increment and
decrement the offset accordingly;
2) equality tests between symbolic locations with offsets.
Finally, the $\ottkw{Bnz}$ instruction failstops when the
provided register or immediate value is a symbolic location with an
offset.


The syntax for symbolic machine programs is presented in figure~\ref{fig:symbprog}.
They define a memory which is like the one of the symbolic machine,
except that cells are not tagged:
The tags for the memory will be provided by the low-level loader,
which will be detailed in the next section.


\section{Our Solution: Protecting Compiled Code with a Micro-Policy}
\label{sec:system}

In this section, we present our solution for the secure compilation of
mutually distrustful components: 
first we describe our simple two-step compiler, then we present
our micro-policy dynamically protecting components from each other.

\subsection{Two-Step Compilation}
\label{sec:compiler}

We start with our two-step compilation: first the compilation of
source programs to intermediate machine programs, then the one of
intermediate machine programs to target machine programs.

\subsubsection{From Source to Intermediate}


Our type-preserving source to intermediate compiler is a mostly direct
translation of source expressions to abstract machine instructions,
which gives a lower-level view on source components.
Nothing in the translation is related to security: we provide security
at this level by giving appropriate semantics to intermediate-level
instructions, which make them manipulate local stacks and local object
tables rather than a single global stack and a single object table.
In this translation, we statically resolves method calls, which is
possible because our language doesn't feature inheritance.

A high-level component is easily mapped to an intermediate
compartment:
Method bodies are compiled one by one to become intermediate-level
method bodies.
Object definitions corresponding to that component are taken from
the global object table $\ottnt{OT}$ and put in the local object table
$\ottnt{LOT}$ of the compartment.
Finally, the stack is initialized to an empty stack.

\paragraph{Compiling Source Expressions to Stack Machine Code}

\begin{figure}
\begin{align*}
 \mathscr{A} ( \ottkw{this} )  =\;& \ottkw{This} \\
 \mathscr{A} ( \ottkw{arg} )  =\;& \ottkw{Arg} \\
 \mathscr{A} ( \mathit{o} )  =\;& \ottkw{Ref} \, \mathit{o} \\
 \mathscr{A} ( \ottnt{e}  \ottsym{.}  \ottmv{f} )  =\;&
    \mathscr{A} ( \ottnt{e} )   ;  \ottkw{Sel} \, \ottmv{f}  \\
 \mathscr{A} ( \ottnt{e}  \ottsym{.}  \ottmv{f}  \ottsym{:=}  \ottnt{e'} )  =\;&
     \mathscr{A} ( \ottnt{e} )   ;   \mathscr{A} ( \ottnt{e'} )    ;  \ottkw{Upd} \, \ottmv{f}  \\
 \mathscr{A} ( \ottnt{e}  \ottsym{.}  \ottmv{m'}  \ottsym{(}  \ottnt{e'}  \ottsym{)} )  =\;&
     \mathscr{A} ( \ottnt{e} )   ;   \mathscr{A} ( \ottnt{e'} )    ;  \ottkw{Call} \, \mathit{c'} \, \ottmv{m'}  \\
  & \mbox{where } c' \mbox{ satisfying} \\
  & \mbox{~~} \ottnt{P}  \ottsym{;}  \mathit{c}  \ottsym{,}  \ottmv{m}  \vdash  \ottnt{e}  \ottsym{:}  \mathit{c'} \\
  & \mbox{is found by type inference} \\
 \mathscr{A} ( \ottnt{e_{{\mathrm{1}}}}  \ottsym{==}  \ottnt{e_{{\mathrm{2}}}}  \;?\;  \ottnt{e_{{\mathrm{3}}}}  \ottsym{:}  \ottnt{e_{{\mathrm{4}}}} )  =\;&
    \mathscr{A} ( \ottnt{e_{{\mathrm{1}}}} )   ;   \mathscr{A} ( \ottnt{e_{{\mathrm{2}}}} )   ; \\
  & \ottkw{Skeq} \, \ottsym{(}  \ottsym{\mbox{$\mid$}}   \mathscr{A} ( \ottnt{e_{{\mathrm{4}}}} )   \ottsym{\mbox{$\mid$}}  \ottsym{+}  \ottsym{1}  \ottsym{)} ; \\
  &   \mathscr{A} ( \ottnt{e_{{\mathrm{4}}}} )   ;  \ottkw{Skip} \, \ottsym{\mbox{$\mid$}}   \mathscr{A} ( \ottnt{e_{{\mathrm{3}}}} )   \ottsym{\mbox{$\mid$}}  ; \\
  &  \mathscr{A} ( \ottnt{e_{{\mathrm{3}}}} )  ; \\
  & \ottkw{Nop} \\
 \mathscr{A} ( \ottnt{e}  \ottsym{;}  \ottnt{e'} )  =\;&
     \mathscr{A} ( \ottnt{e} )   ;  \ottkw{Drop}   ;   \mathscr{A} ( \ottnt{e'} )   \\
 \mathscr{A} ( \ottkw{exit} \, \ottnt{e} )  =\;&
    \mathscr{A} ( \ottnt{e} )   ;  \ottkw{Halt} 
\end{align*}
\caption{Compiling source expressions to
intermediate machine code}
\label{fig:s2i}
\end{figure}

Assuming that method $\ottmv{m}$ of class $\mathit{c}$ in program $\ottnt{P}$
has definition ${c_{\textsf{r} } }  \ottsym{(}  {c_{\textsf{a} } }  \ottsym{)}  \ottsym{\{}  \ottnt{e}  \ottsym{\}}$, compilation is defined as follows:
\[
 \mathscr{C} ( \ottnt{P} ,  \mathit{c} , \ottmv{m} )  =\;   \mathscr{A} ( \ottnt{e} )   ;  \ottkw{Ret} 
\]

The $ \mathscr{A} $ function is recursively defined as presented
in figure \ref{fig:s2i}; we allow ourselves to refer to $\ottnt{P}$,
$\mathit{c}$ and $\ottmv{m}$ in this definition.
We denote by $\ottnt{P}  \ottsym{;}  \mathit{c}  \ottsym{,}  \ottmv{m}  \vdash  \ottnt{e}  \ottsym{:}  \mathit{c'}$ the predicate indicating that
expression $\ottnt{e}$ has type $\mathit{c'}$ when typed within method $\ottmv{m}$
of class $\mathit{c}$ from program $P$.
In the compilation of method calls, we assume a type inference
algorithm which, given $\ottnt{P}$, $\mathit{c}$ and $\ottmv{m}$, finds the
unique type $\mathit{c'}$ such that $\ottnt{P}  \ottsym{;}  \mathit{c}  \ottsym{,}  \ottmv{m}  \vdash  \ottnt{e}  \ottsym{:}  \mathit{c'}$.
We use it to statically resolve method calls by annotating
intermediate-level call instructions.
In this document, we do not present the type system nor the
type inference algorithm for our source language, which are
standard.

The invariant used by the compilation is that executing
$ \mathscr{A} ( \ottnt{e} ) $ in the intermediate level will either
diverge---when evaluating $\ottnt{e}$ in the high-level diverges--- or
terminate with exactly one extra object on the local stack---in which
case this object is exactly the result of evaluating $\ottnt{e}$.
In a method body, this object on top of the stack can then be returned
as the result of the method call, which is why $ \mathscr{A} ( \ottnt{e} ) $ is
followed by a $\ottkw{Ret}$ instruction in the main compilation
function $ \mathscr{C} $.

With this invariant in mind, the translation is rather
straightforward, which is not surprising since our abstract stack
machine was designed with this goal in mind.
An important point is that we keep the evaluation order of the source
language: left to right.
It matters because of side effects and early termination being
available in our language.

Let us explain the only non-trivial expression to compile: the
object identity test $\ottsym{(}  \ottnt{e_{{\mathrm{1}}}}  \ottsym{==}  \ottnt{e_{{\mathrm{2}}}}  \;?\;  \ottnt{e_{{\mathrm{3}}}}  \ottsym{:}  \ottnt{e_{{\mathrm{4}}}}  \ottsym{)}$, for which we use two
branching instructions $\ottkw{Skeq} \, \ottsym{(}  \ottsym{\mbox{$\mid$}}   \mathscr{A} ( \ottnt{e_{{\mathrm{4}}}} )   \ottsym{\mbox{$\mid$}}  \ottsym{+}  \ottsym{1}  \ottsym{)}$ and
$\ottkw{Skip} \, \ottsym{\mbox{$\mid$}}   \mathscr{A} ( \ottnt{e_{{\mathrm{3}}}} )   \ottsym{\mbox{$\mid$}}$.
Here, we denote by $\ottsym{\mbox{$\mid$}}   \mathscr{A} ( \ottnt{e} )   \ottsym{\mbox{$\mid$}}$ the length
of the sequence of instructions $ \mathscr{A} ( \ottnt{e} ) $.
With equal objects, executing $\ottkw{Skeq} \, \ottsym{(}  \ottsym{\mbox{$\mid$}}   \mathscr{A} ( \ottnt{e_{{\mathrm{4}}}} )   \ottsym{\mbox{$\mid$}}  \ottsym{+}  \ottsym{1}  \ottsym{)}$
will skip the code corresponding to $\ottnt{e_{{\mathrm{4}}}}$ and the
$\ottkw{Skip} \, \ottsym{\mbox{$\mid$}}   \mathscr{A} ( \ottnt{e_{{\mathrm{3}}}} )   \ottsym{\mbox{$\mid$}}$ instruction, hence branching
directly to the code corresponding to $\ottnt{e_{{\mathrm{3}}}}$ to execute it.
With different objects, executing
$\ottkw{Skeq} \, \ottsym{(}  \ottsym{\mbox{$\mid$}}   \mathscr{A} ( \ottnt{e_{{\mathrm{4}}}} )   \ottsym{\mbox{$\mid$}}  \ottsym{+}  \ottsym{1}  \ottsym{)}$ will do nothing and execution
will proceed with the code corresponding to
$\ottnt{e_{{\mathrm{4}}}}$, followed by a $\ottkw{Skip} \, \ottsym{\mbox{$\mid$}}   \mathscr{A} ( \ottnt{e_{{\mathrm{3}}}} )   \ottsym{\mbox{$\mid$}}$ instruction
which will unconditionally skip the code corresponding to
$\ottnt{e_{{\mathrm{3}}}}$, hence branching to the $\ottkw{Nop}$ instruction.
The effect of all this is that in the end,
we have executed $\ottnt{e_{{\mathrm{1}}}}$ and $\ottnt{e_{{\mathrm{2}}}}$ in this
order, popped the resulting objects from the stack, and either
executed $\ottnt{e_{{\mathrm{3}}}}$ if they were equal or $\ottnt{e_{{\mathrm{4}}}}$ if they were
different:
We execute the appropriate code in both cases.

\subsubsection{From Intermediate to Target}
\label{sec:i2t}


We now present our unoptimizing, type-preserving translation from
intermediate-machine compartments to target-level components.
Target-level compartments are defined as sets of untagged symbolic
machine memory regions.
Like in the source to intermediate compilation, the translation itself
is rather standard.
The exception is that components cannot trust other components to
follow conventions such as not messing with a global call stack, or
not modifying some registers.
As a consequence, components use local stacks and all relevant
registers need to be (re)initialized when the compiled component takes
control.
Other than that, all security is enforced by means of
instruction-level monitoring (\autoref{sec:micro-policy}).

\paragraph{Object Compilation}

Each object $\mathit{o}$ that was assigned a definition
$\ottsym{(}  \mathit{o_{{\mathrm{1}}}}  \ottsym{,} \, ... \, \ottsym{,}  \mathit{o_{\ottmv{l}}}  \ottsym{)}$ now gets its own region in target memory.
This region is assigned symbolic location $\ottkw{objl} \, \mathit{o}$ and spans over
$\ottmv{l}$ memory cells.
These cells are filled with the $\ottkw{objl} \, \mathit{o_{{\mathrm{1}}}}, \dots, \ottkw{objl} \, \mathit{o_{\ottmv{l}}}$
symbolic locations --- which are the addresses of these objects in
memory.
%

\paragraph{Local Stack Compilation}


Each local stack also gets its own memory region, under symbolic
location $\ottkw{stackl} \, \mathit{c}$ where $\mathit{c}$ is the name of the compartment
being compiled.
Components will maintain the following invariant during computation:
The first cell holds a symbolic pointer to the top of the stack, which
is $\ottsym{(}  \ottkw{stackl} \, \mathit{c}  \ottsym{)}  \ottsym{+}  \ottmv{l}$ where $\ottmv{l}$ is the length of the stack.
The following cells are used for storing actual values on the stack.

Here, we only care about compiling intermediate-level components that
come from the source to intermediate compiler.
For these components, the initial stack is always empty.
Hence, we just initialize the first cell to $\ottkw{stackl} \, \mathit{c}$, and the
initial content of extra cells can be arbitrary constants: their only
purpose is to be available for filling during the computation.

\paragraph{Method Compilation: From Coarse- to Fine-Grained Instructions}

\begin{figure}
\centering
\begin{align*}
&  \mathscr{A} ( \mathit{c} , \ottkw{Call} \, \mathit{c'} \, \ottmv{m}  \ottsym{;}  \ottnt{Icode} )  =\; \\
&~~ \text{\texttt{(* pop call argument and object *)}} \\
&~~\ottkw{Load} \, {r_{\textsf{sp} } } \, { {r_{\textsf{aux} } }_2 }  \ottsym{;}  \ottkw{Sub} \, {r_{\textsf{sp} } } \, {r_{\textsf{one} } } \, {r_{\textsf{sp} } }  \ottsym{;}  \ottkw{Load} \, {r_{\textsf{sp} } } \, { {r_{\textsf{aux} } }_1 }; \\
&~~ \text{\texttt{(* push current object and argument *)}} \\
&~~\ottkw{Store} \, {r_{\textsf{sp} } } \, {r_{\textsf{tgt} } }  \ottsym{;}  \ottkw{Add} \, {r_{\textsf{sp} } } \, {r_{\textsf{one} } } \, {r_{\textsf{sp} } }  \ottsym{;}  \ottkw{Store} \, {r_{\textsf{sp} } } \, {r_{\textsf{arg} } }; \\
&~~ \text{\texttt{(* save stack pointer *)}} \\
&~~\ottkw{Store} \, {r_{\textsf{spp} } } \, {r_{\textsf{sp} } }; \\
&~~ \text{\texttt{(* set call object and argument *)}} \\
&~~\ottkw{Mov} \, { {r_{\textsf{aux} } }_1 } \, {r_{\textsf{tgt} } }  \ottsym{;}  \ottkw{Mov} \, { {r_{\textsf{aux} } }_2 } \, {r_{\textsf{arg} } }; \\
&~~ \text{\texttt{(* perform call *)}} \\
&~~\ottkw{Const} \, \ottsym{(}  \ottkw{methl} \, \mathit{c'} \, \ottmv{m}  \ottsym{)} \, { {r_{\textsf{aux} } }_3 }  \ottsym{;}  \ottkw{Jal} \, { {r_{\textsf{aux} } }_3 }; \\
&~~ \text{\texttt{(* reinitialize environment *)}} \\
&~~\ottkw{Const} \, \ottsym{1} \, {r_{\textsf{one} } }  \ottsym{;}  \ottkw{Const} \, \ottsym{(}  \ottkw{stackl} \, \mathit{c}  \ottsym{)} \, {r_{\textsf{spp} } }  \ottsym{;}  \ottkw{Load} \, {r_{\textsf{spp} } } \, {r_{\textsf{sp} } }; \\
&~~ \text{\texttt{(* restore current object and argument *)}} \\
&~~\ottkw{Load} \, {r_{\textsf{sp} } } \, {r_{\textsf{arg} } }  \ottsym{;}  \ottkw{Sub} \, {r_{\textsf{sp} } } \, {r_{\textsf{one} } } \, {r_{\textsf{sp} } }  \ottsym{;}  \ottkw{Load} \, {r_{\textsf{sp} } } \, {r_{\textsf{tgt} } } \\
&~~ \text{\texttt{(* push call result *)}} \\
&~~\ottkw{Store} \, {r_{\textsf{sp} } } \, { r_{\textsf{ret} } }  \ottsym{;}   \mathscr{A} ( \mathit{c} , \ottnt{Icode} )  \\
\\
& \mathscr{A} ( \mathit{c} , \ottkw{Ret}  \ottsym{;}  \ottnt{Icode} )  =\; \\
&~~ \text{\texttt{(* pop return value *)}} \\
&~~\ottkw{Load} \, {r_{\textsf{sp} } } \, { r_{\textsf{ret} } }  \ottsym{;}  \ottkw{Sub} \, {r_{\textsf{sp} } } \, {r_{\textsf{one} } } \, {r_{\textsf{sp} } } ; \\
&~~ \text{\texttt{(* pop return address *)}} \\
&~~\ottkw{Load} \, {r_{\textsf{sp} } } \, {r_{\textsf{a} } }  \ottsym{;}  \ottkw{Sub} \, {r_{\textsf{sp} } } \, {r_{\textsf{one} } } \, {r_{\textsf{sp} } } ; \\
&~~ \text{\texttt{(* save stack pointer *)}} \\
&~~\ottkw{Store} \, {r_{\textsf{spp} } } \, {r_{\textsf{sp} } } ; \\
&~~ \text{\texttt{(* perform return *)}} \\
&~~\ottkw{Jump} \, {r_{\textsf{a} } }  \ottsym{;}   \mathscr{A} ( \mathit{c} , \ottnt{Icode} )  \\
\end{align*}
\caption{Compilation of communication-related instructions of the
  intermediate machine}
\label{fig:i2tcomm}
\end{figure}

\begin{figure}
\centering
\begin{align*}
&  \mathscr{A} ( \mathit{c} , \ottkw{This}  \ottsym{;}  \ottnt{Icode} )  =\; \text{\texttt{(* push object *)}} \\
&~~ \ottkw{Add} \, {r_{\textsf{sp} } } \, {r_{\textsf{one} } } \, {r_{\textsf{sp} } }  \ottsym{;}  \ottkw{Store} \, {r_{\textsf{sp} } } \, {r_{\textsf{tgt} } }  \ottsym{;}   \mathscr{A} ( \mathit{c} , \ottnt{Icode} )  \\
\\
&  \mathscr{A} ( \mathit{c} , \ottkw{Arg}  \ottsym{;}  \ottnt{Icode} )  =\; \text{\texttt{(* push argument *)}} \\
&~~ \ottkw{Add} \, {r_{\textsf{sp} } } \, {r_{\textsf{one} } } \, {r_{\textsf{sp} } }  \ottsym{;}  \ottkw{Store} \, {r_{\textsf{sp} } } \, {r_{\textsf{arg} } }  \ottsym{;}   \mathscr{A} ( \mathit{c} , \ottnt{Icode} )  \\
\\
&  \mathscr{A} ( \mathit{c} , \ottkw{Ref} \, \mathit{o}  \ottsym{;}  \ottnt{Icode} )  =\; \text{\texttt{(* push object o *)}} \\
&~~ \ottkw{Const} \, \ottsym{(}  \ottkw{objl} \, \mathit{o}  \ottsym{)} \, { {r_{\textsf{aux} } }_1 } ; \\
&~~ \ottkw{Add} \, {r_{\textsf{sp} } } \, {r_{\textsf{one} } } \, {r_{\textsf{sp} } }  \ottsym{;}  \ottkw{Store} \, {r_{\textsf{sp} } } \, { {r_{\textsf{aux} } }_1 }  \ottsym{;}   \mathscr{A} ( \mathit{c} , \ottnt{Icode} )  \\
\\
&  \mathscr{A} ( \mathit{c} , \ottkw{Drop}  \ottsym{;}  \ottnt{Icode} )  =\; \ottkw{Sub} \, {r_{\textsf{sp} } } \, {r_{\textsf{one} } } \, {r_{\textsf{sp} } }  \ottsym{;}   \mathscr{A} ( \mathit{c} , \ottnt{Icode} )  \\
\\
&  \mathscr{A} ( \mathit{c} , \ottkw{Sel} \, \ottmv{f}  \ottsym{;}  \ottnt{Icode} )  =\; \\
&~~ \ottkw{Const} \, \ottsym{(}  \ottmv{f}  \ottsym{-}  \ottsym{1}  \ottsym{)} \, { {r_{\textsf{aux} } }_2 }; \\
&~~ \text{\texttt{(* pop object to select from *)}} \\
&~~ \ottkw{Load} \, {r_{\textsf{sp} } } \, { {r_{\textsf{aux} } }_1 }  \ottsym{;}  \ottkw{Add} \, { {r_{\textsf{aux} } }_1 } \, { {r_{\textsf{aux} } }_2 } \, { {r_{\textsf{aux} } }_1 } ; \\
&~~ \text{\texttt{(* load and push field value *)}} \\
&~~ \ottkw{Load} \, { {r_{\textsf{aux} } }_1 } \, { {r_{\textsf{aux} } }_1 }  \ottsym{;}  \ottkw{Store} \, {r_{\textsf{sp} } } \, { {r_{\textsf{aux} } }_1 }  \ottsym{;}   \mathscr{A} ( \mathit{c} , \ottnt{Icode} )  \\
\\
&  \mathscr{A} ( \mathit{c} , \ottkw{Upd} \, \ottmv{f}  \ottsym{;}  \ottnt{Icode} )  =\; \\
&~~ \ottkw{Const} \, \ottsym{(}  \ottmv{f}  \ottsym{-}  \ottsym{1}  \ottsym{)} \, { {r_{\textsf{aux} } }_2 } ; \\
&~~ \text{\texttt{(* pop new field value and object *)}} \\
&~~ \ottkw{Load} \, {r_{\textsf{sp} } } \, { {r_{\textsf{aux} } }_3 }  \ottsym{;}  \ottkw{Sub} \, {r_{\textsf{sp} } } \, {r_{\textsf{one} } } \, {r_{\textsf{sp} } }  \ottsym{;}  \ottkw{Load} \, {r_{\textsf{sp} } } \, { {r_{\textsf{aux} } }_1 } ; \\
&~~ \text{\texttt{(* perform update on object *)}} \\
&~~ \ottkw{Add} \, { {r_{\textsf{aux} } }_1 } \, { {r_{\textsf{aux} } }_2 } \, { {r_{\textsf{aux} } }_1 }  \ottsym{;}  \ottkw{Store} \, { {r_{\textsf{aux} } }_1 } \, { {r_{\textsf{aux} } }_3 } ; \\
&~~ \text{\texttt{(* push new field value *)}} \\
&~~ \ottkw{Store} \, {r_{\textsf{sp} } } \, { {r_{\textsf{aux} } }_3 }  \ottsym{;}   \mathscr{A} ( \mathit{c} , \ottnt{Icode} )  \\
\end{align*}
\caption{Compilation of stack-related instructions of the intermediate machine}
\label{fig:i2tstack}
\end{figure}

\begin{figure}
\centering
\begin{align*}
&  \mathscr{A} ( \mathit{c} , \ottkw{Nop}  \ottsym{;}  \ottnt{Icode} )  =\;
  \ottkw{Nop}  \ottsym{;}   \mathscr{A} ( \mathit{c} , \ottnt{Icode} )  \\
\\
&  \mathscr{A} ( \mathit{c} , \ottkw{Halt}  \ottsym{;}  \ottnt{Icode} )  =\; \ottkw{Halt}  \ottsym{;}   \mathscr{A} ( \mathit{c} , \ottnt{Icode} )  \\
\\
&  \mathscr{A} ( \mathit{c} ,   \ottkw{Skip} \, \ottmv{k}  ;  \ottnt{Iinstr_{{\mathrm{1}}}}  \ottsym{;} \, ... \, \ottsym{;}  \ottnt{Iinstr_{\ottmv{k}}}   ;  \ottnt{Icode}  )  =\; \\
&~~\ottkw{Bnz} \, {r_{\textsf{one} } } \,  \mathscr{L} ( \ottnt{Iinstr_{{\mathrm{1}}}}  \ottsym{;} \, ... \, \ottsym{;}  \ottnt{Iinstr_{\ottmv{k}}} )  ; \\
&~~ \mathscr{A} ( \mathit{c} ,  \ottnt{Iinstr_{{\mathrm{1}}}}  \ottsym{;} \, ... \, \ottsym{;}  \ottnt{Iinstr_{\ottmv{k}}}  ;  \ottnt{Icode}  )  \\
\\
&  \mathscr{A} ( \mathit{c} ,   \ottkw{Skeq} \, \ottmv{k}  ;  \ottnt{Iinstr_{{\mathrm{1}}}}  \ottsym{;} \, ... \, \ottsym{;}  \ottnt{Iinstr_{\ottmv{k}}}   ;  \ottnt{Icode}  )  =\; \\
&~~ \text{\texttt{(* pop and compare objects *)}} \\
&~~\ottkw{Load} \, {r_{\textsf{sp} } } \, { {r_{\textsf{aux} } }_2 }  \ottsym{;}  \ottkw{Sub} \, {r_{\textsf{sp} } } \, {r_{\textsf{one} } } \, {r_{\textsf{sp} } }  \ottsym{;}  \ottkw{Load} \, {r_{\textsf{sp} } } \, { {r_{\textsf{aux} } }_1 } ; \\
&~~\ottkw{Sub} \, {r_{\textsf{sp} } } \, {r_{\textsf{one} } } \, {r_{\textsf{sp} } }  \ottsym{;}   { \ottkw{Binop} _{ \ottsym{=} } }  { {r_{\textsf{aux} } }_1 }   { {r_{\textsf{aux} } }_2 }   { {r_{\textsf{aux} } }_1 }  ; \\
&~~ \text{\texttt{(* branch according to result *)}} \\
&~~\ottkw{Bnz} \, { {r_{\textsf{aux} } }_1 } \,  \mathscr{L} ( \ottnt{Iinstr_{{\mathrm{1}}}}  \ottsym{;} \, ... \, \ottsym{;}  \ottnt{Iinstr_{\ottmv{k}}} )  ; \\
&~~ \mathscr{A} ( \mathit{c} ,  \ottnt{Iinstr_{{\mathrm{1}}}}  \ottsym{;} \, ... \, \ottsym{;}  \ottnt{Iinstr_{\ottmv{k}}}  ;  \ottnt{Icode}  ) 
\end{align*}
where $ \mathscr{L} ( \ottnt{Iinstr_{{\mathrm{1}}}}  \ottsym{;} \, ... \, \ottsym{;}  \ottnt{Iinstr_{\ottmv{k}}} ) $ is the length of
the sequence of compiled instructions corresponding to instructions
$\ottnt{Iinstr_{{\mathrm{1}}}}  \ottsym{;} \, ... \, \ottsym{;}  \ottnt{Iinstr_{\ottmv{k}}}$, defined as:
\begin{align*}
 \mathscr{L} ( \text{\texttt{[]} } )  =&\; 0 \\
 \mathscr{L} ( \ottkw{Drop}  \ottsym{;}  \ottnt{Icode} )  =\;
 \mathscr{L} ( \ottkw{Nop}  \ottsym{;}  \ottnt{Icode} )  =&\;
1 +  \mathscr{L} ( \ottnt{Icode} )  \\
 \mathscr{L} ( \ottkw{Halt}  \ottsym{;}  \ottnt{Icode} )  =\;
 \mathscr{L} ( \ottkw{Skip} \, \mathit{n}  \ottsym{;}  \ottnt{Icode} )  =&\;
1 +  \mathscr{L} ( \ottnt{Icode} )  \\
 \mathscr{L} ( \ottkw{This}  \ottsym{;}  \ottnt{Icode} )  =\;
 \mathscr{L} ( \ottkw{Arg}  \ottsym{;}  \ottnt{Icode} )  =&\;
2 +  \mathscr{L} ( \ottnt{Icode} )  \\
 \mathscr{L} ( \ottkw{Ref} \, \mathit{o}  \ottsym{;}  \ottnt{Icode} )  =&\;
3 +  \mathscr{L} ( \ottnt{Icode} )  \\
 \mathscr{L} ( \ottkw{Sel} \, \ottmv{f}  \ottsym{;}  \ottnt{Icode} )  =&\;
  5 +  \mathscr{L} ( \ottnt{Icode} )  \\
 \mathscr{L} ( \ottkw{Ret}  \ottsym{;}  \ottnt{Icode} )  =\;
 \mathscr{L} ( \ottkw{Skeq} \, \mathit{n}  \ottsym{;}  \ottnt{Icode} )  =&\;
  6 +  \mathscr{L} ( \ottnt{Icode} )  \\
 \mathscr{L} ( \ottkw{Upd} \, \ottmv{f}  \ottsym{;}  \ottnt{Icode} )  =&\;
  7 +  \mathscr{L} ( \ottnt{Icode} )  \\
 \mathscr{L} ( \ottkw{Call} \, \mathit{c'} \, \ottmv{m}  \ottsym{;}  \ottnt{Icode} )  =&\;
  18 +  \mathscr{L} ( \ottnt{Icode} )  \\
\end{align*}
\caption{Compilation of
control-related instructions of the intermediate machine}
\label{fig:i2tcontrol}
\end{figure}

A method with index $m$ gets its own memory region under symbolic
location $\ottkw{methl} \, \mathit{c} \, \ottmv{m}$ where $\mathit{c}$ is the name of the compartment
being compiled.
The length of these memory regions is that of the corresponding
compiled code, which is what they are filled with.
The compilation is a translation of the method bodies, mapping
each intermediate-level instruction to a low-level
instruction sequence.

%

The compilation uses ten distinct general-purpose registers.
Register ${r_{\textsf{a} } }$ is automatically filled upon low-level call
instructions $\ottkw{Jal}$ --- following the semantics of the machine
studied in the original micro-policies paper~\cite{micropolicies2015} --- for the
callees to get the address to which they should return.
Registers ${r_{\textsf{tgt} } }$, ${r_{\textsf{arg} } }$ and ${ r_{\textsf{ret} } }$ are used for value
communication: ${r_{\textsf{tgt} } }$ stores the object on which we're calling a
method---the \emph{target object}---and ${r_{\textsf{arg} } }$ the argument for
the method, while ${ r_{\textsf{ret} } }$ stores the result of a call on a return.
%
%
Registers ${ {r_{\textsf{aux} } }_1 }$, ${ {r_{\textsf{aux} } }_2 }$, ${ {r_{\textsf{aux} } }_3 }$ are used for storing
temporary results.
Register ${r_{\textsf{sp} } }$ holds a pointer to the current top value of the
local stack --- we call this register the \emph{stack pointer} register.
Register ${r_{\textsf{spp} } }$ always holds a pointer to a fixed location where
the stack pointer can be stored and restored -- this location is the
first cell in the memory region dedicated to the local stack.
Finally, register ${r_{\textsf{one} } }$ always stores the value $1$ so that this
particular value is always easily available.

The compilation of method $\ottmv{m}$ of class $\mathit{c}$ with method body
$\ottnt{Icode}$ is defined as follows:
\begin{align*}
& \mathscr{C} ( \mathit{c} ,  \ottmv{m} ,  \ottnt{Icode} )  = \\
&~~ \ottkw{Const} \, \ottsym{1} \, {r_{\textsf{one} } } ; \\
&~~ \text{\texttt{(* load stack pointer *)}} \\
&~~\ottkw{Const} \, \ottsym{(}  \ottkw{stackl} \, \mathit{c}  \ottsym{)} \, {r_{\textsf{spp} } }  \ottsym{;}  \ottkw{Load} \, {r_{\textsf{spp} } } \, {r_{\textsf{sp} } } ; \\
&~~ \text{\texttt{(* push return address *)}} \\
&~~\ottkw{Add} \, {r_{\textsf{sp} } } \, {r_{\textsf{one} } } \, {r_{\textsf{sp} } }  \ottsym{;}  \ottkw{Store} \, {r_{\textsf{sp} } } \, {r_{\textsf{a} } }  \ottsym{;}   \mathscr{A} ( \mathit{c} , \ottnt{Icode} ) 
\end{align*}
where $ \mathscr{A} ( \mathit{c} , \ottnt{Icode} ) $ is an auxiliary, recursively defined
function having $ \mathscr{A} ( \mathit{c} , \text{\texttt{[]} } )  =\; \text{\texttt{[]} }$ as a base
case.
As shown in the code snippet, the first instructions initialize the
registers for them to match with the invariant we just explained
informally.

Compilation is most interesting for calls and return
instructions, which we present in figure~\ref{fig:i2tcomm}.
We also present the compilation of stack-manipulating instructions in
figure~\ref{fig:i2tstack} and that of control-related instructions in
figure~\ref{fig:i2tcontrol}.
In all these figures, inline comments are provided so that the
interested reader can get a quick understanding of what is
happening.

More standard compilers typically use a global call stack and compile
code under the assumption that other components will not break
invariants (such as ${r_{\textsf{one} } }$ always holding value $1$) nor mess with
the call stack.
In our case, however, other components may be controlled by an
attacker, which is incompatible with such assumptions.
As a consequence, we use local stacks not only for intermediate
results, but also to spill registers ${r_{\textsf{tgt} } }$ and ${r_{\textsf{arg} } }$
before performing a call.
After the call, we restore the values of registers ${r_{\textsf{one} } }$,
${r_{\textsf{spp} } }$ and ${r_{\textsf{sp} } }$ that could have been overwritten by the
callee, then fill registers ${r_{\textsf{tgt} } }$ and ${r_{\textsf{arg} } }$ from the local
stack.

There is a lot of room for improvement in terms of compiler
efficiency; having a code optimization pass would be very interesting
in the future.

\subsection{Micro-Policy Protecting Abstractions}
\label{sec:micro-policy}


Here, we first present the abstractions that our source language
provides with respect to low-level machine code, and give intuition
about how we can protect them using a micro-policy.
Then, we present our actual micro-policy and explain how it
effectively implements this intuition.
%
We end with a description of the low-level loader, which sets the
initial memory tags for the program.

\subsubsection{Enforcing Class Isolation via Compartmentalization}
\label{sec:classisolation}

\paragraph{Abstraction}

Because in the source language fields are private, classes have no way
to read or to write to the data of other classes.
Moreover, classes cannot read or write code, which is fixed.
Finally, the only way to communicate between classes is to perform a
method call.

In machine code, however, $\ottkw{Load}$, $\ottkw{Store}$ and $\ottkw{Jump}$
operations can target any address in memory, including those of other
components.
This interaction must be restricted to preserve the class isolation
abstraction.

\paragraph{Protection Mechanism}

To enforce class isolation, memory cells and the program counter get
tagged with a class name, which can be seen as a color.
The code and data belonging to class $\mathit{c}$ get tagged with color $\mathit{c}$.

$\ottkw{Load}$ and $\ottkw{Store}$ instructions are restricted to locations
having the same color as the current instruction.
Moreover, the rules will compare the next instruction's compartment
to that of the current instruction:
Switching compartments is only allowed for $\ottkw{Jump}$ and $\ottkw{Jal}$;
however, we need \emph{more} protection on these instructions because
switching compartments should be further restricted: this is what we now
present.

\subsubsection{Enforcing Method Call Discipline using Method Entry
  Points, Linear Return Capabilities, and Register Cleaning}
\label{sec:calldiscipline}



\paragraph{Abstraction}

In the source language,
callers and callees obey a strict call discipline: a caller performs a
method call, which leads to the execution of the method body, and once
this execution ends evaluation proceeds with the next operation of the caller.
In machine code, though, the $\ottkw{Jal}$ and $\ottkw{Jump}$ instructions can
target any address.

Moreover, in the high-level language callers and callees give no
information to each other except for the arguments and the result of
the call.
In the low-level machine, however, registers may carry extra
information about the caller or the callee and their intermediate
computational results.
This suggests a need for \emph{register cleaning} upon calls and
returns.

\paragraph{Protection Mechanism}

On our machine, calls are done via $\ottkw{Jal}$ instructions, which store
the return address in the ${r_{\textsf{a} } }$ register, and returns by executing
a $\ottkw{Jump}$ to the value stored in ${r_{\textsf{a} } }$.
The first goal here is to ensure that a caller can only performs calls
at method entry points, and that a callee can only return to the
return address it was given in register ${r_{\textsf{a} } }$ on the corresponding call.

To this end, we extend the memory tags to allow tagging a memory
location as a method entry point.
Then, we monitor $\ottkw{Jal}$ instructions so that they can only target
such locations.
This is enough to protect method calls.
For returns, however, the problem is not that simple.
In contrast with calls, for which all method entry points are equally
valid, only one return address is the good one at any given point in
time; it is the one that was transferred to the callee in the ${r_{\textsf{a} } }$
register by the corresponding call.
More precisely, because there can be nested calls, there is exactly
one valid return address for each call depth.

We reflect this by tracking the current call depth in the PC tag: it starts at
zero and we increment it on $\ottkw{Jal}$ instructions and decrement it on
$\ottkw{Jump}$ instructions \chrev{that correspond to returns}.
With such tracking, we can now, upon a $\ottkw{Jal}$ instruction, tag
register ${r_{\textsf{a} } }$ to mark its content as the only valid return address
for the current call depth.
It is however crucial to make sure that when we go back to call depth
$\mathit{n}$, there isn't any return capability for call depth $\mathit{n}+1$
idling in the system.
We do this by enforcing that the tag on the return address is never
duplicated, which makes it a \emph{linear return capability}.
The return capability gets created upon $\ottkw{Jal}$, moved from register
to memory upon $\ottkw{Store}$ and from memory to register upon
$\ottkw{Load}$, moved between registers upon $\ottkw{Mov}$, and is finally
destroyed upon returning via $\ottkw{Jump}$.
When it gets destroyed, we can infer that the capability has
disappeared from the system, since there was always only one.


Now that our first goal is met, we can think about the second one:
ensuring that compiled components do not transmit more information
upon call and return than they do within the high-level semantics.
Upon return, the distrusted caller potentially receives more
information than in the high-level:
Uncleaned registers could hold information from the compiled callee.
Upon calls, the distrusted callee similarly receives information
through registers, but has also an extra information to use:
the content of register ${r_{\textsf{a} } }$, which holds the return address.
This content leaks the identity of the compiled caller to the
distrusted callee, while there is no way for a callee to know the
identity of its caller in the high-level.
%
Fortunately, the content of register ${r_{\textsf{a} } }$ is already a
specifically tagged value, and we already prevent the use of linear
return capabilities for any other means than returning through it or
moving it around.

Let us now review the general purpose registers which could leak
information about our compiled partial programs.
${r_{\textsf{tgt} } }$ and ${r_{\textsf{arg} } }$ could leak the current object and
argument to the distrusted caller upon return, but this is fine: the
caller was the one who set them before the call, so he already knows
them.
Upon call, these registers do not leak information either since,
according to the compilation scheme, they are already overwritten with
call parameters.
${ r_{\textsf{ret} } }$ could leak a previous result value of the compiled caller
to the distrusted callee upon call: it has to be cleaned.
Upon return, however, and according to the compilation scheme, this
register is already overwritten with the return value for the call.
${ {r_{\textsf{aux} } }_1 }$, ${ {r_{\textsf{aux} } }_2 }$, ${ {r_{\textsf{aux} } }_3 }$ could leak intermediate results
of the computation upon return and have to be cleaned accordingly.
Upon call, however, following the compilation scheme, they are already
overwritten with information that the distrusted callee either will
get or already has.
${r_{\textsf{spp} } }$ could leak the identity of the compiled caller to the
distrusted callee upon call, since it contains a pointer to the
caller's local stack's memory region: it has to be cleaned.
In the case of a return however, the identity of the compiled callee
is already known by the distrusted caller so no new information will
be leaked.
%
${r_{\textsf{sp} } }$ could leak information about the current state of the local
stack, as well as the identity of the compiled caller, and should
accordingly be cleaned in both call and return cases.
${r_{\textsf{one} } }$ will be known by the distrusted caller or the distrusted
callee to always hold value $1$, and thus won't leak any
information.
%
%
${r_{\textsf{a} } }$ is already protected from being leaked: it is overwritten
both at call and return time.
In the case of a call, it is overwritten upon the execution of the
$\ottkw{Jal}$ instruction to hold the new return address.
In the case of a return, according to the compilation scheme, it will
be overwritten before performing the return to hold the address to which we are
returning.
%
%
This description concerns the current unoptimizing compiler; for an
optimizing compiler the situation would be a little different: more
information could be leaked, and accordingly more registers would have
to be cleaned.

A first solution would be to have the compiler produce register reset
instructions $\ottkw{Const} \, \ottsym{0} \, \mathit{r}$ for every register $\mathit{r}$ that could
leak information, before any external call or return instruction.
However, this would be very expensive.
This is one of the reasons why we have made the following assumption
about our target symbolic micro-policy machine:
The tags of some fixed registers (here, ${ r_{\textsf{ret} } }$, ${r_{\textsf{spp} } }$ and
${r_{\textsf{sp} } }$ upon $\ottkw{Jal}$, and ${ {r_{\textsf{aux} } }_1 }$, ${ {r_{\textsf{aux} } }_2 }$, ${ {r_{\textsf{aux} } }_3 }$
and ${r_{\textsf{sp} } }$ upon $\ottkw{Jump}$) can be updated in our symbolic
micro-policy rules.
We are thus by assumption able to clean the registers that might leak
information, by using a special tag to mark these registers as cleared
when we execute a $\ottkw{Jump}$ or a $\ottkw{Jal}$ instruction.

\subsubsection{Enforcing Type Safety Dynamically}
\label{sec:typesafety}

\paragraph{Abstraction}

Finally, in the source language callees and callers expect arguments
or return values that are well-typed with respect to method
signatures:
We only consider high-level components that are statically
well-typed and thus have to comply with the type interface they
declare.
At the machine code level, however, compromised components are untyped
and can feed arbitrary machine words to uncompromised ones, without
any a priori typing restrictions.

\paragraph{Protection Mechanism}

We use micro-policy rules to ensure that method arguments and
return values always have the type mentioned in type signatures.
Fortunately, our type system is simple enough that we can encode types
(which are just class names) as tags.
Hence, we can build upon the call discipline mechanism above and add the
expected argument type to the entry point tag, and the expected return
type to the linear return capability tag.
\iflater
\yj{Would cache better if it
was actually put on the location to which we return, since it would
not impact the caching of mov, load and store instructions.}
\fi

Our dynamic typing mechanism relies on the loader to initialize memory
tags appropriately.
This initial tagging will be presented in detail after the
micro-policy itself:
The main idea is that a register or memory location holding an object
pointer will be tagged as such, together with the type of the object.
This dynamic type information is moved around with the value when it
is stored, loaded or copied.
One key for making this possible is that the $\ottkw{Const} \, \ottsym{(}  \ottkw{objl} \, \mathit{o}  \ottsym{)} \, \mathit{r}$ instructions
which put an object reference in a register are \emph{blessed} with
the type of this object according to the type declared for $o$:
executing a blessed $\ottkw{Const}$ instruction will put the corresponding
type information on the target register's tag.
%

Remember that we assume that we can check the next instruction tag in
micro-policy rules.
With dynamic type information available, we can then do type checking
by looking at the tags of registers ${r_{\textsf{tgt} } }$ and ${r_{\textsf{arg} } }$ upon
call, and that of register ${ r_{\textsf{ret} } }$ upon return.
Upon call, we will compare with type information from the next
instruction tag, which is the tag for the method entry point.
Upon return, we will compare with type information from the return
capability's tag.
For these checks to be possible, we use the following assumption we
made about our target symbolic micro-policy machine:
The tags of some fixed registers (here, ${r_{\textsf{tgt} } }$ and ${r_{\textsf{arg} } }$
upon $\ottkw{Jal}$ and ${ r_{\textsf{ret} } }$ upon $\ottkw{Jump}$) can be checked in our
symbolic micro-policy rules.




\subsubsection{Micro-Policy in Detail}

As presented in section~\ref{sec:target}, a symbolic micro-policy is
the combination of a collection of symbolic tags and a transfer
function.
We first detail our tag syntax.
Then, we give the rules of our symbolic micro-policy, which define
its transfer function~\cite{micropolicies2015}.
Finally, we explain how the
loader initially tags program memory following the
program's export declarations.

\paragraph{Symbolic Micro-Policy Tags}

\begin{figure}
\ottgrammartabular{
\otttpc\ottinterrule
\otttmem\ottinterrule
\otttreg\ottinterrule
\ottbt\ottinterrule
\ottet\ottinterrule
\ottvt\ottinterrule
\ottrt\ottafterlastrule
}
\caption{Symbolic micro-policy tags syntax}
\label{fig:tags}
\end{figure}

Our collection of symbolic micro-policy tags is presented in
figure \ref{fig:tags}.

The tag on the program counter register is a natural number $\mathit{n}$
which represents the current call depth.

Memory tags $\ottsym{(}  \ottnt{bt}  \ottsym{,}  \mathit{c}  \ottsym{,}  \ottnt{et}  \ottsym{,}  \ottnt{vt}  \ottsym{)}$ combine the various information
we need about memory cells:
First, a memory cell belongs to a compartment $\mathit{c}$.
Its $\ottnt{bt}$ tag can be either $\textbf{B} \, \mathit{c'}$ for it to be blessed
with type $\mathit{c'}$ --- which means that it is a $\ottkw{Const}$
instruction which puts an object of type $\mathit{c'}$ in its target
register --- or $\textbf{NB}$ when it shouldn't be blessed.
Similarly, its $\ottnt{et}$ tag can be either $\textbf{EP} \, {c_{\textsf{a} } }  \rightarrow  {c_{\textsf{r} } }$
when it is the entry point of a method of type signature
${c_{\textsf{a} } }  \ottsym{(}  {c_{\textsf{r} } }  \ottsym{)}$, or $\textbf{NEP}$ when it is not.
Finally, the $\ottnt{vt}$ tag is for the content of the memory cell,
which can be either:
a cleared value, tagged $\bot$;
a return capability for going back to call depth $\mathit{n}$ by providing
a return value of type ${c_{\textsf{r} } }$, tagged $\ottkw{Ret} \, \mathit{n} \, {c_{\textsf{r} } }$;
an object pointer of type $\mathit{c'}$, tagged $\textbf{O} \, \mathit{c'}$;
or a regular word, tagged $\textbf{W}$.

Tags on registers are the same $\ottnt{vt}$ tags as the ones for the
content of memory cells:
The content of the register is also tagged as a cleared
value, a return capability, an object pointer, or a regular word.


\paragraph{Micro-Policy Rules}

\begin{figure*}
\centering
\[
\begin{array}{l}
\ottkw{Nop}  \ottsym{:}  \ottsym{\{}  {t_{\textsf{pc} } }  \ottsym{=}  \mathit{n}  \ottsym{,}  {t_{\textsf{ci} } }  \ottsym{=}  \ottsym{(}  \textbf{NB}  \ottsym{,}  \mathit{c}  \ottsym{,}  \ottnt{et}  \ottsym{,}  \textbf{W}  \ottsym{)}  \ottsym{,}  {t_{\textsf{ni} } }  \ottsym{=}  \ottsym{(}  \ottnt{bt'}  \ottsym{,}  \mathit{c}  \ottsym{,}  \ottnt{et'}  \ottsym{,}  \ottnt{rt'}  \ottsym{)}  \ottsym{\}}  \implies  \ottsym{\{}  {t_{\textsf{pc} } }'  \ottsym{=}  \mathit{n}  \ottsym{\}} \\
\ottkw{Const} \, i \, {r_{\textsf{d} } }  \ottsym{:}  \ottsym{\{}  {t_{\textsf{pc} } }  \ottsym{=}  \mathit{n}  \ottsym{,}  {t_{\textsf{ci} } }  \ottsym{=}  \ottsym{(}  \textbf{NB}  \ottsym{,}  \mathit{c}  \ottsym{,}  \ottnt{et}  \ottsym{,}  \textbf{W}  \ottsym{)}  \ottsym{,}  {t_{\textsf{ni} } }  \ottsym{=}  \ottsym{(}  \ottnt{bt'}  \ottsym{,}  \mathit{c}  \ottsym{,}  \ottnt{et'}  \ottsym{,}  \ottnt{rt'}  \ottsym{)}  \ottsym{\}}  \implies  \ottsym{\{}  {t_{\textsf{pc} } }'  \ottsym{=}  \mathit{n}  \ottsym{,}  {t_{r_{\textsf{d} } } }'  \ottsym{=}  \textbf{W}  \ottsym{\}} \\
\ottkw{Const} \, i \, {r_{\textsf{d} } }  \ottsym{:}  \ottsym{\{}  {t_{\textsf{pc} } }  \ottsym{=}  \mathit{n}  \ottsym{,}  {t_{\textsf{ci} } }  \ottsym{=}  \ottsym{(}  \textbf{B} \, \mathit{c'}  \ottsym{,}  \mathit{c}  \ottsym{,}  \ottnt{et}  \ottsym{,}  \textbf{W}  \ottsym{)}  \ottsym{,}  {t_{\textsf{ni} } }  \ottsym{=}  \ottsym{(}  \ottnt{bt'}  \ottsym{,}  \mathit{c}  \ottsym{,}  \ottnt{et'}  \ottsym{,}  \ottnt{rt'}  \ottsym{)}  \ottsym{\}}  \implies  \ottsym{\{}  {t_{\textsf{pc} } }'  \ottsym{=}  \mathit{n}  \ottsym{,}  {t_{r_{\textsf{d} } } }'  \ottsym{=}  \textbf{O} \, \mathit{c'}  \ottsym{\}} \\
\ottkw{Mov} \, {r_{\textsf{s} } } \, {r_{\textsf{d} } }  \ottsym{:}  \ottsym{\{}  {t_{\textsf{pc} } }  \ottsym{=}  \mathit{n}  \ottsym{,}  {t_{\textsf{ci} } }  \ottsym{=}  \ottsym{(}  \textbf{NB}  \ottsym{,}  \mathit{c}  \ottsym{,}  \ottnt{et}  \ottsym{,}  \textbf{W}  \ottsym{)}  \ottsym{,}  {t_{\textsf{ni} } }  \ottsym{=}  \ottsym{(}  \ottnt{bt'}  \ottsym{,}  \mathit{c}  \ottsym{,}  \ottnt{et'}  \ottsym{,}  \ottnt{rt'}  \ottsym{)}  \ottsym{,}  {t_{r_{\textsf{s} } } }  \ottsym{=}  \ottnt{vt}  \ottsym{,}  {t_{r_{\textsf{d} } } }  \ottsym{=}  \ottnt{vt'}  \ottsym{\}}  \\  \implies  \ottsym{\{}  {t_{\textsf{pc} } }'  \ottsym{=}  \mathit{n}  \ottsym{,}  {t_{r_{\textsf{d} } } }'  \ottsym{=}  \ottnt{vt}  \ottsym{,}  {t_{r_{\textsf{s} } } }'  \ottsym{=}  \ottkw{clear} \, \ottnt{vt}  \ottsym{\}} \\
 { \ottkw{Binop} _{ \ottnt{op} } }  \mathit{r_{{\mathrm{1}}}}   \mathit{r_{{\mathrm{2}}}}   {r_{\textsf{d} } }   \ottsym{:}  \ottsym{\{}  {t_{\textsf{pc} } }  \ottsym{=}  \mathit{n}  \ottsym{,}  {t_{\textsf{ci} } }  \ottsym{=}  \ottsym{(}  \textbf{NB}  \ottsym{,}  \mathit{c}  \ottsym{,}  \ottnt{et}  \ottsym{,}  \textbf{W}  \ottsym{)}  \ottsym{,}  {t_{\textsf{ni} } }  \ottsym{=}  \ottsym{(}  \ottnt{bt'}  \ottsym{,}  \mathit{c}  \ottsym{,}  \ottnt{et'}  \ottsym{,}  \ottnt{rt'}  \ottsym{)}  \ottsym{,}  {t_{\textsf{r} } }_{{\mathrm{1}}}  \ottsym{=}  \textbf{W}  \ottsym{,}  {t_{\textsf{r} } }_{{\mathrm{2}}}  \ottsym{=}  \textbf{W}  \ottsym{\}}  \\  \implies  \ottsym{\{}  {t_{\textsf{pc} } }'  \ottsym{=}  \mathit{n}  \ottsym{,}  {t_{r_{\textsf{d} } } }'  \ottsym{=}  \textbf{W}  \ottsym{\}}\\
\ottkw{Load} \, {r_{\textsf{p} } } \, {r_{\textsf{d} } }  \ottsym{:}  \ottsym{\{}  {t_{\textsf{pc} } }  \ottsym{=}  \mathit{n}  \ottsym{,}  {t_{\textsf{ci} } }  \ottsym{=}  \ottsym{(}  \textbf{NB}  \ottsym{,}  \mathit{c}  \ottsym{,}  \ottnt{et}  \ottsym{,}  \textbf{W}  \ottsym{)}  \ottsym{,}  {t_{\textsf{ni} } }  \ottsym{=}  \ottsym{(}  \ottnt{bt'}  \ottsym{,}  \mathit{c}  \ottsym{,}  \ottnt{et'}  \ottsym{,}  \ottnt{rt'}  \ottsym{)}  \ottsym{,}  {t_{r_{\textsf{p} } } }  \ottsym{=}  \textbf{W}  \ottsym{,}  {t_{\textsf{mem} } }  \ottsym{=}  \ottsym{(}  \ottnt{bt}  \ottsym{,}  \mathit{c}  \ottsym{,}  \ottnt{et''}  \ottsym{,}  \ottnt{vt}  \ottsym{)}  \ottsym{,}  {t_{r_{\textsf{d} } } }  \ottsym{=}  \ottnt{vt'}  \ottsym{\}}  \\  \implies  \ottsym{\{}  {t_{\textsf{pc} } }'  \ottsym{=}  \mathit{n}  \ottsym{,}  {t_{r_{\textsf{d} } } }'  \ottsym{=}  \ottnt{vt}  \ottsym{,}  {t_{\textsf{mem} } }'  \ottsym{=}  \ottsym{(}  \ottnt{bt}  \ottsym{,}  \mathit{c}  \ottsym{,}  \ottnt{et''}  \ottsym{,}  \ottkw{clear} \, \ottnt{vt}  \ottsym{)}  \ottsym{\}}\\
\ottkw{Store} \, {r_{\textsf{p} } } \, {r_{\textsf{s} } }  \ottsym{:}  \ottsym{\{}  {t_{\textsf{pc} } }  \ottsym{=}  \mathit{n}  \ottsym{,}  {t_{\textsf{ci} } }  \ottsym{=}  \ottsym{(}  \textbf{NB}  \ottsym{,}  \mathit{c}  \ottsym{,}  \ottnt{et}  \ottsym{,}  \textbf{W}  \ottsym{)}  \ottsym{,}  {t_{\textsf{ni} } }  \ottsym{=}  \ottsym{(}  \ottnt{bt'}  \ottsym{,}  \mathit{c}  \ottsym{,}  \ottnt{et'}  \ottsym{,}  \ottnt{rt'}  \ottsym{)}  \ottsym{,}  {t_{r_{\textsf{p} } } }  \ottsym{=}  \textbf{W}  \ottsym{,}  {t_{r_{\textsf{s} } } }  \ottsym{=}  \ottnt{vt}  \ottsym{,}  {t_{\textsf{mem} } }  \ottsym{=}  \ottsym{(}  \ottnt{bt}  \ottsym{,}  \mathit{c}  \ottsym{,}  \ottnt{et''}  \ottsym{,}  \ottnt{vt'}  \ottsym{)}  \ottsym{\}}  \\  \implies  \ottsym{\{}  {t_{\textsf{pc} } }'  \ottsym{=}  \mathit{n}  \ottsym{,}  {t_{\textsf{mem} } }'  \ottsym{=}  \ottsym{(}  \textbf{NB}  \ottsym{,}  \mathit{c}  \ottsym{,}  \ottnt{et''}  \ottsym{,}  \ottnt{vt}  \ottsym{)}  \ottsym{,}  {t_{r_{\textsf{s} } } }'  \ottsym{=}  \ottkw{clear} \, \ottnt{vt}  \ottsym{\}}\\
\ottkw{Jump} \, \mathit{r}  \ottsym{:}  \ottsym{\{}  {t_{\textsf{pc} } }  \ottsym{=}  \mathit{n}  \ottsym{,}  {t_{\textsf{ci} } }  \ottsym{=}  \ottsym{(}  \textbf{NB}  \ottsym{,}  \mathit{c}  \ottsym{,}  \ottnt{et}  \ottsym{,}  \textbf{W}  \ottsym{)}  \ottsym{,}  {t_{\textsf{ni} } }  \ottsym{=}  \ottsym{(}  \ottnt{bt'}  \ottsym{,}  \mathit{c}  \ottsym{,}  \ottnt{et'}  \ottsym{,}  \ottnt{rt'}  \ottsym{)}  \ottsym{,}  {t_{\textsf{r} } }  \ottsym{=}  \textbf{W}  \ottsym{\}}  \implies  \ottsym{\{}  {t_{\textsf{pc} } }'  \ottsym{=}  \mathit{n}  \ottsym{\}}\\
\ottkw{Jump} \, \mathit{r}  \ottsym{:}  \ottsym{\{}  {t_{\textsf{pc} } }  \ottsym{=}   \mathit{n}  + 1   \ottsym{,}  {t_{\textsf{ci} } }  \ottsym{=}  \ottsym{(}  \textbf{NB}  \ottsym{,}  \mathit{c}  \ottsym{,}  \ottnt{et}  \ottsym{,}  \textbf{W}  \ottsym{)}  \ottsym{,}  {t_{\textsf{ni} } }  \ottsym{=}  \ottsym{(}  \ottnt{bt'}  \ottsym{,}  \mathit{c'}  \ottsym{,}  \ottnt{et'}  \ottsym{,}  \ottnt{rt'}  \ottsym{)}  \ottsym{,}  {t_{\textsf{r} } }  \ottsym{=}  \ottkw{Ret} \, \mathit{n} \, \mathit{c}  \ottsym{,}  {t_{r_{ \textsf{ret} } } }  \ottsym{=}  \textbf{O} \, \mathit{c}  \ottsym{\}}  \\  \implies  \ottsym{\{}  {t_{\textsf{pc} } }'  \ottsym{=}  \mathit{n}  \ottsym{,}  {t_{\textsf{r} } }'  \ottsym{=}  \bot  \ottsym{,}  {t_{r_{ {\textsf{aux} }_1 } } }'  \ottsym{=}  \bot  \ottsym{,}  {t_{r_{ {\textsf{aux} }_2 } } }'  \ottsym{=}  \bot  \ottsym{,}  {t_{r_{ {\textsf{aux} }_3 } } }'  \ottsym{=}  \bot  \ottsym{,}  {t_{r_{ \textsf{sp} } } }'  \ottsym{=}  \bot  \ottsym{\}} \text{ when } \mathit{c}  \not=  \mathit{c'} \\
\ottkw{Jal} \, \mathit{r}  \ottsym{:}  \ottsym{\{}  {t_{\textsf{pc} } }  \ottsym{=}  \mathit{n}  \ottsym{,}  {t_{\textsf{ci} } }  \ottsym{=}  \ottsym{(}  \textbf{NB}  \ottsym{,}  \mathit{c}  \ottsym{,}  \ottnt{et}  \ottsym{,}  \textbf{W}  \ottsym{)}  \ottsym{,}  {t_{\textsf{ni} } }  \ottsym{=}  \ottsym{(}  \ottnt{bt'}  \ottsym{,}  \mathit{c}  \ottsym{,}  \ottnt{et'}  \ottsym{,}  \ottnt{rt'}  \ottsym{)}  \ottsym{,}  {t_{\textsf{r} } }  \ottsym{=}  \textbf{W}  \ottsym{,}  {t_{r_{ \textsf{a} } } }  \ottsym{=}  \ottnt{vt'}  \ottsym{\}}  \implies  \ottsym{\{}  {t_{\textsf{pc} } }'  \ottsym{=}  \mathit{n}  \ottsym{,}  {t_{r_{ \textsf{a} } } }'  \ottsym{=}  \textbf{W}  \ottsym{\}} \\
\ottkw{Jal} \, \mathit{r}  \ottsym{:}  \ottsym{\{}  {t_{\textsf{pc} } }  \ottsym{=}  \mathit{n}  \ottsym{,}  {t_{\textsf{ci} } }  \ottsym{=}  \ottsym{(}  \textbf{NB}  \ottsym{,}  \mathit{c}  \ottsym{,}  \ottnt{et}  \ottsym{,}  \textbf{W}  \ottsym{)}  \ottsym{,}  {t_{\textsf{ni} } }  \ottsym{=}  \ottsym{(}  \ottnt{bt'}  \ottsym{,}  \mathit{c'}  \ottsym{,}  \textbf{EP} \, \mathit{c_{{\mathrm{1}}}}  \rightarrow  \mathit{c_{{\mathrm{2}}}}  \ottsym{,}  \ottnt{rt'}  \ottsym{)}  \ottsym{,}  {t_{\textsf{r} } }  \ottsym{=}  \textbf{W}  \ottsym{,}  {t_{r_{ \textsf{a} } } }  \ottsym{=}  \ottnt{vt'}  \ottsym{,}  {t_{r_{ {\textsf{arg} }_1 } } }  \ottsym{=}  \textbf{O} \, \mathit{c}  \ottsym{,}  {t_{r_{ {\textsf{arg} }_2 } } }  \ottsym{=}  \textbf{O} \, \mathit{c_{{\mathrm{1}}}}  \ottsym{\}}  \\  \implies  \ottsym{\{}  {t_{\textsf{pc} } }'  \ottsym{=}   \mathit{n}  + 1   \ottsym{,}  {t_{r_{ \textsf{a} } } }'  \ottsym{=}  \ottkw{Ret} \, \mathit{n} \, \mathit{c_{{\mathrm{2}}}}  \ottsym{,}  {t_{r_{ \textsf{ret} } } }'  \ottsym{=}  \bot  \ottsym{,}  {t_{r_{ \textsf{spp} } } }'  \ottsym{=}  \bot  \ottsym{,}  {t_{r_{ \textsf{sp} } } }'  \ottsym{=}  \bot  \ottsym{\}} \text{ when } \mathit{c}  \not=  \mathit{c'} \\
\ottkw{Bnz} \, \mathit{r} \, i  \ottsym{:}  \ottsym{\{}  {t_{\textsf{pc} } }  \ottsym{=}  \mathit{n}  \ottsym{,}  {t_{\textsf{ci} } }  \ottsym{=}  \ottsym{(}  \textbf{NB}  \ottsym{,}  \mathit{c}  \ottsym{,}  \ottnt{et}  \ottsym{,}  \textbf{W}  \ottsym{)}  \ottsym{,}  {t_{\textsf{ni} } }  \ottsym{=}  \ottsym{(}  \ottnt{bt'}  \ottsym{,}  \mathit{c}  \ottsym{,}  \ottnt{et'}  \ottsym{,}  \ottnt{rt'}  \ottsym{)}  \ottsym{,}  {t_{\textsf{r} } }  \ottsym{=}  \textbf{W}  \ottsym{\}}  \implies  \ottsym{\{}  {t_{\textsf{pc} } }'  \ottsym{=}  \mathit{n}  \ottsym{\}}
\end{array}
\]
\caption{The rules of our symbolic micro-policy
}
\label{fig:micropolicy}
\end{figure*}

Our micro-policy is presented in figure \ref{fig:micropolicy},
where we use the meta notation $\ottkw{clear} \, \ottnt{vt}$ for clearing return
capabilities:
It is equal to $\ottnt{vt}$, unless $\ottnt{vt}$ is a return capability, in
which case it is equal to $\bot$.

\yj{Explanation still todo.}\yj{Not enough time to do this well}
This micro-policy combines all the informal intuition we previously
gave
in sections \ref{sec:classisolation}, \ref{sec:calldiscipline}
and \ref{sec:typesafety} into one transfer function.
A notable optimization is that we use the tag on the next instruction
to distinguish internal from cross-compartment calls and returns.
Indeed, we don't need to enforce method call discipline nor to perform
type checking on internal calls and returns.
This is a crucial move with respect to both transparency and
efficiency:
It means that low-level components don't have to comply with the
source language's abstractions for their internal computations, and
moreover this lax monitoring will result in a lower overhead on
execution time for internal steps, since there will be less cache
misses.

\paragraph{Loader Initializing Memory Tags}

The loader first performs simple static checks: it must make sure that
(1) no import declaration is left (the program is complete);
(2) there exists a $\ottkw{methl} \, \mathit{c} \, \ottmv{m}$ region for each method $\ottmv{m}$ of each
exported class $\mathit{c}$;
(3) there exists a $\ottkw{stackl} \, \mathit{c}$ region for each exported class $\mathit{c}$;
(4) there exists an $\ottkw{objl} \, \mathit{o}$ region for each exported object $\mathit{o}$;
(5) all memory regions have a matching counterpart in the export
declarations.

If all these checks succeed the loader proceeds and tags all program
memory, following the export declarations:
Every memory region is tagged uniformly with the tag of its
compartment; which is $\mathit{c}$ for $\ottkw{methl} \, \mathit{c} \, \ottmv{m}$ and $\ottkw{stackl} \, \mathit{c}$
memory regions, and the exported type of $\mathit{o}$ for $\ottkw{objl} \, \mathit{o}$
memory regions.
The first memory cell of each method region $\ottkw{methl} \, \mathit{c} \, \ottmv{m}$ gets
tagged as an entry point with the exported type signature for method
$\ottmv{m}$ of class $\mathit{c}$, while all other memory cells in the program
get tagged as not being entry points.
All locations holding an encoded $\ottkw{Const} \, \ottsym{(}  \ottkw{objl} \, \mathit{o}  \ottsym{)} \, \mathit{r}$ instruction are
tagged as blessed instructions storing a $\mathit{c}$ object pointer in
register $\mathit{r}$ ($\textbf{B} \, \mathit{c}$),
where $\mathit{c}$ is the exported type of object $\mathit{o}$.
All locations that hold a symbolic pointer $\ottkw{objl} \, \mathit{o}$
are tagged as storing a pointer to an object of class $\mathit{c}$ ($\textbf{O} \, \mathit{c}$), where
$\mathit{c}$ is the exported type of object $\mathit{o}$.
This applies to cells in both object and stack memory regions.
The other stack memory cells are tagged as cleared values $\bot$,
and all the remaining memory cells as regular words $\textbf{W}$.


\iflater

\section{Proof Strategy}
\label{sec:proofs}

\ch{The strategy will include something about a fully abstract trace
  semantics, right? Or two of them?}
\yj{Ideally two of them indeed.}
\yj{However this will be very informal for the moment if I get to
write it, no time for deep formalization right now.}

In this section, we present the proof strategy which we plan to follow to
prove full abstraction for our compiler.
We state the lemmas we rely on, then take them as axioms to expose the
reasoning that leads to the final proof based on these lemmas.

\subsection{Preliminary Definitions}

We assume a definition for trace states, which we denote by $[TS]$,
together with a labeled reduction relation
$[EDT |- TS =[t]=> TS']$.

\subsection{Lemmas}

\begin{lem}[Correct Compilation]
For every source program $\ottnt{P}$, the corresponding compiled program
$\ottnt{P}  \hspace{-0.35em}\downarrow$ has the same behavior as $\ottnt{P}$.
\[
\forall \, \ottnt{P}  \ottsym{,}  \ottnt{P}  \hspace{-0.35em}\downarrow  \;\sim\;  \ottnt{P}
\]
\end{lem}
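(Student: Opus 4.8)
The plan is to factor the proof through the intermediate stack machine, mirroring the two-step structure of the compiler, and then compose. Concretely, I would prove two behavioral-equivalence results—one relating a source program $\ottnt{P}$ to its intermediate image $\mathscr{C}(\ottnt{P})$, and one relating an intermediate program $\ottnt{IP}$ to its target image $\mathscr{C}(\ottnt{IP})$—and obtain $\ottnt{P}  \sim  \ottnt{P}\hspace{-0.35em}\downarrow$ by transitivity, since $\ottnt{P}\hspace{-0.35em}\downarrow$ is by construction the target compilation of the intermediate compilation of $\ottnt{P}$. Here ``same behavior'' unfolds to: the source terminates (by reducing to an object value with empty continuation and empty call stack, or via $\ottkw{exit}$) with a given result exactly when the machine halts (via $\ottkw{Halt}$) with the corresponding value, and the source diverges exactly when the machine runs forever. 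Because both the source reduction relation $\longrightarrow$ and the target machine are deterministic, it suffices to establish a forward simulation for each stage and read it backwards; determinism of the target turns forward simulation into full behavioral equivalence, including the divergence case.

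For the source-to-intermediate stage I would define a simulation relation $\approx_{SI}$ between source configurations $\ottsym{(}  \ottnt{OT}  \ottsym{,}  \ottnt{CS}  \ottsym{,}  {o_{\textsf{t}}}  \ottsym{,}  {o_{\textsf{a}}}  \ottsym{,}  \ottnt{K}  \ottsym{,}  \ottnt{e}  \ottsym{)}$ and intermediate states, and prove the standard simulation diagram. The heart of this stage is exactly the invariant already stated for $\mathscr{A}$: running $\mathscr{A}(\ottnt{e})$ from a related state either diverges—precisely when evaluating $\ottnt{e}$ diverges—or terminates leaving a single extra object on the current compartment's local stack, and that object is the value of $\ottnt{e}$. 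I would prove this by structural induction on $\ottnt{e}$, using the evaluation contexts $\ottnt{E}$ and the continuation $\ottnt{K}$ to relate pending source work to the not-yet-executed suffix of compiled code together with the current local-stack contents. The nontrivial cases are method call and return—relating the source call stack $\ottnt{CS}$ to the frames pushed by $\ottkw{Call}$ and popped by $\ottkw{Ret}$—and the conditional, where I must check that the $\ottkw{Skeq}$ and $\ottkw{Skip}$ offsets computed from $\mathscr{L}$ land exactly on the branch selected by \textsc{test\_pop\_eq}/\textsc{test\_pop\_neq}. Left-to-right evaluation order and the side-effecting, single-argument method discipline must be tracked faithfully so that $\ottkw{exit}$/$\ottkw{Halt}$ observations coincide.

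For the intermediate-to-target stage I would define a relation $\approx_{IT}$ mapping each intermediate compartment to its memory layout—object regions $\ottkw{objl} \, \mathit{o}$, method regions $\ottkw{methl} \, \mathit{c} \, \ottmv{m}$, and the stack region $\ottkw{stackl} \, \mathit{c}$ with its top-of-stack pointer held in the first cell—together with the register conventions ($r_{\mathsf{sp}}$, $r_{\mathsf{spp}}$, $r_{\mathsf{one}} = 1$, $r_{\mathsf{tgt}}$, $r_{\mathsf{arg}}$, $r_{\mathsf{ret}}$, $r_{\mathsf{a}}$). The core is a per-instruction block simulation: each intermediate instruction is compiled by $\mathscr{A}(\mathit{c}, \cdot)$ to a fixed-length block whose length is recorded by $\mathscr{L}$, and I would show that executing that block advances the target by exactly $\mathscr{L}$ instructions and re-establishes $\approx_{IT}$. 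The delicate bookkeeping is around $\ottkw{Call}$ and $\ottkw{Ret}$, where $r_{\mathsf{tgt}}$ and $r_{\mathsf{arg}}$ are spilled to and restored from the local stack and the stack pointer is saved through $r_{\mathsf{spp}}$.

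The main obstacle I anticipate is not the state simulation itself but showing that the micro-policy monitor never fail-stops on compiler-produced code, so that protection does not destroy correctness. For this I would strengthen $\approx_{IT}$ with a tag invariant relating the logical state to the monitor's tags: every word carries the compartment colour of the region it inhabits; object pointers and blessed $\ottkw{Const} \, \ottsym{(}  \ottkw{objl} \, \mathit{o}  \ottsym{)}$ instructions carry the exported type of $\mathit{o}$; the PC tag equals the current call depth; at each nesting level exactly one live return capability $\ottkw{Ret} \, \mathit{n} \, {c_{\textsf{r}}}$ exists, residing in $r_{\mathsf{a}}$ or its saved slot; and the registers the rules clear are dead at the clearing point. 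I would show the loader's initial tagging establishes this invariant and that each compiled block preserves it while satisfying the premises of every rule it triggers in Figure~\ref{fig:micropolicy}—in particular that cross-compartment $\ottkw{Jal}$/$\ottkw{Jump}$ meet the entry-point, linear-capability, and type-matching side conditions, and that intra-compartment steps fall under the lax rules. Threading the segmented-memory addressing ($\ottnt{loc}  \ottsym{+}  \mathit{n}$ arithmetic and the next-instruction-tag checks) through these rules, and proving that linearity of the return capability is maintained globally, is where I expect the real work to lie.
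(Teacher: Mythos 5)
First, note that the paper contains no proof of this lemma to compare against: it appears only in a proof-strategy sketch (a section that is not even enabled in the compiled report) where the lemmas are explicitly ``taken as axioms'' so that the full-abstraction argument can be outlined, and the authors state that the compiler-security proof is unfinished and deferred to future work. Judged on its own merits, your plan is the natural one and is consistent with the paper's architecture: one simulation per compiler stage composed by transitivity; an induction on expressions built around exactly the invariant the paper states for $\mathscr{A}$ (evaluation either diverges in both worlds or leaves one extra value on the local stack, equal to the value of the expression); a block simulation for the intermediate-to-target stage keyed on $\mathscr{L}$; and---the part you rightly identify as the real work---a tag invariant showing that the loader's initial tagging and every step of compiled code satisfy the micro-policy rules, so that monitoring never failstops compiled code. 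That last obligation is essential and easy to overlook, and you state it correctly.

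There is, however, a genuine gap: you never address finiteness of target memory, and the paper itself points out that this falsifies the lemma as stated. Each compartment's local stack lives in a fixed-size region $\ottkw{stackl} \, \mathit{c}$, and the semantics failstops any $\ottkw{Store}$ whose address falls outside a valid region. The paper's own counterexample is the method body $\ottsym{(}  \ottkw{this}  \ottsym{;}  \mathit{o}  \ottsym{)}$: when the local stack is full, the compiled code's first action is to push ${r_{\textsf{tgt}}}$, which halts the machine, whereas the source expression simply returns $\mathit{o}$ to the caller---in the paper's words, ``this already breaks compiler correctness.'' Your forward simulation would break precisely there: the source steps but the target has no matching step, and no strengthening of the tag or register invariants can repair this, because the failure is resource exhaustion rather than monitoring. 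To obtain a provable statement you must either (i) prove the lemma against an idealized machine whose memory regions are unbounded and deal with the finite machine separately (the workaround the paper itself suggests, in the spirit of CompCert's infinite memory model), or (ii) weaken the statement so that behavioral equivalence is only required of executions in which no region is exhausted. As written, your proposal sets out to prove something that is false of the actual finite-memory target machine defined in the paper.
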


\begin{lem}[Separate Compilation]
For every two source programs $\ottnt{P}$ and $\ottnt{Q}$, the result
$\ottsym{(}  \ottnt{P}  \;\bowtie\;  \ottnt{Q}  \ottsym{)}  \hspace{-0.35em}\downarrow$ of the compilation of the high-level linking of
$\ottnt{P}$ and $\ottnt{Q}$ has the same behavior as the low-level linking
$\ottnt{P}  \hspace{-0.35em}\downarrow  \;\bowtie\;  \ottnt{Q}  \hspace{-0.35em}\downarrow$ of the separately compiled $\ottnt{P}  \hspace{-0.35em}\downarrow$ and $\ottnt{Q}  \hspace{-0.35em}\downarrow$.
\[
\forall \, \ottnt{P} \, \ottnt{Q}  \ottsym{,}  \ottsym{(}  \ottnt{P}  \;\bowtie\;  \ottnt{Q}  \ottsym{)}  \hspace{-0.35em}\downarrow  \;\sim\;  \ottnt{P}  \hspace{-0.35em}\downarrow  \;\bowtie\;  \ottnt{Q}  \hspace{-0.35em}\downarrow
\]
\end{lem}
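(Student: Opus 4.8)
The plan is to prove the stronger statement that compilation commutes with linking \emph{syntactically}, i.e.\ that $\ottsym{(}  \ottnt{P}  \;\bowtie\;  \ottnt{Q}  \ottsym{)}  \hspace{-0.35em}\downarrow$ and $\ottnt{P}  \hspace{-0.35em}\downarrow  \;\bowtie\;  \ottnt{Q}  \hspace{-0.35em}\downarrow$ denote the \emph{same} low-level program up to reordering of memory regions and declaration-table entries; the behavioral equivalence $\;\sim\;$ then follows immediately by reflexivity, since reordering a memory map or a declaration table does not affect the semantics. I would factor the argument through the intermediate language, proving one commutation lemma per compilation phase and composing them: writing $\hspace{-0.35em}\downarrow$ as the intermediate-to-target phase applied to the source-to-intermediate phase, it suffices to show that each phase $\Phi$ satisfies $\ottsym{(}  \ottnt{P}  \;\bowtie\;  \ottnt{Q}  \ottsym{)}  \Phi  \ottsym{=}  \ottnt{P}  \Phi  \;\bowtie\;  \ottnt{Q}  \Phi$, and then chain the two equalities.

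First I would observe that both phases are defined \emph{component-wise}: a source (resp.\ intermediate) program is a collection of components --- a class together with its object instances, its stack, and its method bodies --- packaged with import/export declaration tables, and each phase compiles every component independently while carrying the declaration tables along unchanged. Linking at every level juxtaposes the component definitions of its two operands and merges their tables (combining exports and cancelling matched imports). Consequently the components of $\ottnt{P}  \;\bowtie\;  \ottnt{Q}$ are exactly the disjoint union of those of $\ottnt{P}$ and of $\ottnt{Q}$, and the merged tables are computed the same way on both sides of the claimed equality. Since compilation preserves declaration tables verbatim, $\ottnt{P}$ and $\ottnt{Q}$ have compatible interfaces iff $\ottnt{P}  \hspace{-0.35em}\downarrow$ and $\ottnt{Q}  \hspace{-0.35em}\downarrow$ do, so both linkings are defined (or both fail) together and the merged tables coincide. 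The whole statement therefore reduces to showing that each individual component compiles to the same target code inside $\ottnt{P}  \;\bowtie\;  \ottnt{Q}$ as it does inside whichever of $\ottnt{P}$ or $\ottnt{Q}$ it came from.

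For the intermediate-to-target phase this is immediate: inspecting $ \mathscr{C} $ and $ \mathscr{A} $ (Figures~\ref{fig:i2tcomm}--\ref{fig:i2tcontrol}) shows that the code emitted for a compartment depends only on that compartment's own class name and instruction sequences --- cross-component references appear solely as the already-resolved symbolic locations $\ottkw{methl} \, \mathit{c'} \, \ottmv{m}$, which require no knowledge of the other components. The real work is the source-to-intermediate phase, whose only non-local ingredient is the type inference used to resolve method calls into annotated $\ottkw{Call} \, \mathit{c'} \, \ottmv{m}$ instructions via the judgment $\ottnt{P}  \ottsym{;}  \mathit{c}  \ottsym{,}  \ottmv{m}  \vdash  \ottnt{e}  \ottsym{:}  \mathit{c'}$. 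The key sub-lemma --- and the main obstacle --- is \emph{stability of type inference under linking}: for every expression $\ottnt{e}$ occurring in a method of a class defined in $\ottnt{P}$ and every class $\mathit{c'}$,
\[
\ottnt{P}  \ottsym{;}  \mathit{c}  \ottsym{,}  \ottmv{m}  \vdash  \ottnt{e}  \ottsym{:}  \mathit{c'} \iff \ottsym{(}  \ottnt{P}  \;\bowtie\;  \ottnt{Q}  \ottsym{)}  \ottsym{;}  \mathit{c}  \ottsym{,}  \ottmv{m}  \vdash  \ottnt{e}  \ottsym{:}  \mathit{c'}.
\]
I would prove this by induction on the typing derivation (equivalently on $\ottnt{e}$), the delicate cases being field selection and method call, where the result type is read off a class signature. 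Here interface compatibility is exactly what is needed: any signature that a $\ottnt{P}$-component reads for an externally defined class comes from one of $\ottnt{P}$'s import declarations, and compatibility forces that import declaration to be syntactically equal to the corresponding export declaration contributed by $\ottnt{Q}$; for classes internal to $\ottnt{P}$ the relevant definitions are untouched by linking. Thus every resolved annotation $\mathit{c'}$ is identical on both sides, so $ \mathscr{A} $ and $ \mathscr{C} $ emit identical intermediate code. Composing the two phase-commutation results yields the syntactic equality (up to reordering), and hence $\ottsym{(}  \ottnt{P}  \;\bowtie\;  \ottnt{Q}  \ottsym{)}  \hspace{-0.35em}\downarrow  \;\sim\;  \ottnt{P}  \hspace{-0.35em}\downarrow  \;\bowtie\;  \ottnt{Q}  \hspace{-0.35em}\downarrow$ as required.
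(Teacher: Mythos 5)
The paper never actually proves this lemma, so there is no proof of record to compare yours against: the statement lives only in the unfinished ``Proof Strategy'' section (which is even excluded from the compiled document by the \texttt{iflater} conditional), where the authors explicitly say the lemmas are ``taken as axioms'' to derive the full-abstraction theorem, and elsewhere they state that the security proof is not finished. Your proposal must therefore be judged on its own merits as an attempt to fill a gap the paper deliberately leaves open.

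On those merits, your strategy is the natural one and is essentially sound: both compiler phases are component-wise, the declaration tables $\ottnt{IDT}$/$\ottnt{EDT}$ pass through compilation untouched, and linking at every level is juxtaposition of component definitions plus the same table merge, so reducing the claim to syntactic commutation up to permutation of map entries---and then getting $\;\sim\;$ for free, since the semantics only ever looks maps up by name or symbolic location---is exactly the right move. One step is incomplete as written, though. Your ``stability of type inference under linking'' equivalence is justified solely by interface compatibility, but compatibility alone does not close the loop: when typing an expression of a $\ottnt{P}$-component inside $\ottnt{P}  \;\bowtie\;  \ottnt{Q}$, a call into a $\ottnt{Q}$-class is resolved against $\ottnt{Q}$'s \emph{definition} (that class is now internal to the linked program), whereas inside $\ottnt{P}$ alone it is resolved against $\ottnt{P}$'s \emph{import declaration}. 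Compatibility equates $\ottnt{P}$'s import with $\ottnt{Q}$'s export declaration; you still need that $\ottnt{Q}$'s export declaration agrees with the signature of $\ottnt{Q}$'s actual class definition, which holds only because the paper restricts attention to statically well-typed components whose definitions match their declared interfaces. So both the lemma and your sub-lemma implicitly carry the hypotheses that $\ottnt{P}$ and $\ottnt{Q}$ are well-typed and interface-compatible; with those hypotheses made explicit, your induction on the typing derivation goes through. A final point in your favor: the paper defines behavioral equivalence only for \emph{complete} programs, whereas the lemma relates two possibly partial ones; your syntactic-equality route conveniently sidesteps the question of what $\;\sim\;$ even means there, which a direct semantic argument would have to confront.
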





\subsection{Full Abstraction}

\begin{thm}
Let $\ottnt{P}$, $\ottnt{Q}$ be arbitrary high-level programs such that for
every low-level context $\ottnt{a}$, $\ottnt{a}  \ottsym{[}  \ottnt{P}  \hspace{-0.35em}\downarrow  \ottsym{]}$ has the same behavior
as $\ottnt{a}  \ottsym{[}  \ottnt{Q}  \hspace{-0.35em}\downarrow  \ottsym{]}$.
Then, for every high-level context $\ottnt{A}$, $\ottnt{A}  \ottsym{[}  \ottnt{P}  \ottsym{]}$ has the same
behavior as $\ottnt{A}  \ottsym{[}  \ottnt{Q}  \ottsym{]}$.
\[
\forall \, \ottnt{P} \, \ottnt{Q}  \ottsym{,}  \ottsym{(}  \forall \, \ottnt{a}  \ottsym{,}  \ottnt{a}  \ottsym{[}  \ottnt{P}  \hspace{-0.35em}\downarrow  \ottsym{]}  \;\sim\;  \ottnt{a}  \ottsym{[}  \ottnt{Q}  \hspace{-0.35em}\downarrow  \ottsym{]}  \ottsym{)}  \rightarrow  \ottsym{(}  \forall \, \ottnt{A}  \ottsym{,}  \ottnt{A}  \ottsym{[}  \ottnt{P}  \ottsym{]}  \;\sim\;  \ottnt{A}  \ottsym{[}  \ottnt{Q}  \ottsym{]}  \ottsym{)}
\]
\end{thm}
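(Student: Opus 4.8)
The plan is to recognize this as the \emph{equivalence-reflection} direction of full abstraction and to prove it directly, by chaining the two lemmas already stated (Correct Compilation and Separate Compilation), \emph{without} appealing to the micro-policy at all. Concretely, I would fix arbitrary high-level programs $P$, $Q$ and an arbitrary high-level context $A$, assume the low-level hypothesis $\forall a,\ a[P{\downarrow}] \sim a[Q{\downarrow}]$, and show $A[P] \sim A[Q]$. The only low-level context I need is $a := A{\downarrow}$, the compilation of the given high-level context. The crux is that context insertion is an instance of linking, so that compiling $A[P]$ agrees behaviorally with plugging $P{\downarrow}$ into the compiled context $A{\downarrow}$.

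The argument is then a short chain of behavioral equalities. By Correct Compilation, $A[P] \sim (A[P]){\downarrow}$ and $A[Q] \sim (A[Q]){\downarrow}$. Viewing $A[P]$ as the linking $A \bowtie P$ and applying Separate Compilation gives $(A[P]){\downarrow} \sim A{\downarrow} \bowtie P{\downarrow} = A{\downarrow}[P{\downarrow}]$, and symmetrically $(A[Q]){\downarrow} \sim A{\downarrow}[Q{\downarrow}]$. Instantiating the hypothesis at $a = A{\downarrow}$ yields $A{\downarrow}[P{\downarrow}] \sim A{\downarrow}[Q{\downarrow}]$. Since "has the same observable behavior" is an equivalence relation, and hence transitive, stitching these four equivalences together gives $A[P] \sim A[Q]$; as $A$ was arbitrary, the conclusion follows. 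Equivalently, one may phrase this by contraposition: any high-level distinguisher $A$ compiles to a low-level distinguisher $A{\downarrow}$.

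The steps that actually need care are the side conditions that make $A{\downarrow}$ a legitimate low-level context. First, I must confirm that compilation maps high-level contexts to low-level contexts and commutes with hole-filling up to $\sim$ --- this is exactly the content of Separate Compilation, \emph{provided} $A[P]$ is literally the linking $A \bowtie P$; if the paper's context insertion is only morally linking, I would first establish a small bridging lemma $(A[P]){\downarrow} \sim A{\downarrow}[P{\downarrow}]$. Second, I must ensure interface compatibility: since $A$ is a well-formed high-level context for both $P$ and $Q$, and compilation preserves interfaces, $A{\downarrow}$ is compatible with $P{\downarrow}$ and with $Q{\downarrow}$, so both low-level linkings are defined. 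I do \emph{not} expect this direction to be the hard one: it rests entirely on compiler correctness and compositionality, and the micro-policy plays no role here. The genuine difficulty of full abstraction lives in the \emph{converse} (equivalence-preservation) direction, not stated in this theorem, where one must show that no low-level context can distinguish $P{\downarrow}$ from $Q{\downarrow}$ beyond what some high-level context already can --- and it is there that the class-isolation, call-discipline, and type-safety guarantees enforced by the micro-policy become essential.
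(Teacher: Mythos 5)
Your proposal is correct and matches the paper's own proof essentially step for step: Correct Compilation gives $A[P] \sim (A[P]){\downarrow}$, Separate Compilation gives $(A[P]){\downarrow} \sim A{\downarrow}[P{\downarrow}]$, the hypothesis instantiated at $a = A{\downarrow}$ bridges to $A{\downarrow}[Q{\downarrow}]$, and the same two lemmas plus transitivity close the chain back to $A[Q]$. Your additional remarks on side conditions (context insertion as linking, interface compatibility) are sensible elaborations of the same argument, not a different route.
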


\begin{proof}
From correct compilation we know that $\ottnt{A}  \ottsym{[}  \ottnt{P}  \ottsym{]}  \;\sim\;  \ottnt{A}  \ottsym{[}  \ottnt{P}  \ottsym{]}  \hspace{-0.35em}\downarrow$.
From separate compilation, we know that $\ottnt{A}  \ottsym{[}  \ottnt{P}  \ottsym{]}  \hspace{-0.35em}\downarrow  \;\sim\;  \ottnt{A}  \hspace{-0.35em}\downarrow  \ottsym{[}  \ottnt{P}  \hspace{-0.35em}\downarrow  \ottsym{]}$.
By applying the hypothesis to $\ottnt{a} = \ottnt{A}  \hspace{-0.35em}\downarrow$ we get that
$\ottnt{A}  \hspace{-0.35em}\downarrow  \ottsym{[}  \ottnt{P}  \hspace{-0.35em}\downarrow  \ottsym{]}  \;\sim\;  \ottnt{A}  \hspace{-0.35em}\downarrow  \ottsym{[}  \ottnt{Q}  \hspace{-0.35em}\downarrow  \ottsym{]}$.
We can apply again separate and correct compilation to get that
$\ottnt{A}  \hspace{-0.35em}\downarrow  \ottsym{[}  \ottnt{Q}  \hspace{-0.35em}\downarrow  \ottsym{]}  \;\sim\;  \ottnt{A}  \ottsym{[}  \ottnt{P}  \ottsym{]}  \hspace{-0.35em}\downarrow$ and $\ottnt{A}  \ottsym{[}  \ottnt{Q}  \ottsym{]}  \hspace{-0.35em}\downarrow  \;\sim\;  \ottnt{A}  \ottsym{[}  \ottnt{Q}  \ottsym{]}$.
Hence, by transitivity of our equivalence relation, we get that
$\ottnt{A}  \ottsym{[}  \ottnt{P}  \ottsym{]}  \;\sim\;  \ottnt{A}  \ottsym{[}  \ottnt{Q}  \ottsym{]}$.
\end{proof}

\begin{thm}
Let $\ottnt{P}$, $\ottnt{Q}$ be arbitrary high-level programs such that for
every high-level context $\ottnt{A}$, $\ottnt{A}  \ottsym{[}  \ottnt{P}  \ottsym{]}$ has the same behavior
as $\ottnt{A}  \ottsym{[}  \ottnt{Q}  \ottsym{]}$.
Then, for every low-level context $\ottnt{a}$, $\ottnt{a}  \ottsym{[}  \ottnt{P}  \hspace{-0.35em}\downarrow  \ottsym{]}$ has the same
behavior as $\ottnt{a}  \ottsym{[}  \ottnt{Q}  \hspace{-0.35em}\downarrow  \ottsym{]}$.
\[
\forall\;\ottnt{P}\;\ottnt{Q}\;\ottsym{,} \\
\ottsym{(}  \forall \, \ottnt{A}  \ottsym{,}  \ottnt{A}  \ottsym{[}  \ottnt{P}  \ottsym{]}  \;\sim\;  \ottnt{A}  \ottsym{[}  \ottnt{Q}  \ottsym{]}  \ottsym{)}  \rightarrow  \ottsym{(}  \forall \, \ottnt{a}  \ottsym{,}  \ottnt{a}  \ottsym{[}  \ottnt{P}  \hspace{-0.35em}\downarrow  \ottsym{]}  \;\sim\;  \ottnt{a}  \ottsym{[}  \ottnt{Q}  \hspace{-0.35em}\downarrow  \ottsym{]}  \ottsym{)}
\]
\end{thm}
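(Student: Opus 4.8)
The statement is the hard, equivalence-\emph{preserving} direction of full abstraction --- the one asserting that the micro-policy really does cut low-level attackers down to the power of high-level ones. Rather than back-translate a distinguishing low-level context directly, the plan is to factor the argument through a trace semantics that records only the cross-boundary interactions: the $\gamma$ actions (calls, returns, and the termination marker $\checkmark$), decorated with $?$ for the context and $!$ for the program and driven by the labeled relation $\mathit{TS} \xRightarrow{t} \mathit{TS}'$. Writing $\mathrm{Tr}(\cdot)$ for the set of such traces a complete program can exhibit, I would prove three lemmas and chain them forwards.

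The three ingredients are: (i) \emph{high-level definability} --- if $\mathrm{Tr}(P) \neq \mathrm{Tr}(Q)$ then some high-level context $A$ distinguishes $P$ and $Q$; given a minimal distinguishing trace I build $A$ as a set of source classes that replays the context-side actions in order, storing across its callbacks which objects it has been handed (using field update for state), branching on their identities (using $==$), and finally signalling the divergence with $\ottkw{exit}$; (ii) \emph{compiler trace correspondence} --- $\mathrm{Tr}(P\downarrow) = \mathrm{Tr}(P)$, refining correct compilation to the granularity of boundary actions, so that a compiled method emits a compartment-crossing $\ottkw{Jal}$/$\ottkw{Jump}$ exactly when the source performs a cross-class call/return, carrying the corresponding object names; and (iii) \emph{low-level adequacy} --- if $\mathrm{Tr}(p) = \mathrm{Tr}(q)$ then $a[p] \sim a[q]$ for every low-level context $a$.

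With these the theorem is immediate and forward: from the hypothesis $\forall A,\ A[P]\sim A[Q]$, lemma (i) contrapositively gives $\mathrm{Tr}(P)=\mathrm{Tr}(Q)$; lemma (ii) propagates this to $\mathrm{Tr}(P\downarrow)=\mathrm{Tr}(Q\downarrow)$; and lemma (iii) then yields $a[P\downarrow]\sim a[Q\downarrow]$ for all $a$. I would prove (iii) by a bisimulation between the closed semantics of $a[p]$ and the open (trace) semantics of $p$ interacting with $a$, maintaining a global invariant on the monitor's tags: the pc tag equals the current call depth, there is exactly one live return capability per depth, object pointers carry their true types, and the fixed registers are cleared across boundaries. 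This invariant is precisely what the rules of \autoref{fig:micropolicy} preserve, and it is what forces $a$ to interact solely through the trace alphabet --- no reading program memory (class isolation), no forged or duplicated returns (linear capabilities), no leaked caller identity (register cleaning), and no ill-typed arguments (dynamic type checking).

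The main obstacle is lemma (iii), the low-level adequacy result, since it is where the whole micro-policy design must be shown to do its job: I must prove both that the tag invariant is preserved by every rule and, harder, that it is strong enough to confine an arbitrary adversarial context so that its only observable influence is its trace. The delicate points are the handling of nested cross-boundary calls and the linearity of return capabilities --- ensuring that no stale capability for depth $\mathit{n}+1$ survives once control returns to depth $\mathit{n}$ --- together with arguing that the cleared-register tags genuinely block information flow at the boundary. Lemma (i) is also nontrivial, since the back-translated context must reconstruct enough state to continue a trace using only private fields and identity tests; but the adequacy proof is the crux, and it is the step that actually validates the novel method-call-discipline and type-safety parts of the policy.
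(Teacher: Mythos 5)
Your proposal cannot be checked against the paper's own proof for a simple reason: the paper does not contain one. This theorem appears only as a bare statement in the deferred proof-strategy section, the authors say elsewhere that the security proof ``is not yet finished,'' and the only direction they actually prove is the \emph{converse} implication, which goes through by instantiating its hypothesis (a universal over low-level contexts) at $\ottnt{a} = \ottnt{A}\hspace{-0.35em}\downarrow$ and appealing to the Correct Compilation and Separate Compilation lemmas. That easy argument is structurally unavailable for the present direction---an arbitrary low-level context $\ottnt{a}$ need not be of the form $\ottnt{A}\hspace{-0.35em}\downarrow$ for any $\ottnt{A}$---and you rightly do not attempt it. Your factorization through a cross-boundary trace semantics (high-level definability, compiler trace correspondence, low-level adequacy) is a logically valid chain: the hypothesis gives $\mathrm{Tr}(\ottnt{P})=\mathrm{Tr}(\ottnt{Q})$ by contraposition of (i), lemma (ii) transports this to the compiled programs, and lemma (iii) yields the conclusion. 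It is also exactly the machinery the paper gestures at without using: its ``preliminary definitions'' introduce trace states and a labeled reduction relation, its grammar already contains your trace alphabet ($\gamma$ actions with $?$/$!$ polarity and $\checkmark$), and the authors' own notes call for fully abstract trace semantics at both levels, which is precisely your lemmas (i) and (iii). So your proposal is a credible plan for what the paper explicitly leaves as future work, along standard Jeffrey--Rathke/Patrignani-style lines.

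Two gaps deserve to be named. First, lemma (ii) as you state it, $\mathrm{Tr}(\ottnt{P}\hspace{-0.35em}\downarrow) = \mathrm{Tr}(\ottnt{P})$, is \emph{false} for the target machine as defined: memory regions---in particular the per-compartment stack regions---are finite, and the paper's ``Finite Memory and Full Abstraction'' discussion in \autoref{sec:discussion} gives the counterexample that compiled $\ottsym{(}  \ottkw{this}  \ottsym{;}  \mathit{o}  \ottsym{)}$ failstops on a full stack where the source simply returns $\mathit{o}$. The paper's own Correct Compilation lemma suffers the same defect, and the paper sketches two remedies (an infinite-memory idealization of the symbolic machine, or a property weakened to discount resource exhaustion); your strategy silently assumes one of them, and a real proof must commit, since the discrepancy propagates into the trace sets and breaks your equational chain. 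Second, the back-translation in lemma (i) is more delicate than your sketch suggests: the reconstructed context has no numbers and no allocation, so it must encode its position in the trace and the objects it has received using only fields of its own static objects and identity tests, and it must be well-typed against the interfaces that the trace semantics exposes while still producing, at exactly the right moments, every object name the trace says the context supplies. This is doable for a finite trace prefix, but it is the step where such proofs typically break, and it deserves the same caution you correctly reserve for the adequacy lemma (iii). Modulo these two points, the skeleton is sound and consistent with the paper's intended strategy.
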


\subsection{Secure Compilation of Mutually Distrustful Components}

\yj{This proof can't reuse directly the full abstraction theorem
because the assumptions we have are weaker for both directions, but
maybe it can reuse the lemmas?}

\fi


\section{Related Work}
\label{sec:related}

\subsection{Secure Compilation}

Secure compilation has been the topic of many works~\cite{abadi_aslr12,
DBLP:conf/csfw/JagadeesanPRR11,patrignani2014secure,
DBLP:conf/popl/FournetSCDSL13}, but only recently
has the problem of targeting machine code been
considered~\cite{DBLP:conf/csfw/AgtenSJP12,patrignani2014secure}.
Moreover, all of these works\ch{\bf TODO: minus the recent parallel work
  from Leuvan, right?  dangerous claim, anyway}
focus on protecting a program from its
context, rather than protecting mutually distrustful components like
we do.

Abadi and Plotkin~\cite{abadi_aslr12} formalized address space layout
randomization as a secure implementation scheme for private locations,
with probabilistic guarantees:
They expressed it as a fully-abstract compilation between a source
language featuring public and private locations, and a lower-level
language in which memory addresses are natural numbers.
Follow-ups were presented by Jagadeesan \ETAL
\cite{DBLP:conf/csfw/JagadeesanPRR11} and Abadi, Planul and
Plotkin~\cite{DBLP:journals/entcs/AbadiPP13,DBLP:conf/post/AbadiP13},
with extended programming languages.

Fournet \ETAL \cite{DBLP:conf/popl/FournetSCDSL13} constructed a
fully-abstract compiler from a variant of ML~\cite{SwamyCFSBY13}
to JavaScript.
Their protection scheme protects one generated script from its
context.
A key point is that the protected script must have first-starter
privilege:
The first script which gets executed can overwrite objects in the
global namespace, on which other scripts may rely.
Hence, this scheme can't protect mutually distrustful components from
each other, since only one component can be the first to execute.

The closest work to ours is
recent~\cite{DBLP:conf/csfw/AgtenSJP12,patrignani2014secure,
patrignani_thesis} and ongoing~\cite{PatrignaniDP15,
patrignani_thesis} work by Agten, Patrignani, Piessens, \ETAL
They target a set of hardware extensions, which they call
protected module architectures~\cite{DBLP:conf/isca/McKeenABRSSS13,
DBLP:conf/isca/HoekstraLPPC13,DBLP:conf/ccs/StrackxP12}.
Micro-policies could be used to define a protected module
architecture, but they offer finer-grained protection:
In our work, this finer-grained protection allows us to manage linear
return capabilities and perform dynamic type-checking.
Micro-policies also allow us to support a stronger attacker model of dynamic
corruption, by means of a secure compilation of mutually distrustful
components.
As we discovered recently~\cite{fcs15,PatrignaniDP15},
Patrignani \ETAL are currently trying to extend there previous work to
ensure a secure compilation of mutually distrustful components.
Our hope is that this parallel work can lead to interesting comparison
and exchange, because the mechanisms we use are different:
We believe that exploiting the full power of micro-policies can
provide stronger guarantees and better performance than using
micro-policies just as an instance of protected module architectures.

\subsection{Multi-Language Approaches}

In contrast with previous fully-abstract compilers where a
single source component gets protected from its context, we protect
linked mutually distrustful low-level components from each other.

One benefit of this approach is that components need not share
a common source language.
While our current protection mechanism is still deeply connected to our
source language, in principle each component could have been
written in a specific source language and compiled using a specific
compiler.

It is actually common in real-life that the final program comes from a
mix of components that were all written in different languages.
Giving semantics to multi-language programs and building verified
compilers that can provide guarantees for them is a hot topic, studied
in particular by
Ahmed \ETAL\cite{DBLP:conf/snapl/Ahmed15,DBLP:conf/esop/PercontiA14}
and Ramananandro \ETAL\cite{DBLP:conf/cpp/RamananandroSWKF15}.

\section{Discussion and Future Work}
\label{sec:discussion}


In this section, we discuss the limitations of our work and the
generality of our approach, as well as future work.
%




\subsection{Finite Memory and Full Abstraction}

While memory is infinite in our high-level language,
memory is finite in any target low-level machine.
Our symbolic micro-policy machine is no exception:
memory regions have a fixed finite size.
This means that memory exhaustion or exposing the size of regions can
break full abstraction.

Let us first recall how memory regions are used in this work.
Our compiler translates method bodies from high-level expressions to
machine code:
Each method gets a dedicated memory region in the process, to store
its compiled code.
This code manipulates a stack that is local to the method's
compartment; and this stack also gets its own memory region.
Finally, each object gets a dedicated memory region, storing low-level
values for its fields.

The first problem is potential exhaustion of the local stack's memory:
When the stack is full and the program tries to put a new value on
top, the machine will stop.
This already breaks compiler correctness:
Executing the compiled code for $\ottsym{(}  \ottkw{this}  \ottsym{;}  \mathit{o}  \ottsym{)}$ will for example first
try to put ${r_{\textsf{tgt} } }$ on top on the full stack and hence stop the
machine, when the high-level expression would simply return $\mathit{o}$ to
the caller.
Full abstraction, which typically relies on a compiler correctness
lemma, is broken as well:
The low-level attacker can now distinguish between method bodies
$\ottsym{(}  \ottkw{this}  \ottsym{;}  \mathit{o}  \ottsym{)}$ and $\mathit{o}$, even though they have the same behavior in
the high-level.
One workaround would be to add one more intermediate step in the
compilation chain, where the symbolic machine would have infinite
memory regions:
Full abstraction would be preserved with respect to this machine, and
weakened when we move to finite memory regions.
This workaround is the one taken by CompCert~\cite{leroy09:compcert},
which until very recently~\cite{MullenTG15} only formalized and proved
something about infinite memory models.
%
A better but probably more difficult to implement solution would be to
keep the current finite-memory machine, but make it explicitly in the
property (e.g. compiler correctness, full abstraction) that in cases
such as resource exhaustion all bets are off.

The second problem is that the size of compiled method bodies, as well
as the number of private fields of compiled objects, exactly match the
size of their dedicated memory regions.
This does not cause problems with the current memory model
in which region locations exist in isolation of other regions.
In future work, switching to a more concrete view of memory
could lead to the exposure of information to the attacker:
If a program region happens to be surrounded by attacker
memory regions, then the attacker could infer the size of the program
region and hence get size information about a method body or an
object.
Because there is no similar information available in the high-level,
this will likely break full abstraction.
The concrete loader could mitigate this problem, for example by
padding memory regions so that all implementations of the same
component interface get the same size for each dedicated memory
region.
This would be, however, wasteful in practice.
Alternatively, we could weaken the property to allow leaking this
information. For instance we could weaken full abstraction to say that
one can only replace a compartment with an equivalent one that has the
same sizes when compiled. This would weaken the property quite a lot,
but it would not waste memory.  There could also be interesting
compromises in terms of security vs practicality, in which we pad to
fixed size blocks and we only leak the number of blocks.

These problems are not specific to our compiler.
Fournet \ETAL\cite{DBLP:conf/popl/FournetSCDSL13} target JavaScript
and view stack and heap memory exhaustion as a side channel of
concrete JavaScript implementations:
It is not modeled by the semantics they give to JavaScript.
Similarly, the key to the full abstraction result of Patrignani
\ETAL\cite{patrignani2014secure,patrignani2015trsem}
(the soundness and completeness of their trace semantics) is given
under the assumption that there is no overflow of the program's
stack~\cite{patrignani2015trsem}.
Patrignani \ETAL\cite{patrignani2014secure} also pad the protected
program so that for all implementations to use the same amount of
memory.

\subsection{Efficiency and Transparency}
\label{sec:efficiency}

Our micro-policy constrains low-level programs so as to prevent them
from taking potentially harmful actions.
However, we should make sure 1) that this monitoring has reasonable
impact on the performance of these programs; and 2) that these
programs are not constrained too much, in particular that benign
low-level components are not prevented from interacting with compiled
components.

\ch{Even before your first step, hardware support in the form of
  micro-policies is supposed to make enforcement more efficient than
  say doing all this in software. Alternatively mix with the next
  point by stating that we expect compartment crossing to be low
  overhead with micro-policies.}

A first, good step in this direction is that we don't enforce method
call discipline nor type safety on internal calls and returns, but
only on cross-compartment calls and returns.
This is a good idea for both efficiency and transparency:
Checks are lighter, leading to better caching and thus better
performance; and low-level programs are less constrained, while still
being prevented from taking harmful actions.

However, the constraints we set may still be too restrictive:
For example, we enforce an object-oriented view on function calls and
on data, we limit the number of arguments a function can pass through
registers, and we force the programs to comply with our type system.
\ch{I find these examples way too generally stated, in a way that no
  amount of wrapping can actually allow you to avoid these things {\em
    in full generality}. Maybe I'm wrong, but I only know one example
  of a break in transparency that's not strictly necessary, and that's
  linear return capabilities. All the rest seems to be something
  that's forced on us by the property we enforce, so no way to get
  more transparent than what we are; e.g.  in terms of type checking
  things on bounderies, number of arguments, etc. For instance what
  can a wrapper do to allow ill-typed arguments or more arguments and
  results than our functions can take?}\ch{As a fix maybe just explain
  these things by some concrete examples instead of in such general
  terms.}
This suggests the need for \emph{wrappers}.
Since internal calls and returns are not heavily monitored, we can
define methods that respect our constraints and internally call the
non-compliant benign low-level code:
This low-level code can then take its non-harmful, internal actions
without constraints --- hence with good performance --- until it
internally returns to the wrapper, which will appropriately
convert the result of the call before returning to its caller.

\subsection{Future Work}


The first crucial next step is to finish the full abstraction proof.
As we explain in section \ref{sec:attacker-model}, however, full abstraction
does not capture the exact notion of secure compilation we claim to
provide.
We will thus formalize a suitable characterization and prove it,
hopefully reusing lemmas from the full abstraction proof.
Afterwards, we will implement the compiler and conduct experiments
that could confirm or deny our hopes regarding efficiency and
transparency.

There are several ways to extend this work.
The most obvious would be to support more features that are common in
object-oriented languages, such as dynamic allocation, inheritance,
packages or exceptions.
Another way would be to move to functional languages, which provide
different, interesting challenges.
Taking as source language a lambda-calculus with references and simple
modules, would be a first step in this direction, before moving to
larger ML subsets.

Finally, the micro-policy use in this work was built progressively,
out of distinct micro-policies which we designed somewhat
independently.
Composing micro-policies in a systematic and correct way, without
breaking the guarantees of any of the composed policies, is still an
open problem that would be very interesting to study on its own.

\subsection{Scaling to Real-World Languages}

\ch{Merge all this into the Future Work subsection}

Our micro-policy seems to scale up easily to more complicated
languages, except for dynamic type checking which will be trickier.

Sub-typing, which arises with inheritance, would bring the first new
challenges in this respect.
Our dynamic type checking mechanism moreover requires encoding types
in tags:
When we move to languages with richer type systems, we will have to
explore in more detail the whole field of research on dynamic type and
contract checking~\cite{FindlerF02}.
\iflater
\ch{citations missing, e.g. look at Robby Findler's
  ``Contracts for higher-order functions'' in ICFP 2002 and
  the more recent widely cited work that cites it (google scholar
  is a great tool for this)}
\fi
%

Compartmentalization could easily be extended to deal with public
fields, by distinguishing memory locations that hold public field
values from other locations.

Dynamic allocation seems possible and would be handled by monitor
services, setting appropriate tags on the allocated memory region.
However, such tag initialization is expensive for large memory
regions in the current state of the PUMP, and could benefit
from additional hardware acceleration~\cite{mondrian_asplos2002}.
\iflater
\ch{ and could
  benefit from hardware support .. it's actually easy to do if we ever
  compact tag representation in DRAM; there are 1000 different schemes
  out there already (after deadline see references in ASPLOS paper if
  curious)}\yj{Yes, I'm interested}.
\fi

Finally, functional languages bring interesting challenges that have
little to do with the work presented in this document, such as closure
protection and polymorphism.
We plan to study these languages and discover how micro-policies can
help in securely compiling them.

\qed




\appendix

\section{Appendix}

\subsection{Encoding Usual Types}
\label{sec:encoding}

\begin{figure}[hb]
\centering
\begin{verbatim}
export obj decl tt : Unit
export class decl tt { }

obj tt : Unit { }
class Unit { }
\end{verbatim}
\caption{Encoding the unit type}
\label{fig:unit}
\end{figure}

\begin{figure}[b]
\centering
\begin{verbatim}
import obj decl tt : Unit
import class decl Unit { }

export obj decl t, f : Bool
export class decl Bool {
  Bool not(Unit),
  Bool and(Bool),
  Bool or(Bool)
}

obj t : Bool { }
obj f : Bool { }
class Bool {
  Bool not(Unit) { this == t ? f : t }
  Bool and(Bool) { this == t ? arg : f }
  Bool or(Bool) { this == t ? t : arg }
}
\end{verbatim}
\caption{Encoding booleans}
\label{fig:bool}
\end{figure}

\begin{figure}[b]
\centering
\begin{verbatim}
export obj decl zero, one, two, three : BNat4
export class decl BNat4 {
  BNat4 add(BNat4),
  BNat4 mul(BNat4 arg)
}

obj zero  : BNat4 { zero, one }
obj one   : BNat4 { zero, two }
obj two   : BNat4 { one,  three }
obj three : BNat4 { two,  three }
class BNat4 {
  BNat4 pred, succ;

  BNat4 add(BNat4) {
    arg == zero ?
      this : this.succ.add(arg.pred)
  }

  BNat4 mul(BNat4) {
    arg == zero ?
      zero : this.mul(arg.pred).add(this)
  }
}
\end{verbatim}
\caption{Encoding bounded natural numbers}
\label{fig:bnat}
\end{figure}

Here we give a flavor of what programming looks like with our source
language by encoding some familiar types using our class mechanism:
the unit type in figure \ref{fig:unit}, booleans in
figure \ref{fig:bool}, and bounded natural numbers in
figure~\ref{fig:bnat}.
Encoding unbounded natural numbers would be possible with dynamic
allocation, which is not part of our source language at the moment.

For better understanding, we use a syntax with strings for names which
is easily mapped to our source language syntax.
We present the three encoded types as distinct components, resulting
in quite verbose programs:
Linking them together in the high-level would result in one partial
program with three classes and no import declarations.

\subsection{Source Semantics}
\label{sec:sourcesem}

The semantics we propose for the source language is a
small-step continuation-based semantics.
It is particularly interesting to present this variant because it is
very close to our intermediate machine and can help understanding how
the source to intermediate compilation works.

After loading, source programs become a pair of a class table $\ottnt{CT}$
and an initial configuration $\ottnt{Cfg}$.
The syntax for configurations is presented on
figure~\ref{fig:sourcecfg}.

A configuration $\ottsym{(}  \ottnt{OT}  \ottsym{,}  \ottnt{CS}  \ottsym{,}  {o_{\textsf{t} } }  \ottsym{,}  {o_{\textsf{a} } }  \ottsym{,}  \ottnt{K}  \ottsym{,}  \ottnt{e}  \ottsym{)}$ can be thought of
as a machine state:
$\ottnt{OT}$ is the object table, from which we fetch field values and
which gets updated when we perform field updates.
$\ottnt{CS}$ is the call stack, on top of which we store the current
environment upon call.
${o_{\textsf{t} } }$ is the current object and ${o_{\textsf{a} } }$ the current argument.
$\ottnt{e}$ is the current expression to execute and $\ottnt{K}$ the current
continuation, which defines what we should do with the result of
evaluating $\ottnt{e}$.

Configurations can be reduced:
The rules for the reduction $\ottnt{CT}  \vdash  \ottnt{Cfg}  \longrightarrow  \ottnt{Cfg'}$ are detailed in
figure~\ref{fig:sourcesem}.
The class table $\ottnt{CT}$ is on the left of the turnstile because it
does not change throughout the computation.

The initial configuration $\ottsym{(}  \ottnt{OT}  \ottsym{,}  \text{\texttt{[]} }  \ottsym{,}  {o_{\textsf{t} } }  \ottsym{,}  {o_{\textsf{a} } }  \ottsym{,}  \text{\texttt{[]} }  \ottsym{,}  \ottnt{e}  \ottsym{)}$
features the program's object table $\ottnt{OT}$, an empty call stack, and
an empty continuation.
The current expression to execute, $\ottnt{e}$, is the body of the main
method of the program, executing with appropriately-typed current
object ${o_{\textsf{t} } }$ and argument ${o_{\textsf{a} } }$.
Since object and class names are natural numbers, an example choice
which we take in our formal study is to say that the main method is
method $0$ of class $0$, and that it should initially be called with
object $0$ of type $0$ as both current object and argument.

Our reduction is deterministic.
A program terminates with result ${o_{\textsf{r} } }$ when there is a possibly
empty reduction sequence from its initial configuration to a final
configuration $\ottsym{(}  \ottnt{OT'}  \ottsym{,}  \text{\texttt{[]} }  \ottsym{,}  {o_{\textsf{t} } }  \ottsym{,}  {o_{\textsf{a} } }  \ottsym{,}  \text{\texttt{[]} }  \ottsym{,}  {o_{\textsf{r} } }  \ottsym{)}$ or
$\ottsym{(}  \ottnt{OT'}  \ottsym{,}  \ottnt{CS}  \ottsym{,}  {o_{\textsf{t} } }  \ottsym{,}  {o_{\textsf{a} } }  \ottsym{,}  \ottsym{(}  \ottkw{exit} \, \square  \ottsym{)}  \;\textsf{::}\;  \ottnt{K}  \ottsym{,}  {o_{\textsf{r} } }  \ottsym{)}$.
The type system ensures that everywhere a $\ottkw{exit} \, \ottnt{e}$ expression is
encountered, expression $\ottnt{e}$ has the same type as the expected
return type for the main method.
Hence, when a program terminates with a value, the value
necessarily has this particular type.

\begin{figure}[ht]
\ottgrammartabular{
\ottCfg\ottinterrule
\ottCS\ottinterrule
\ottK\ottinterrule
\ottE\ottafterlastrule
}
\caption{Configuration syntax for the source language}
\label{fig:sourcecfg}
\end{figure}

\begin{figure*}[ht]
\ottgrammartabular{
\ottdefnreduce\ottafterlastrule
}
\caption{Continuation-based semantics for the source language}
\label{fig:sourcesem}
\end{figure*}


\clearpage

\iflater
\yj{Note to self: homogenize local.bib with safe.bib, currently much
more verbose.}
\fi
\bibliographystyle{plainurl}
\bibliography{safe,local}

\end{document}